\keywords{session types, subtyping, sharing}
\newcommand{\defor}{\; | \;}
\newcommand{\defeq}{\; ::= \;}
\newcommand{\sillSLeq}{SILL_{S{\leq}}}
\newcommand{\sillS}{SILL_{S}}
\theoremstyle{definition}
\newtheorem{case}{Case}
\newtheorem{subcase}{Subcase}
\begin{document}

\title{Manifestly Phased Communication via Shared Session Types}
\author{Chuta Sano}
\address{Department of Computer Science, Carnegie Mellon University, Pittsburgh, USA}
\email{chutasano@gmail.com}

\author{Stephanie Balzer}
\address{Department of Computer Science, Carnegie Mellon University, Pittsburgh, USA}
\email{balzers@cs.cmu.edu}

\author{Frank Pfenning}
\address{Department of Computer Science, Carnegie Mellon University, Pittsburgh, USA}
\email{fp@cs.cmu.edu}

\begin{abstract}
  Session types denote message protocols between concurrent processes, allowing a type-safe expression of inter-process communication.
  Although previous work demonstrate a well-defined notion of subtyping where processes have different perceptions of the protocol, these
  formulations were limited to linear session types where each channel of communication has a unique provider and client. In this paper, we
  extend subtyping to shared session types where channels can now have multiple clients instead of a single client. We demonstrate that this
  generalization can statically capture protocol requirements that span multiple phases of interactions of a client with a shared service
  provider, something not possible in prior proposals.  Moreover, the phases are manifest in the type of the client.
\end{abstract}

\maketitle{}

\footnotetext[1]{This is a revised and extended version of a paper presented at COORDINATION 2021~\cite{Sano21coord}. The main changes in this
version include an additional example demonstrating phasing in~\autoref{sec:dd-detection} and a formalization of a system implementing our
work in~\autoref{sec:metatheory} with proofs of relevant metatheorems in the Appendix.}

\section{Introduction}
\label{sec:intro}
Session types prescribe bidirectional communication protocols between concurrent processes~\cite{Honda93concur,Honda98esop}. Variations of this type system were later given logical correspondences with \emph{intuitionistic}~\cite{Caires10concur} and \emph{classical}~\cite{Wadler12icfp} linear logic where proofs correspond to programs and cut reduction to communication. This correspondence mainly provides an interpretation of \emph{linear session types}, which denote sessions with exactly one client and one provider. \emph{Shared session types}, which encode communication between multiple clients and one provider, were proposed with a \emph{sharing semantics} interpretation in prior work \cite{Balzer17icfp}. Clients communicating along a shared channel follow an \emph{acquire-release} discipline where they must first \emph{acquire} exclusive access to the provider, communicate linearly, and then finally \emph{release} the exclusive access, allowing other clients to acquire.

However, not all protocols that follow this acquire-release paradigm are safe; if a client that successfully acquires some shared channel of
type $A$ releases it at an unrelated type $B$, other clients that are blocked while trying to acquire will still see the channel as type $A$
while the provider will see the channel as type $B$. To resolve this, we require an additional constraint that clients must release at the
same type at which it acquired. This is formally expressed in~\cite{Balzer17icfp} as the \emph{equi-synchronizing} constraint, which
statically verifies that session types encode communication which does not release at a different type than its original. Although shared
session types serve an important role in making session typed process calculi theory applicable to practical scenarios, they cannot express
\emph{phases}, or protocols across successive acquire-release cycles, due to the equi-synchronizing constraint being too restrictive (see
\autoref{sec:phasing})~\cite{Sano19ms}.

We demonstrate that subtyping, first formalized in the session-typed process calculi setting by Gay and Hole~\cite{Gay05acta}, and its behavior across the two linear and shared modalities provide the groundwork for an elegant relaxation of the equi-synchronizing constraint, allowing for phases to be \emph{manifest} in the session type. In message passing concurrency, subtyping allows a client and provider to safely maintain their own local views on the session type (or protocol) associated with a particular channel. Although previous work~\cite{Gay05acta,Acay16itrs} investigate subtyping in the purely linear session type setting, we found that extending these results to the linear and shared session type setting as in~\cite{Balzer17icfp} yields very powerful results with both practical and theoretical significance. 

In this paper, we propose $\sillSLeq$, an extension of $\sillS$~\cite{Balzer17icfp} with subtyping, and show that metatheorems such as progress and preservation that hold true in $\sillS$ still hold true in $\sillSLeq$. We in particular introduce the \emph{subsynchronizing} constraint, a relaxation of the equi-synchronizing constraint, which denote under what conditions clients and providers can safely disagree on the protocol in shared communnication.

\noindent The main contributions of this paper include:
\begin{itemize}
  \item A full formalization of a subtyping relation for shared session types and their metatheory.
  \item The introduction of the subsynchronizing constraint, a relaxation of the equi-synchronizing constraint.
  \item Demonstration of $\sillSLeq$, a message passing concurrency system with shared subtyping, along with proofs of the progress and preservation theorems.
  \item Illustrations of practical examples in this richer type system, further bridging the gap between session-typed process calculi and practical programming languages.
\end{itemize}

\noindent The rest of the paper proceeds as follows: \autoref{sec:background} provides a brief introduction to linear and shared
session-typed message-passing concurrency. \autoref{sec:esync-phasing} demonstrates the inability of prior systems to express phasing and
motivates our approach. \autoref{sec:subtyping} provides an introduction to linear subtyping along with an attempt to extend the relation to
the shared setting. \autoref{sec:phasing} introduces the notion of phasing and the subsynchronizing judgment. \autoref{sec:metatheory}
presents a message passing concurrent system using our type system and the corresponding progress and preservation statements.
\autoref{sec:related} discusses related work. \autoref{sec:conclusion} concludes the paper with some points of discussion and future work.
Finally, the Appendix contains detailed proofs of metatheorems and lemmas that we introduce in the paper.

\section{Background}
\label{sec:background}

\subsection{Linear Session Types}
\label{sec:background-linear}
Based on the correspondence established between intuitionistic linear logic and the session-typed $\pi$-calculus~\cite{Caires10concur,Toninho15phd} we can interpret a intuitionistic \emph{linear} sequent
$$A_1, A_2, \ldots, A_n \vdash B$$
as the typing judgment for a process $P$ by annotating the linear propositions with channel names:
$$\underbrace{a_1:A_1, a_2:A_2, \ldots, a_n:A_n}_\Delta \vdash P :: (b:B)$$
Interpreted as a typing judgment, we say that process $P$ \emph{provides} a session of type $B$ along channel $b$ while \emph{using} channels $a_1, \ldots, a_n$ with session types $A_1, \ldots, A_n,$ respectively. Interpreted as a sequent, we say that $P$ is a proof of some proposition $B$ with hypotheses $A_1, \ldots, A_n$. Following linear logic, the context $\Delta$ is restricted and rejects contraction and weakening. Programatically, this means that linear channels cannot be aliased nor freely deleted -- they must be fully consumed exactly once.

Since the session type associated with a channel denotes a bidirectional protocol, each connective has two operational interpretations -- one from the perspective of the provider and one from the client. This operationally dual interpretation results in a schema where for any connective, either the client or provider will send while the other will receive as summarized in Table~\ref{tab:linear-types}.

For example, a channel of type $A \otimes 1$ requires that the provider sends a channel of type $A$ and proceeds as type $1$ while the client receives a channel of type $A$ and proceeds as $1$. The multiplicative unit $1$ denotes the end of the protocol -- the provider must terminate and close its channel while a client must wait for the channel to be closed. A channel of type $\intch{\overline{l:A}}$ ($n$-nary internal choice) requires the provider to choose and send a label $i$ in $\overline{l}$ and proceed as $A_i$ while the client must receive and branch on some label $i$ and proceed as $A_i$. Similarly, a channel of type $\extch{\overline{l:A}}$ requires the client to choose and send a label and the provider to receive and branch on a label. The \emph{continuation type} of some session type refers to the type after a message exchange; for example, $B$ would be the continuation type of $A \otimes B$ and similarly $A_i$ of $\intch{\overline{l:A}}$ for some $i$ in $\overline{l}$. The unit $1$ does not have a continuation type since it marks the end of communication.
{
\setlength{\tabcolsep}{5pt}
\begin{table}[ht]
	\begin{tabular}{@{}llll@{}}
		Type & Interpretation from provider & Interpretation from client & Continuation\\
		\hline
		$1$ & Close channel (terminate) & Wait for channel to close  & - \\
		$A \otimes B$ & Send channel of type $A$ & Receive channel of type $A$  & $B$ \\
		$A \multimap B$ & Receive channel of type $A$ & Send channel of type $A$ & $B$ \\
		$\intch{\overline{l:A}}$ & Send a label $i \in \overline{l}$ & Receive and branch on $i \in \overline{l}$ & $A_i$ \\
		$\extch{\overline{l:A}}$ & Receive and branch on $i \in \overline{l}$ & Send a label $i \in \overline{l}$ & $A_i$
	\end{tabular}
	\centering
	\caption{A summary of the linear connectives and their operational interpretations}
	\label{tab:linear-types}
\end{table}
}

We consider a session type denoting the interaction with a provider of a queue of integers, which we will develop throughout the paper:

\begin{small}
\begin{align*}
	\textbf{queue} = \& \{\mathit{enqueue}: &\text{int} \supset \textbf{queue}, \\
	\mathit{dequeue}: & \intch{\mathit{some}: \text{int} \land \textbf{queue}, \mathit{none}: \textbf{queue}} \}
\end{align*}
\end{small}%
where we informally adopt value input and output $\supset$ and $\land$~\cite{Toninho15phd} as value analogues to channel input and output $\multimap$ and $\otimes$, respectively, which are orthogonal to the advancements in this work. Following this protocol, a client must send a label $\mathit{enqueue}$ or $\mathit{dequeue}$. If it chooses $\mathit{enqueue}$, it must send an int and then recur, and on the other hand, if it chooses $\mathit{dequeue}$, it will receive either some int as indicated by the $\mathit{some}$ branch of the internal choice or nothing as indicated by the $\mathit{none}$ branch. In either case, we let the queue recur\footnote{We do not consider termination to more easily align with later examples.}. Dually, a server must first receive a label $\mathit{enqueue}$ or $\mathit{dequeue}$ from the client. If it receives an $\mathit{enqueue}$, it will receive an int and then recur. If it receives a $\mathit{dequeue}$ instead, it must either send a $\mathit{some}$ label followed by the appropriate int and then recur or send a $\mathit{none}$ label and then recur.

We adopt an \emph{equi-recursive}~\cite{Crary99pldi} interpretation which requires that recursive session types be \emph{contractive}~\cite{Gay05acta}, guaranteeing that there are no messages associated with the unfolding of a recursive type. This in particular requires that we reason about session types \emph{coinductively}.

We now attempt to encode a protocol representing an auction based on~\cite{Das21csf}. An auction transitions between the bidding phase where clients are allowed to place bids and the collecting phase where a winner is given the item while all the losers are refunded their respective bids.

\begin{small}
\begin{align*}
	\textbf{bidding} =
	  \&\{\mathit{bid}: &\oplus\{\mathit{ok}: \text{id} \supset \text{money} \supset \textbf{bidding}, \\
  	&\mathit{collecting}: \textbf{collecting} \}\} \\
 	\textbf{collecting} =
  	\&\{ \mathit{collect}: \text{id}\supset &\oplus\{\mathit{prize}: \text{item} \land \textbf{bidding}, \\
  	&\quad \mathit{refund}: \text{money} \land \textbf{bidding}, \\
  	&\quad \mathit{bidding}: \textbf{bidding}\}\}
\end{align*}
\label{ex:lin-auction}
\end{small}%
In this example, we make the bidding phase and collecting phase explicit by separating the protocol into \textbf{bidding} and \textbf{collecting}. Beginning with \textbf{bidding}, a client must send a $\mathit{bid}$ label~\footnote{The currently unnecessary unary choice will be useful later.}. The provider will either respond with an $\mathit{ok}$, allowing the client to make a bid by sending its id, money, and then recursing back to \textbf{bidding}, or a $\mathit{collecting}$, indicating that the auction is in the collecting phase and thereby making the client transition to \textbf{collecting}.

For \textbf{collecting}, the client must send a $\mathit{collect}$ label. For ease of presentation, we require the client to also send its id immediately, giving enough information to the provider to know if the client should receive a $\mathit{prize}$ or a $\mathit{refund},$ along with $\mathit{bidding}$ if the client is in the wrong phase. The $\mathit{prize}$ branch covers the case where the client won the previous bid, the $\mathit{refund}$ branch covers the case where the client lost the bid, and the $\mathit{bidding}$ branch informs the client that the auction is currently in the bidding phase.

Because linear channels have exactly one provider and one client, what we have described so far only encodes a single participant auction. One can assert that the provider is actually a broker to an auction of multiple participants, but that does not solve the fundamental problem, that is, encoding shared communication with multiple clients.

\subsection{Shared Session Types}
\label{sec:background-shared}
Although linear session types and their corresponding process calculi give a system with strong guarantees such as \emph{session fidelity} (preservation) and \emph{deadlock freedom} (progress), as we show in the previous section while attemping to encode an auction, they are not expressive enough to model systems with shared resources. Since multiple clients cannot simultaneously communicate with a single provider in an unrestricted manner, we adopt an \emph{acquire-release} paradigm. The only action a client can perform on a shared channel is to send an acquire request, which the provider must accept. After successfully acquiring, the client is guaranteed to have exclusive access to the provider and therefore can communicate linearly until the client releases its exclusive access.

Instead of treating the acquire and release operations as mere operational primitives, prior work~\cite{Balzer17icfp} extends the type system such that the acquire and release points are manifest in the type by stratifying session types into shared and linear types. Unlike linear channels, shared channels are unrestricted in that they can be freely aliased or deleted. In the remaining sections, we will make the distinction between linear and shared explicit by marking channel names and session type meta-variables with subscripts $L$ and $S$ respectively where appropriate. For example, a linear channel is marked $\Lo{a}$, while a shared channel is marked $\So{b}$.

Since shared channels represent unrestricted channels that must first be acquired, they are constructed by the modal upshift operator $\upls{A}$ for some $\Lo{A}$ requires clients to acquire and then proceed linearly as prescribed by $\Lo{A}$. Similarly, the modal downshift operator $\downsl{B}$ for some $\So{B}$ requires clients to release and proceed as a shared type. Type theoretically, these modal shifts mark transitions between shared to linear and vice versa. In summary, we have:

\begin{small}
\begin{align*}
	&\text{(Shared Layer)} &\So{A} \defeq &\upls{A} \\
	&\text{(Linear Layer)} &\Lo{A}, \Lo{B} \defeq &\downsl{A} \defor 1 \defor \tensor{A}{B} \defor \loli{A}{B} \defor \extch{\caselist{l}{A}} \defor \intch{\caselist{l}{A}}
\end{align*}
\end{small}%
where we emphasize that the previously defined (linear) type operators such as $\otimes$ remain only at the linear layer -- a shared session type can only be constructed by a modal upshift $\Lupls$ of some linear session type $\Lo{A}$.

As initially introduced, clients of shared channels follow an \emph{acquire-release} pattern -- they must first acquire exclusive access to the channel, proceed linearly, and then finally release the exclusive access that they had, allowing other clients of the same shared channel to potentially acquire exclusive access. The middle linear section can also be viewed as a \emph{critical region} since the client is guaranteed unique access to a shared provider process. Therefore, this system naturally supports atomic operations on shared resources.

Using shared channels, we can encode a shared queue, where there can be multiple clients interacting with the same data:

\begin{small}
\begin{align*}
	\textbf{shared\_queue} = {\color{red}\Lupls} \& \{\mathit{enqueue}: &\text{int} \supset {\color{red}\Ldownsl \textbf{shared\_queue}}, \\
	\mathit{dequeue}: & \oplus \{\mathit{some}: \text{int} \land {\color{red}\Ldownsl \textbf{shared\_queue}},\\
		&\quad \; \; \; \mathit{none}: {\color{red}\Ldownsl \textbf{shared\_queue}}\} \}
\end{align*}
\end{small}%

A client of such a channel must first send an \textbf{acquire} message, being blocked until the acquisition is successful. Upon acquisition, the client must then proceed linearly as in the previously defined linear queue. The only difference is that before recursing, the client must \textbf{release} its exclusive access, allowing other blocked clients to successfully acquire.

\section{Equi-synchronizing Rules Out Phasing}
\label{sec:esync-phasing}
We can also attempt to salvage the previous iteration of encoding (multi-participant) auctions by ``wrapping'' the previous purely linear protocol between $\Lupls$ and $\Ldownsl$.

\begin{small}
\begin{align*}
	\textbf{bidding} = {\color{red}\Lupls}
  	\&\{\mathit{bid}: &\oplus\{\mathit{ok}: \text{id} \supset \text{money} \supset {\color{red}\Ldownsl \textbf{bidding}}, \\
  	&\quad \;\; \mathit{collecting}: {\color{red}\Ldownsl \textbf{collecting}} \}\} \\
	\textbf{collecting} = {\color{red}\Lupls}
  	\&\{ \mathit{collect}: \text{id} \supset &\oplus\{\mathit{prize}: \text{item} \land {\color{red}\Ldownsl \textbf{bidding}}, \\
  	&\quad \;\; \mathit{refund}: \text{money} \land {\color{red}\Ldownsl \textbf{bidding}}, \\
  	&\quad \;\; \mathit{bidding}: {\color{red}\Ldownsl \textbf{bidding}}\}\}
\end{align*}
\end{small}%
A client to \textbf{bidding} must first acquire exclusive access as indicated by $\Lupls$, proceed linearly, and then eventually release at either \textbf{bidding} (in the $\mathit{ok}$ branch) or \textbf{collecting} (in the $\mathit{collecting}$ branch). Similarly, a client to \textbf{collecting} must first acquire exclusive access, proceed linearly, and then eventually release at \textbf{bidding} since all branches lead to \textbf{bidding}.

Unfortunately, as formulated so far, this protocol is not sound. For example, consider two auction participants $P$ and $Q$ that are both in the collecting phase and blocked trying to acquire. Suppose $P$ successfully acquires, in which case it follows the protocol linearly and eventually releases at \textbf{bidding}. Then, if $Q$ successfully acquires, we have a situation where $Q$ rightfully believes that it acquired at \textbf{collecting} but since $P$ previously released at type \textbf{bidding}, the auctioneer believes that it currently accepted a connection from \textbf{bidding}. The subsequent label sent by the client, $\mathit{collect}$ is not an available option for the provider; session fidelity has been violated.

Previous work~\cite{Balzer17icfp} addresses this problem by introducing an additional requirement that if a channel was acquired at some type $\So{A}$, all possible future releases (by looking at the continuation types) must release at $\So{A}$. This is formulated as the \emph{equi-synchronizing} constraint, defined coinductively on the structure of session types. In particular, neither \textbf{bidding} nor \textbf{collecting} are equi-synchronizing because they do not always release at the same type at which it was acquired. For \textbf{bidding}, the $\mathit{collecting}$ branch causes a release at a different type, and for \textbf{collecting}, all branches lead to a release at a different type.

A solution to the auction scenario is to unify the two phases into one:

\begin{small}
\begin{align*}
	\textbf{auction} = {\color{red}\Lupls}
	\&\{\mathit{bid}: &\oplus\{\mathit{ok}: \text{id} \supset \text{money} \supset {\color{red}\Ldownsl \textbf{auction}}, \\
		&\quad \;\; \mathit{collecting}: {\color{red}\Ldownsl \textbf{auction}} \}, \\
  \mathit{collect}: \text{id} \supset &\oplus\{\mathit{prize}: \text{item} \land {\color{red}\Ldownsl \textbf{auction}}, \\
	&\quad \;\; \mathit{refund}: \text{money} \land {\color{red}\Ldownsl \textbf{auction}}, \\
	&\quad \;\; \mathit{bidding}: {\color{red}\Ldownsl \textbf{auction}}\}\}
\end{align*}
\label{ex:shrd-auction}
\end{small}%

The type \textbf{auction} is indeed equi-synchronizing because all possible release points are at \textbf{auction}.

This presentation of the auction however loses the explicit denotation of the two phases; although the previous linear single participant version of the auction protocol can make explicit the bidding and collecting phases in the session type, the equi-synchronizing requirement forces the two phases to merge into one in the case of shared session types. In general, the requirement that all release points are equivalent prevents shared session types to encode protocols across multiple acquire-release cycles since information is necessarily ``lost'' after a particular acquire-release cycle.

\section{Subtyping}
\label{sec:subtyping}
So far, there is an implicit requirement that given a particular channel, both its provider and clients agree on its protocol or type. A relaxation of this requirement in the context of linear session types has been investigated by Gay and Hole~\cite{Gay05acta}, and in this section, we present subtyping in the context of both linear session types and shared session types.

If $A \leq B$, then a provider viewing its offering channel as type $A$ can safely communicate with a client viewing the same channel as type $B$. This perspective reveals a notion of \emph{substitutability}, where a process providing a channel of type $A$ can be replaced by a process providing $A'$ such that $A' \leq A$ and dually, a client to some channel of type $B$ can be replaced by another process using the same channel as some type $B'$ such that $B \leq B'$. The following subtyping rules, interpreted coinductively, formalize the subtyping relation between session types:

\begin{small}
\[
\infer[{\leq}_1]{1 \leq 1}{}
\quad
\infer[{\leq}_\otimes]{\tensor{A}{B} \leq \tensor{A'}{B'}}{\Lo{A} \leq \Lo{A'} && \Lo{B} \leq \Lo{B'}}
\quad
\infer[{\leq}_\multimap]{\loli{A}{B} \leq \loli{A'}{B'}}{\Lo{A'} \leq \Lo{A} && \Lo{B} \leq \Lo{B'}}
\]
\[
\infer[{\leq}_\oplus]{\intch{\caselist{l}{A}} \leq \intch{\caselist{l}{A'}, \caselist{m}{B}}}
{\forall{i} \in \overline{l} \quad \Lo{{A_i}} \leq \Lo{{A'_i}}}
\quad
\infer[{\leq}_\&]{\extch{\caselist{l}{A}, \caselist{m}{B}} \leq \extch{\caselist{l}{A'}}}
{\forall{i} \in \overline{l} \quad \Lo{{A_i}} \leq \Lo{{A'_i}}}
\]
\end{small}%

One of the notable consequences of adopting subtyping is that internal and external choices allow one side to have more labels or branches. For internal choice, since the provider sends some label, there is no harm in a client to be prepared to handle additional labels that it will never receive and vice versa for external choice. Another observation is that subtyping of session types is covariant in their continuations; following this paradigm, we can immediately define subtyping for the new type connectives $\Lupls$ and $\Ldownsl$:

\begin{small}
\[
\infer[{\leq}_{\Lupls}]{\upls{A} \leq \upls{B}}{\Lo{A} \leq \Lo{B}} \quad
\infer[{\leq}_{\Ldownsl}]{\downsl{A} \leq \downsl{B}}{\So{A} \leq \So{B}}
\]
\end{small}%

\begin{rem}
	The subtyping relation $\leq$ is a partial order.
\end{rem}

A key principle governing subtyping of session types is that \emph{ignorance is
bliss}; neither the client nor the provider need to know the precise protocol
that the other party is following.

Let us revisit the shared queue example:

\begin{small}
\begin{align*}
	\textbf{shared\_queue} = \Lupls \& \{\mathit{enqueue}: &\text{int} \supset \Ldownsl \textbf{shared\_queue}, \\
	\mathit{dequeue}: & \oplus \{\mathit{some}: \text{int} \land \Ldownsl \textbf{shared\_queue},\\
	&\quad \; \; \; \mathit{none}: \Ldownsl \textbf{shared\_queue}\} \}
\end{align*}
\end{small}%

Instead of allowing all clients to freely enqueue and dequeue, suppose we only allow certain clients to enqueue and certain clients to dequeue. With subtyping, we first fix the provider's type to be \textbf{shared\_queue}. Next, we restrict writer clients by removing the $dequeue$ label and similarly restrict reader clients by removing the $enqueue$ label:

\begin{small}
\begin{align*}
	\textbf{producer} &= \Lupls \& \{\mathit{enqueue}: \text{int} \supset \Ldownsl \textbf{producer} \} \\
	\textbf{consumer} &= \Lupls \& \{\mathit{dequeue}: \intch{\mathit{some}: \text{int} \land \Ldownsl \textbf{consumer}, \mathit{none}: \Ldownsl \textbf{consumer}} \}
\end{align*}
\end{small}%
where it is indeed the case that $\textbf{shared\_queue} \leq \textbf{producer}$ and $\textbf{shared\_queue} \leq \textbf{consumer}$, justifying both the writer and reader clients' views on the type of the channel.

We will defer the detailed discussion of the subtle interactions that occur between the notion of equi-synchronizing constraint and subtyping to \autoref{sec:phasing-ssync}. For this example however, the fact that all three types \textbf{shared\_queue}, \textbf{producer}, and \textbf{consumer} are independently equi-synchronizing is a strong justification of its soundness.

\section{Phasing}
\label{sec:phasing}
One of the most common patterns when encoding data structures and protocols via session types is to begin the linear type with an external choice. When these types recur, we are met with another external choice. A notion of \emph{phasing} emerges from this pattern, where a single phase spans from the initial external choice to the recursion.

We introduced varying versions of an auction protocol, which in its linear form (Section~\ref{ex:lin-auction}) can make explicit the two distinct phases, yet in its shared form (Section~\ref{ex:shrd-auction}) cannot due to the equi-synchronizing constraint. With subtyping however, this seems to no longer be a problem; the auctioneer can view the protocol as auction whereas the clients can independently view the protocol as \textbf{bidding} or \textbf{collecting} depending on their current phase since $\textbf{auction} \leq \textbf{bidding}$ and $\textbf{auction}\leq \textbf{collecting}$.

\begin{small}
\begin{align*}
	\text{provider}&
	\begin{cases}
    \begin{aligned}
    \textbf{auction} = \Lupls
    	\&\{\mathit{bid}: &\oplus\{\mathit{ok}: \text{id} \supset \text{money} \supset \Ldownsl \textbf{auction}, \\
      	&\quad \;\; \mathit{collecting}: \Ldownsl \textbf{auction} \}, \\
    	\mathit{collect}: \text{id} \supset &\oplus\{\mathit{prize}: \text{item} \land \Ldownsl \textbf{auction}, \\
      	&\quad \;\; \mathit{refund}: \text{money} \land \Ldownsl \textbf{auction}, \\
      	&\quad \;\; \mathit{bidding}: \Ldownsl \textbf{auction}\}\}
  	\end{aligned}
	\end{cases}
	\\
	\text{clients}&
	\begin{cases}
		\begin{aligned}
    \textbf{bidding} = \Lupls
     	\&\{\mathit{bid}: &\oplus\{\mathit{ok}: \text{id} \supset \text{money} \supset \Ldownsl \textbf{bidding}, \\
      	&\quad \;\; \mathit{collecting}: \Ldownsl \textbf{collecting} \}\} \\
    \textbf{collecting} = \Lupls
    	\&\{ \mathit{collect}: \text{id} \supset  &\oplus\{\mathit{prize}: \text{item} \land \Ldownsl \textbf{bidding}, \\
      	&\quad \;\; \mathit{refund}: \text{money} \land \Ldownsl \textbf{bidding}, \\
      	&\quad \;\; \mathit{bidding}: \Ldownsl \textbf{bidding}\}\}
	  \end{aligned}
	\end{cases}
\end{align*}
\end{small}%

Unfortunately, there is a critical issue with this solution. Since shared channels can be aliased, a client in the collecting phase can alias the channel, follow the protocol, and then ignore the released type (bidding phase) -- it can then use the previously aliased channel to communicate as if in the collecting phase. In general, the strategy of encoding phases in shared communication through a shared supertype allows malicious clients to re-enter previously encountered phases since they may internally store aliases. Thus, what we require is a subtyping relation across shared and linear modes since linear channels are restricted and in particular cannot be aliased.

We first add two new linear connectives $\Lupll$ and $\Ldownll$ that, like $\Lupls$ and $\Ldownsl$, have operationally an acquire-release semantics but enforce a linear treatment of the associated channels. Prior work~\cite{Griffith15phd} has already explored such intra-layer shifts, albeit for the purpose of enforcing synchronization in an asynchronous message-passing system. Thus for example, the protocol denoted by $\upll{A}$ requires the client to ``acquire'' as in the shared case. If the provider happens to provide a linear channel $\upll{A}$, then this merely adds a synchronization point in the communication. The more interesting case is when the provider is actually providing a shared channel, some $\upls{A}$; a client should be able to view the session type as $\upll{A}$ without any trouble. We formalize this idea to the following additional subtyping relations:

\begin{small}
\[
\infer[{\leq}_{\Lupls\Lupll}]{\upls{A} \leq \upll{B}}{\Lo{A} \leq \Lo{B}}
\quad
\infer[{\leq}_{\Ldownsl\Ldownll{}}]{\downsl{A} \leq \downll{B}}{\So{A} \leq \Lo{B}}
\quad
\infer[{\leq}_{\Lupll}]{\upll{A} \leq \upll{B}}{\Lo{A} \leq \Lo{B}}
\quad
\infer[{\leq}_{\Ldownll}]{\downll{A} \leq \downll{B}}{\Lo{A} \leq \Lo{B}}
\]
\end{small}%

Using the new connectives, we can complete the auction protocol where the two phases are manifest in the session type; a client must actually view the auction protocol linearly!

\begin{small}
\begin{align*}
	\textbf{bidding} = \Lupll
	\&\{\mathit{bid}: &\oplus\{\mathit{ok}: \text{id} \supset \text{money} \supset \Ldownll \textbf{bidding}, \\
	&\quad \;\; \mathit{collecting}: \Ldownll \textbf{collecting} \}\} \\
	\textbf{collecting} = \Lupll
	\&\{ \mathit{collect}: \text{id} \supset &\oplus\{\mathit{prize}: \text{item} \land \Ldownll \textbf{bidding}, \\
	&\quad \;\; \mathit{refund}: \text{money} \land \Ldownll \textbf{bidding}, \\
	&\quad \;\; \mathit{bidding}: \Ldownll \textbf{bidding}\}\}
\end{align*}
\end{small}%
where $\textbf{auction} \leq \textbf{bidding}$ and $\textbf{auction} \leq \textbf{collecting}$. Compared to the initially presented linear auction protocol, this version inserts the purely linear shifts $\Lupll$ and $\Ldownll$ where appropriate such that the protocol is compatible with the shared auction protocol that the auctioneer provides. Therefore, the addition of $\Lupll$ and $\Ldownll$ to our system allows a natural subtyping relation between shared session types and linear session types, where they serve as a means to safely bridge between shared and linear modalities.

\subsection{Deadlock Detection}
\label{sec:dd-detection}
Another instance where phasing naturally occurs is from centralized form of Mitchell and Merritt's distributed deadlock detection algorithm \cite{Mitchell84podc}. The algorithm assumes a distributed system with shared resources and linear nodes, where the intended behavior is that the linear nodes, encoded as linear processes, acquire particular resources, encoded as shared processes, perform appropriate computations, and then release unneeded resources as in typical distributed systems. Both nodes and resources are identified by a unique identification of type pid (process id) and rid (resource id) respectively, which as in previous examples, we take as primitives. In this system, a deadlock in the usual sense is detected when there is a cycle in the dependency graph generated by the algorithm. The centralized deadlock detection algorithm consists of a shared process that acts as an monitor that all nodes report to.

The type of this global deadlock detection monitor is given as
\begin{align*}
  \textbf{dd} = \Lupls \&\{\mathit{tryacq}: &\text{pid} \supset \text{rid} \supset \Ldownsl \textbf{dd},\\
                           \mathit{didacq}: &\text{pid} \supset \text{rid} \supset \Ldownsl \textbf{dd},\\
                           \mathit{willrel}: &\text{pid} \supset \text{rid} \supset \Ldownsl \textbf{dd}\}
\end{align*}
\noindent where the intention is that clients are expected to inform the monitor before attempting to acquire a resource (\textit{tryacq}),
after successfully acquiring a resource (\textit{didacq}), and before releasing a resource (\textit{willrel}).

As discussed in a previous work \cite{Sano19ms}, there are two phases of the protocol across successive acquire-release cycles. Using subtyping, we can represent this constraint statically:

\begin{align*}
  \textbf{dd\_start} = \Lupll \&\{\mathit{tryacq}: &\text{pid} \supset \text{rid} \supset \Ldownll \textbf{dd\_acq}, \\
                                  \mathit{willrel}: &\text{pid} \supset \text{rid} \supset \Ldownll \textbf{dd\_start}\} \\
  \textbf{dd\_acq} = \Lupll \&\{\mathit{didacq}: &\text{pid} \supset \text{rid} \supset \Ldownll \textbf{dd\_start}\}
\end{align*}
\noindent where $\textbf{dd} \leq \textbf{dd\_start}$. This session type enforces that the message following \textit{tryacq} must be
\textit{didacq} and that \textit{didacq} cannot be sent without a \textit{tryacq} on the previous acquire-release cycle. It is important to note that we are not enforcing other
desirable constraints such as whether the resource id sent by the client matches in a sequence of \textit{tryacq} followed by
\textit{didacq} (it is nonsensical for a client to attempt to acquire resource $r$ and after claim that it successfully acquired a different
resource $r'$). We believe that those additional constraints can be naturally expressed by extending refinement types~\cite{Das20concur} to
be compatible with this system.

A linear node is a process that uses a channel of type \textbf{dd\_start}; since we allow subtyping across modalities, we can spawn such a node by
passing a reference to the global monitor offering a shared channel of type \textbf{dd}, which the node can safely view to be
\textbf{dd\_start} since $\textbf{dd} \leq \textbf{dd\_start}$.

\begin{rem}
  A protocol spanning multiple phases can also be interpreted as a deterministic finite autonomata (DFA) where nodes represent the phase or the state of the protocol and edges represent choice branches. The previous auction protocol can be encoded as a two state DFA as shown in Figure~\ref{fig:dfa-auction} andd the deadlock monitor protocol can similarly be encoded as shown in Figure~\ref{fig:dfa-dd}.

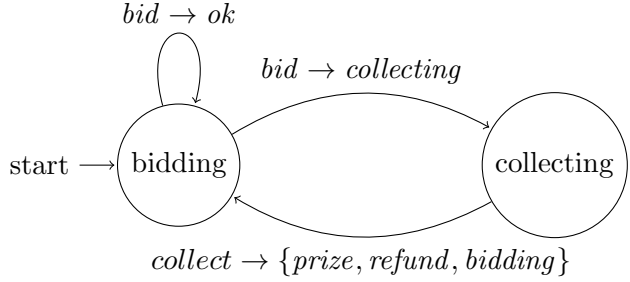
\begin{figure}
  \centering
  \begin{tikzpicture}[shorten >=1pt,node distance=5cm,on grid,auto] 
    \node[state,initial] (bid)   {bidding}; 
    \node[state] (collect) [right=of bid] {collecting}; 
    \path[->] 
      (bid) edge [loop above] node {$\mathit{bid} \rightarrow \mathit{ok}$} (bid)
      (bid) edge[bend left] node {$\mathit{bid} \rightarrow \mathit{collecting}$} (collect)
      (collect) edge[bend left] node {$collect \rightarrow \{\mathit{prize}, \mathit{refund}, \mathit{bidding}\}$} (bid);
  \end{tikzpicture}
\caption{A DFA representation of the two phases in the auction protocol, where non-branching messages are omitted for presentation purposes since they do not contribute to different protocol paths. Multiple labels enclosed in brackets as in $\{\mathit{prize}, \mathit{refund}, \mathit{bidding}\}$ mean that any of those labels can be selected.} \label{fig:dfa-auction}
\end{figure}

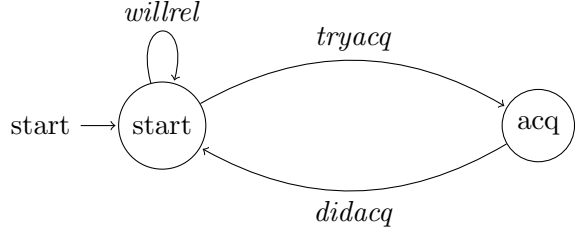
\begin{figure}
  \centering
  \begin{tikzpicture}[shorten >=1pt,node distance=5cm,on grid,auto]
    \node[state,initial] (start) {start};
    \node[state] (acquiring) [right=of start] {acq};
    \path[->]
      (start) edge [loop above] node {$\mathit{willrel}$} (start)
      (start) edge [bend left] node {$\mathit{tryacq}$} (acquiring)
      (acquiring) edge [bend left] node {$\mathit{didacq}$} (start);
  \end{tikzpicture}
  \caption{A DFA representation of the two phases in the deadlock monitor protocol. Non-branching messages are omitted for presentation purposes like in Figure~\ref{fig:dfa-auction}.} \label{fig:dfa-dd}
\end{figure}
\end{rem}

\subsection{Subsynchronizing Constraint}
\label{sec:phasing-ssync}
We note in \autoref{sec:background-shared} that in previous work~\cite{Balzer17icfp}, we require session types to be equi-synchronizing,
which requires that processes following the protocol are released at the exact type at which they were acquired. This constraint guarantees
that clients do not acquire at a type that they do not expect. With the introduction of subtyping however, there are two major relaxations
that we propose on this constraint.

\paragraph{Releasing at a subtype}
A client $P$ using some channel as some type $\St{a}{A}$ can safely communicate with any (shared) process offering a channel of type $\St{a}{A'}$ such that $\So{A'} \leq \So{A}$ due to subtyping. If another client acquires $\So{a}$ and releases it at some $\So{A''}$ such that $\So{A''} \leq \So{A'}$, then $P$ can still safely communicate along $\So{a}$ since $\So{A''} \leq \So{A}$ by transitivity. Thus, one reasonable relaxation to the equi-synchronizing constraint is that processes do not need to be released at the same exact type but instead a subtype.

\paragraph{Branches that never occur}
A major consequence of subtyping is that providers and clients can wait on some branches in the internal and external choices which in fact never will be sent by the other party. For example, suppose a provider $P$ provides a channel of type $\So{A} = {\Lupls\extch{a:\downsl{A}, b:\downsl{B}}}$. Assuming some unrelated $\So{B}$, we can see that $\So{A}$ is not equi-synchronizing because the $b$ branch can lead to releasing at a different type. However, suppose some client $C$ views the channel as  ${\Lupls\extch{a:\downsl{A}}}$ -- in this case, $P$ can only receive $a$, and the $b$ branch can safely be ignored since $C$ will never send the $b$ label. This points to the necessity of using both the provider and client types to more finely verify the synchronizing constraint. Of course, if there is another client $D$ that views the channel in a way that the $b$ branch can be taken, then the entire setup is not synchronizing. Thus, we must verify the synchronization constraint for all pairs of providers and clients.

Following previous work~\cite{Balzer17icfp}, we formulate constraints by extending the shared types: ${\hat{A} \defeq \bot \defor \So{A} \defor \top}$ where $\bot \leq \So{A} \leq \top$ for any $\So{A}$. Intuitively, $\top$ indicates a channel that has not been acquired yet (no constraints on a future release), $\So{A}$ indicates the previous presentation of shared channels, and $\bot$ indicates a channel that will never be available (hence, any client attempting to acquire from this channel will never succeed and be blocked).

We are now ready to present the \emph{subsynchronizing} judgment, interpreted coinductively, which is of the form $\dsync{A}{B}{\hat{D}}$ for some $A$ and $B$ such that $A \leq B$. It asserts that a provider providing a channel of type $A$ and a client using that channel with type $B$ is {subsynchronizing} with respect to some constraint $\hat{D}$. To verify a pair of types $A$ and $B$ to be subsynchronizing, we take $\top$ as its initial constraint (recall that $\top$ represents no constraint), that is, we say that $A$ and $B$ are subsynchronizing if $\dsync{A}{B}{\top}$.

\begin{small}
  \[
    \infer[S1]{\dsync{1}{1}{\hat{D}}}{}
  \]
  \[
    \infer[S{\otimes}]{\dsync{\tensor{A}{B}}{\tensor{A'}{B'}}{\hat{D}}}
    {\dsyncl{B}{B'}{\hat{D}}}
    \quad
    \infer[S{\multimap}]{\dsync{\loli{A}{B}}{\loli{A'}{B'}}{\hat{D}}}
    {\dsyncl{B}{B'}{\hat{D}}}
  \]
  \[
    \infer[S{\oplus}]{\dsync{\intch{\caselist{l}{A}}}
    {\intch{\caselist{l}{A'}, \caselist{m}{B}}}{\hat{D}}}
    {\forall i \in \overline{l} \quad \dsyncl{{A_i}}{{A_i'}}{\hat{D}}}
    \quad
    \infer[S{\&}]{\dsync{\extch{\caselist{l}{A}, \caselist{m}{B}}}
    {\extch{\caselist{l}{A'}}}{\hat{D}}}
    {\forall i \in \overline{l} \quad \dsyncl{{A_i}}{{A_i'}}{\hat{D}}}
  \]
  \[
    \infer[S\Lupll]{\dsync{\upll{A}}{\upll{A'}}{\hat{D}}}{\dsyncl{A}{A'}{\hat{D}}}
    \quad
    \infer[S\Ldownll]{\dsync{\downll{A}}{\downll{A'}}{\hat{D}}}{\dsyncl{A}{A'}{\hat{D}}}
  \]
  \[
    \infer[S\Lupls]{\dsync{\upls{A}}{\upls{A'}}{\top}}{\dsyncl{A}{A'}{\upls{A}}}
    \quad
    \infer[S\Ldownsl]{\dsync{\downsl{A}}{\downsl{A'}}{\hat{D}}}{\dsyncs{A}{A'}{\top} & \downsl{A} \leq \hat{D}}
  \]
  \[
    \infer[S\Lupls\Lupll]{\dsync{\upls{A}}{\upll{A'}}{\top}}{\dsyncl{A}{A'}{\upls{A}}}
    \quad
    \infer[S\Ldownsl\Ldownll]{\dsync{\downsl{A}}{\downll{A'}}{\hat{D}}}{\dsync{\So{A}}{\Lo{A'}}{\top} & \downsl{A} \leq \hat{D}}
  \]
\end{small}%

The general progression of derivations to verify that two types are subsynchronizing is to first look for an upshift $\Lupls$ on the provider's type, involving either $S\Lupls$ or $S\Lupls\Lupll$. After encountering a $\Lupls$, it ``records'' the provider's type as the constraint and continues to look at the continuations of the types. When encountering internal and external choices, it only requires the continuations for the common branches to be subsynchronizing. When it encounters a downshift $\Ldownsl$ from the provider's side, it checks if the release point as denoted by the continuation of $\Ldownsl$ is a subtype of the recorded constraint, in which case it continues with the derivation with the $\top$ constraint.

\begin{rem}
	Subsynchronizing constraint is a generalization of the equi-synchronizing constraint. In particular, if $A$ is equi-synchronizing, then the pair $A, A$ are subsynchronizing and vice versa.
\end{rem}

\section{Metatheory}
\label{sec:metatheory}
In this section we present $\sillSLeq$, a message-passing concurrency system
implementing the subtyping that we propose along with progress and preservation
theorems.

\subsection{Process Typing}
\label{ssec:process_typing}
We take the typing judgment presented in \autoref{sec:background-linear} and extend it with shared channels as introduced in \autoref{sec:background-shared}:
\begin{align*}
	\JTs{\Gamma}{P}{a}{A} \\
	\JTl{\Gamma}{\Delta}{Q}{a}{A}
\end{align*}
where $\Gamma = \Sto{{a_1}}{\hat{A_1}}, \ldots, \Sto{{a_n}}{\hat{A_n}}$ is a structural context of shared channels and constraints ($\bot$ and $\top$) which can appear at runtime.

The first judgment asserts that a process term $P$ provides a shared channel $\St{a}{A}$ while using shared channels in $\Gamma$; the lack of dependence on any linear channels $\Delta$ is due to the \emph{independence principle} presented in~\cite{Balzer17icfp}. The second judgment asserts that $Q$ provides a linear channel $\Lt{a}{A}$ while using shared channels in $\Gamma$ and linear channels in $\Delta$.

\paragraph{Global signature}
In the following sections, we will implicitly assume a global signature $\Sigma$, which is a set of process definitions that can be thought as the process calculi analogue to a signature consisting of function definitions. A process definition consists of the offering channel name and its type, the client channel names and their types, and the process term:

\begin{align*}
  \Sigma \defeq &\cdot \defor \Sigma, \spawn{\Lto{x}{\Lo{A}}}{\Lo{X}}{\Ltlist{y}{B},\Stlist{w}{E}} = P \\
  \defor &\Sigma , \spawn{\Sto{z}{\So{C}}}{\So{Z}}{\Stlist{v}{D}} = Q
\end{align*}
Leaving aside the $\cdot$ which denotes an empty signature, the former denotes a linear process definition of a process named $\Lo{X}$ that offers a channel $\Lt{x}{A}$ while using linear channels $\Lt{{y_1}}{{B_1}}, \ldots, \Lt{{y_n}}{{B_n}}$ and shared channels $\St{{w_1}}{{E_1}}, \ldots, \St{{w_m}}{{E_m}}$ for some $n$ and $m$, where $P$ consists of its implementation. Similarly, the latter denotes a shared process definition of a process named $\So{Z}$ that offers a channel $\St{z}{C}$ while using shared channels $\St{{v_1}}{{D_1}}, \ldots, \St{{v_n}}{{D_n}}$ for some $n$, where $Q$ consists of its implementation. Again, it is important that shared process definitions do not depend on linear channels due to the independence principle.

\subsubsection{Identity Rules}
\emph{Forwarding} is a fundamental operation that allows a process to identify its offering channel with a channel it uses if the types are compatible.

\begin{small}
  \[
    \infer[\Lo{ID}]{\JTl{\Gamma}{\Lt{y}{B}}{\fwdll{x}{y}}{x}{A}}{\Lo{B} \leq \Lo{A}}
    \quad
    \infer[\So{ID}]{\JTs{\Gamma, \Sto{y}{\hat{B}}}{\fwdss{x}{y}}{x}{A}}{\hat{B} \leq \So{A}}
  \]
  \[
    \infer[ID_{\scriptscriptstyle {LS}}]{\JTl{\Gamma, \Sto{y}{\hat{B}}}{\cdot}{\fwdls{x}{y}}{x}{A}}{\hat{B} \leq \Lo{A}}
  \]
\end{small}%

The rules $\Lo{ID}$ and $\So{ID}$ require the offering channel to be a supertype of the channel it is being identified with. Since we syntactically distinguish shared channels and linear channels, we require an additional rule $ID_{\scriptscriptstyle {LS}}$ that allows linear channels to be forwarded with a shared channel if the subtyping relation holds.
	
When a linear process \emph{spawns} another linear process, it can transfer channels that it currently communicates with to the new process.
In $\sillS$, this resulted in linear to linear and shared to shared channel substitutions, but with subtyping, the rule must now divide the
channel substitutions into three parts: linear to linear substitutions, shared to linear substitutions, and shared to shared substitutions.
The shared to linear substitution in particular occurs when a process definition expects a linear channel (or some type $\Lupll \ldots$) and
is instead given a smaller shared channel, and is in fact the key to the expressiveness of our system.
\[
\infer[SP_{\scriptscriptstyle {LL}}]
{\JTl{\Gamma}{\Delta, \Ltlist{y}{B}}
	{\spawn{\Lo{x}}{\Lo{X}}{\Lolist{y}, \Solist{v}, \Solist{w}}; Q}{z}{C}
}{\substack{
		\Stolist{v}{\hat{D}} \in \Gamma \\
		\Stolist{w}{\hat{E}} \in \Gamma
	}&\substack{
		\Lolist{B} \leq \Lolist{B'} \\
		\overline{\hat{D}} \leq \Lolist{D'} \\
		\overline{\hat{E}} \leq \Solist{E'}
	}&\substack{
          \left(
    	  \spawn{\Lto{x'}{\Lo{A}}}{X_L}{\Ltlist{y'}{B'},\Ltlist{v'}{D'},\Stlist{w'}{E'}} = P
	  \right) \in \Sigma \\
        \JTl{\Gamma}{\Delta, \Lt{x}{A}}{Q}{z}{C}}
}
\]

Similar to forwarding, there are two additional spawn rules (linear to shared and shared to shared) due to the syntactical distinguishment of the two modalities:
\[
  \infer[SP_{\scriptscriptstyle {LS}}]
  {\JTl{\Gamma}{\Delta}{\spawn{\So{x}}{\So{X}}{\Solist{y}}; Q}{z}{C}
    }{\substack{
      \Stolist{y}{\hat{B}} \in \Gamma \\
      \overline{\hat{B}} \leq \Solist{B'}
      }& \left(
      \spawn{\Sto{x'}{\So{A}}}{X_S}{\Stlist{y'}{B'}} = P
    \right) \in \Sigma &
    \JTl{\Gamma, \St{x}{A}}{\Delta}{Q}{z}{C}
  }
\]
\[
  \infer[SP_{\scriptscriptstyle {SS}}]
  {\JTs{\Gamma}{\spawn{\So{x}}{\So{X}}{\Solist{y}}; Q}{z}{C}
    }{\substack{
      \Stolist{y}{\hat{B}} \in \Gamma \\
      \overline{\hat{B}} \leq \Solist{B'}
      }& \left(
      \spawn{\Sto{x'}{\So{A}}}{X_S}{\Stlist{y'}{B'}} = P
    \right) \in \Sigma &
    \JTs{\Gamma, \St{x}{A}}{Q}{z}{C}
  }
\]

\subsubsection{Logical Rules}

As in standard sequent calculus presentations, typing judgments involving connectives are presented through left and right rules. The multiplicative unit $1$ denotes termination; providers must \emph{close} their offering channel while clients must \emph{wait} for the channel to close:
\begin{small}
  \[
    \infer[1L]{\JTl{\Gamma}{\Delta, \Lto{x}{1}}{\wait{x};P}{z}{C}}
    {\JTl{\Gamma}{\Delta}{P}{z}{C}}
    \quad
    \infer[1R]{\JTlr{\Gamma}{\cdot}{\close{x}}{x}{1}}{}
  \]
\end{small}%

For tensor ($\tensor{A}{B}$), providers (right rule) must \emph{send} a channel of some type $C_*$ such that $C_* \leq \Lo{A}$ (note that
$C_*$ can be either shared or linear, meaning there must be a rule covering each case separately). On the other hand, clients (left rule)
must \emph{receive} a channel of type $\Lo{A}$ (which due to subtyping could be smaller in actuality).
\begin{small}
  \[
    \infer[{\otimes} L]{\JTl{\Gamma}{\Delta, \Lto{x}{\tensor{A}{B}}}{\asgn{\Lo{y}}{\recv{x}}; P}{z}{C}}
    {\JTl{\Gamma}{\Delta, \Lt{x}{B}, \Lt{y}{A}}{P}{z}{C}}
    \;
    \infer[{\otimes} R]{\JTlr{\Gamma}{\Delta, \Lt{y}{A'}}{\sendl{x}{y}; P}{x}{\tensor{A}{B}}}
    {\Lo{A'} \leq \Lo{A} &
    \JTl{\Gamma}{\Delta}{P}{x}{B}}
  \]
  \[
    \infer[{\otimes} \So{R}]{\JTlr{\Gamma, \Sto{y}{\hat{A}}}{\Delta}{\sends{x}{y}; P}{x}{\tensor{A}{B}}}
    {\hat{A} \leq \Lo{A} &
    \JTl{\Gamma, \Sto{y}{\hat{A}}}{\Delta}{P}{x}{B}}
  \]
\end{small}%

Dually for linear implication ($\loli{A}{B}$), clients must \emph{send} a channel of some subtype of $\Lo{A}$ while providers must \emph{receive} a channel of type $\Lo{A}$:
\[
  \infer[{\multimap} L]{\JTl{\Gamma}{\Delta, \Lto{x}{\loli{A}{B}}, \Lt{y}{A'}}{\sendl{x}{y}; P}{z}{C}}
  {\Lo{A'} \leq \Lo{A} &
  \JTl{\Gamma}{\Delta, \Lto{x}{B}}{P}{z}{C}}
\]
\[
  \infer[{\multimap} R]{\JTlr{\Gamma}{\Delta}{\asgn{\Lo{y}}{\recv{x}}; P}{x}{\loli{A}{B}}}
  {\JTl{\Gamma}{\Delta, \Lt{y}{A}}{P}{x}{B}}
\]
\[
  \infer[{\multimap} \So{L}]{\JTl{\Gamma, \Sto{y}{\hat{A}}}{\Delta, \Lto{x}{\loli{A}{B}}}{\sends{x}{y}; P}{z}{C}}
  {\hat{A} \leq \Lo{A} &
  \JTl{\Gamma, \Sto{y}{\hat{A}}}{\Delta, \Lto{x}{B}}{P}{z}{C}}
\]

In this system, binary internal and external choices, $\Lo{A} \oplus \Lo{B}$ and $\Lo{C} \& \Lo{D},$ are generalized to their $n$-ary versions, $\intch{\caselist{l}{A}}$ and $\extch{\caselist{m}{B}}$, where each continuation type $A_i$ or $B_i$ has a corresponding (unique) label $l_i$ or $m_i$. For internal choice, providers must \emph{send} a label $l_i$ and then continue as $A_i$ whereas clients must \emph{receive} a label and continue as the type that correspond with the label it received.
\begin{small}
  \[
    \infer[{\oplus} L]{\JTl{\Gamma}{\Delta, \Lto{x}{\intch{\caselist{l}{A}}}}{\casep{x}{\caselistp{l}{P}}}{c}{Z}}
    {\forall i \in \overline{l} \quad \JTl{\Gamma}{\Delta, \Lt{x}{{A_i}}}{P_i}{c}{Z}}
    \quad
    \infer[{\oplus} R]{\JTlr{\Gamma}{\Delta}{x.i; P}{x}{\intch{\caselist{l}{A}}}}
    {i \in \overline{l} & \JTl{\Gamma}{\Delta}{P}{x}{{A_i}}}
  \]
\end{small}%

Dually for external choice, clients \emph{send} a label whereas providers \emph{receive} and branch on the input label:
\begin{small}
  \[
    \infer[{\&} L]{\JTl{\Gamma}{\Delta, \Lto{x}{\extch{\caselist{l}{A}}}}{x.i; P}{z}{C}}
    {i \in \overline{l} & \JTl{\Gamma}{\Delta, \Lt{x}{{A_i}}}{P}{z}{{C}}}
    \quad
    \infer[{\&} R]{\JTlr{\Gamma}{\Delta}{\casep{x}{\caselistp{l}{P}}}{x}{\extch{\caselist{l}{A}}}}
    {\forall i \in \overline{l} \quad \JTl{\Gamma}{\Delta}{P_i}{x}{{A_i}}}
  \]
\end{small}%

Next, $\upls{A}$ signifies a synchronization point where clients must \emph{acquire} while (shared) providers must \emph{accept} a client, both proceeding with $\Lo{A}$ as the continuation.
\begin{small}
  \[
    \infer[\Lupls L]{\JTl{\Gamma, \Sto{x}{\hat{A}}}{\Delta}{\asgn{\Lo{x}}{\acqs{x}}; P}{z}{C}}
    {\hat{A} \leq \upls{A} & \JTl{\Gamma, \Sto{x}{\hat{A}}}{\Delta, \Lt{x}{A}}{P}{z}{C}}
    \quad
    \infer[\Lupls R]{\JTsr{\Gamma}{\asgn{\Lo{x}}{\accs{x}}; P}{x}{\upls{A}}}
    {\JTl{\Gamma}{\cdot}{P}{x}{A}}
  \]
\end{small}%

$\downsl{A}$ signifies a point where clients must \emph{release} while providers \emph{detach} from a linear session, returning to a shared state ready to \emph{accept} another client.
\begin{small}
  \[
    \infer[\Ldownsl L]{\JTl{\Gamma}{\Delta, \Lto{x}{\downsl{A}}}{\asgn{\So{x}}{\rels{x}}; P}{z}{C}}
    {\JTl{\Gamma, \St{x}{A}}{\Delta}{P}{z}{C}}
    \quad
    \infer[\Ldownsl R]{\JTlr{\Gamma}{\cdot}{\asgn{\So{x}}{\dets{x}}; P}{x}{\downsl{A}}}
    {\JTs{\Gamma}{P}{x}{A}}
  \]
\end{small}%

Finally, we require the linear variants of the up and downshifts, which by themselves can be interpreted as synchronization points in a
linear protocol~\cite{Pfenning15fossacs}. However, in this paper, their purpose is to safely act as supertypes to corresponding shared up
and downshifts, which allow linearity to be enforced on clients in shared protocols.
\begin{small}
  \[
    \infer[\Lupll L]{\JTl{\Gamma}{\Delta, \Lto{x}{\upll{A}}}{\asgn{\Lo{y}}{\acql{x}}; P}{z}{C}}
    {\JTl{\Gamma}{\Delta, \Lt{y}{A}}{P}{z}{C}}
    \quad
    \infer[\Lupll R]{\JTlr{\Gamma}{\Delta}{\asgn{\Lo{y}}{\accl{x}}; P}{x}{\upll{A}}}
    {\JTl{\Gamma}{\Delta}{P}{x}{A}}
  \]
  \vspace{-1mm}
  \[
    \infer[\Ldownll L]{\JTl{\Gamma}{\Delta, \Lto{x}{\downll{A}}}{\asgn{\Lo{y}}{\rell{x}}; P}{z}{C}}
    {\JTl{\Gamma}{\Delta, \Lt{y}{A}}{P}{z}{C}}
    \quad
    \infer[\Ldownll R]{\JTlr{\Gamma}{\Delta}{\asgn{\Lo{y}}{\detl{x}}; P}{x}{\downll{A}}}
    {\JTl{\Gamma}{\Delta}{P}{y}{A}}
  \]
\end{small}

One important observation is that typing judgments remain local in the presence of subtyping; the channels in $\Gamma$ and $\Delta$ may be provided by processes at some subtype (maintained in the configuration; see \autoref{sec:metatheory-cfg-typing}) and need not match. We therefore do not adopt a general subsumption rule that allows arbitrary substitutions that preserve subtyping and instead precisely manage where subtyping occurs in the system.

\subsubsection{Structural Rules}
Structural rules are kept implicit in the system, but informally, the linear context $\Delta$ only allows exchange whereas the shared context $\Gamma$ allows all structural rules.

\subsection{Dynamics}
The operational semantics of the system is formulated through \emph{multiset rewriting rules}~\cite{Cervesato09ic}, which is of form ${S_1,
\ldots, S_n \to T_1, \ldots, T_m}$, where each $S_i$ and $T_j$ corresponds to a \emph{process predicate}, which captures the state of a
particular process and is of form: $$S \defeq \procs{a}{P} \defor \unavail{b} \defor \procl{c}{Q} \defor \connect{d}{e} \defor \pdef{A}$$
where $P$ and $Q$ are process terms as formulated in \autoref{ssec:process_typing}. The predicates $\procs{a}{P}$ and $\procl{c}{Q}$ denote
shared and linear processes that offer channels along $\So{a}$ and $\Lo{c}$ while executing process terms $P$ and $Q$, respectively.  The
predicate $\unavail{b}$ denotes a shared process that is currently unavailable, for example due to it being acquired by another client, and
the predicate $\connect{d}{e}$ is an explicit predicate that connects a shared channel with a linear channel which is needed to dynamically
express shared to linear subtyping. Finally, $\pdef{A}$ is a (persistent) linear or shared process definition as demonstrated in $\Sigma$.
We adopt $\Psi_a$ as a metavariable for some linear process predicate offering $\Lo{a}$; that is, $\Psi_a$ is either $\procl{a}{P}$ for some
$P$ or $\connect{a}{b}$ for some $\So{b}$.

Each multiset rule captures local transitions in the system; for example, there are three rules that represent forwarding, each corresponding to the appropriate forwarding typing judgments:
\begin{small}
  \begin{gather*}
    \tag{D-FWDLL} \label{dyn:fwdll}
    \procl{a}{\fwdll{a}{b}} \to \cdot \quad (\Lo{b} := \Lo{a}, \So{b} := \So{a}) \\
    \tag{D-FWDSS} \label{dyn:fwdss}
    \procs{a}{\fwdss{a}{b}} \to \unavail{a} \quad (\So{b} := \So{a}) \\
    \tag{D-FWDLS} \label{dyn:fwdls}
    \procl{a}{\fwdls{a}{b}} \to \connect{a}{b}
  \end{gather*}
\end{small}%
The rules \ref{dyn:fwdll} and \ref{dyn:fwdss} are two exceptions to the local transformations; they require the two channels to be
``globally'' identified. The rule \ref{dyn:fwdls} says that a linear process that forwards with a shared channel must transition to a connect predicate, which serves as a placeholder to denote shared to linear subtyping.

A linear to linear spawn creates a process $P$ offering a fresh channel $\Lo{c}$. One important point is that fresh linear channels $\overline{\Lo{d'}}$ are allocated alongside corresponding connect predicates due to the possibility of shared channels $\overline{\So{d}}$ being ``passed'' to the new process as linear channels.
\begin{gather*}
  \tag{D-SPAWNLL} \label{dyn:spawnll}
  \substack{
    \procl{a}{\spawn{\Lo{x}}{\Lo{X}}{\Lolist{b}, \Solist{d}, \Solist{e}}; Q} \\
    \pdef{(\spawn{\Lto{x'}{\Lo{A}}}{X_L}{\Ltlist{y'}{B'},\Ltlist{v'}{D'},\Stlist{w'}{E'}})=P}
  }
  \to
  \substack{
    \procl{a}{[\Lsub{c}{x}]Q},
    \procl{c}{[\Lsub{c}{x'}, \Lsublist{b}{y'}, \Lsublist{d'}{v'}, \Ssublist{e}{w'}]P} \\
  \overline{\connect{d'}{d}, \unavail{d'}} \fresh{\overline{d'}, c}} \\
\end{gather*}
Note that corresponding $\unavail{d'}$ predicates are spawned which solely makes later proofs easier. These can essentially be ignored for now.

The two other spawn cases are similar, except since since linear channels cannot be passed to shared processes, the verbose allocation of connect predicates are not necessary.
\begin{gather*}
  \tag{D-SPAWNLS} \label{dyn:spawnls}
  \substack{
    \procl{a}{\spawn{\So{x}}{\So{X}}{\Solist{b}}; Q} \\
    \pdef{(\spawn{\Sto{x'}{\So{A}}}{\So{X}}{\Stlist{y'}{B'}})=P}
  }\to
  \procl{a}{[\Ssub{c}{x}]Q}, \procs{c}{[\Ssub{c}{x'}, \Ssublist{b}{y'}]P} \fresh{c} \\
  \tag{D-SPAWNSS} \label{dyn:spawnss}
  \substack{
    \procs{a}{\spawn{\So{x}}{\So{X}}{\Solist{b}}; Q} \\
    \pdef{(\spawn{\Sto{x'}{\So{A}}}{\So{X}}{\Stlist{y'}{B'}})=P}
  }
  \to
  \procs{a}{[\Ssub{c}{x}]Q}, \procs{c}{[\Ssub{c}{x'}, \Ssublist{b}{y'}]P} \fresh{c}
\end{gather*}

For the unit $1$, a client \emph{waiting} for a channel to \emph{close} can proceed when the corresponding provider \emph{closes} its channel.
\begin{small}
  \begin{gather*}
    \tag{D-$1$} \label{dyn:unit}
    \procl{a}{\wait{b};P}, \procl{b}{\close{b}} \to \procl{a}{P}
  \end{gather*}
\end{small}

The left hand side of the dynamics follow a pattern where one process receives while another process sends. Starting with $\otimes$ and $\multimap$:
\begin{small}
  \begin{gather*}
    \tag{D-$\otimes$} \label{dyn:tensor}
    \procl{a}{\asgn{\Lo{y}}{\recv{b}}; P}, \procl{b}{\sendl{b}{c}; Q}, \Psi_c
    \to
    \procl{a}{[\Lsub{c}{y}]P}, \procl{b}{Q}, \Psi_c \\
    \tag{D-$\multimap$} \label{dyn:loli}
    \procl{a}{\sendl{b}{c}; P}, \procl{b}{\asgn{\Lo{y}}{\recv{b}}; Q}, \Psi_c
    \to
    \procl{a}{P}, \procl{b}{[\Lsub{c}{y}]Q}, \Psi_c
  \end{gather*}
\end{small}

When shared channels are sent instead, a fresh channel $d$ is allocated and a connect predicate connects the shared channel:
\begin{small}
  \begin{align*}
    \tag{D-$\otimes$2} \label{dyn:tensor2}
    &\procl{a}{\asgn{\Lo{y}}{\recv{b}}; P}, \procl{b}{\sends{b}{c}; Q} \\
    \to \quad
    &\procl{a}{[\Lsub{d}{y}]P}, \procl{b}{Q}, \connect{d}{c}, \unavail{d} \fresh{d} \\\\
    \tag{D-$\multimap$2} \label{dyn:loli2}
    &\procl{a}{\sends{b}{c}; P}, \procl{b}{\asgn{\Lo{y}}{\recv{b}}; Q} \\
    \to \quad
    &\procl{a}{P}\procl{b}{[\Lsub{d}{y}]Q}, \connect{d}{c}, \unavail{d} \fresh{d}
  \end{align*}
\end{small}%

For $\oplus$ and $\&$, the pattern of one side sending (a label) and the other receiving is maintained:
\begin{small}
  \begin{gather*}
    \tag{D-$\oplus$} \label{dyn:intch}
    \procl{a}{\casep{b}{\caselistp{l}{P}, \caselistp{m}{P}}}, \procl{b}{b.i; Q}
    \to
    \procl{a}{P_i}, \procl{b}{Q} \quad (i \in \overline{l}) \\
    \tag{D-$\&$} \label{dyn:extch}
    \procl{a}{b.i; P}, \procl{b}{\casep{b}{\caselistp{l}{Q}, \caselistp{m}{Q}}}
    \to
    \procl{a}{P}, \procl{b}{Q_i} \quad (i \in \overline{l})
  \end{gather*}
\end{small}%
An important point is that due to subtyping, the process receiving a label can accept a superset of the labels that the process sending will send. This is syntactically expressed by having the recipient case on the list $\overline{l}, \overline{m}$ while having the sender pick a label in $\overline{l}$.

Now for the modal connectives, the idea is similar to the previous logical connectives; for $\Lupls$, a client must \emph{acquire} a shared channel and the corresponding shard provider must \emph{accept}. Similarly for $\Ldownsl$, a client must \emph{release} while the provider must \emph{detach}, returning to a shared process:
  \begin{gather*}
    \tag{D-$\Lupls$} \label{dyn:upls}
    \procl{a}{\asgn{\Lo{x}}{\acqs{b}}; P}, \procs{b}{\asgn{\Lo{x}}{\accs{b}}; Q}
    \to
    \substack{
    \procl{a}{[\Lsub{b}{x}]P}, \procl{b}{[\Lsub{b}{x}]Q},
    \\ \unavail{b}} \\
    \tag{D-$\Ldownsl$} \label{dyn:downsl}
    \substack{\procl{a}{\asgn{\So{x}}{\rels{b}}; P}, \procl{b}{\asgn{\So{x}}{\dets{b}}; Q},
    \\ \unavail{b}}
    \to
    \procl{a}{[\Ssub{b}{x}]P}, \procs{b}{[\Ssub{b}{x}]Q}
  \end{gather*}

The linear variants have a similar semantics:
\begin{small}
  \begin{gather*}
    \tag{D-$\Lupll$} \label{dyn:upll}
    \procl{a}{\asgn{\Lo{x}}{\acql{b}}; P}, \procl{b}{\asgn{\Lo{x}}{\accl{b}}; Q}
    \to
    \procl{a}{[\Lsub{b}{x}]P}, \procl{b}{[\Lsub{b}{x}]Q} \\
    \tag{D-$\Ldownll$} \label{dyn:downll}
    \procl{a}{\asgn{\Lo{x}}{\rell{b}}; P}, \procl{b}{\asgn{\Lo{x}}{\detl{b}}; Q}
    \to
    \procl{a}{[\Lsub{b}{x}]P}, \procl{b}{[\Lsub{b}{x}]Q}
  \end{gather*}
\end{small}%

Finally, when a client linearly \emph{acquires} what happens to be a shared process, it must go through the connect predicate, and similarly when a client linearly \emph{releases} a provider that is \emph{detaching} to a shared state, a connect predicate is allocated:
\begin{gather*}
  \tag{D-$\Lupls$2} \label{dyn:upls2}
  \substack{
    \procl{a}{\asgn{\Lo{x}}{\acql{b}}; P}, \connect{b}{c} \\
    \procs{c}{\asgn{\Lo{x}}{\accs{c}}; Q}
  }\to
  \substack{
    \procl{a}{[\Lsub{c}{x}]P}, \procl{c}{[\Lsub{c}{x}]Q} \\ 
  \unavail{c}} \\
  \tag{D-$\Ldownsl$2} \label{dyn:downsl2}
  \substack{
    \procl{a}{\asgn{\Lo{x}}{\rell{c}}; P}, \procl{c}{\asgn{\So{x}}{\dets{c}}; Q}, \\
  \unavail{c}}
  \to
  \substack{
    \procl{a}{[\Lsub{b}{x}]P}, \connect{b}{c}, \\ 
  \unavail{b}}
\end{gather*}

\subsection{Processes and Configuration}
A configuration consists of a list of shared process predicates $\Lambda$ and a list of linear process predicates $\Theta$. The order of shared processes have
no structure, but the order of linear processes can be seen to form a tree structure; a linear process can use channels offered by processes to its right, and
due to linearity, if it is using a channel, it must be the unique process doing so.
\begin{align*}
  \Omega &\defeq \Lambda;\Theta \\
  \Lambda &\defeq \cdot \defor \Lambda_1, \Lambda_2 \defor \procs{a}{P} \defor \unavail{a} \\
  \Theta &\defeq \cdot \defor \procl{a}{P}, \Theta' \defor \connect{a}{b}, \Theta' \\
\end{align*}

\paragraph{Well-formedness}
$\Lambda$ is well-formed if for any channel name $a$, ${\procs{a}{P}, \unavail{a} \notin \Lambda}$.  Similarly, $\Theta$ is well-formed if for any $a$, $\Psi_a,
\Psi_a' \notin \Theta$ where $\Psi_a \neq \Psi_a'$. The configuration $\Lambda;\Theta$ is well-formed if both its fragments are well-formed and
${\Psi_a \in \Theta \to \unavail{a} \in \Lambda}$.

\subsection{Configuration Typing}
\label{sec:metatheory-cfg-typing}
A well-formed configuration $\Lambda;\Theta$ is typed by its shared and linear fragments.
\begin{small}
\[
  \infer[\Omega]{\JCb{\Gamma}{\Lambda}{\Theta}{\Delta}}{\JCs{\Gamma}{\Lambda}{\Gamma} & \JCl{\Gamma}{\Theta}{\Delta}}
\]
\end{small}
\begin{small}
\[
  \infer[\Lambda 1]{\JCs{\Gamma}{\cdot}{\cdot}}{}
  \quad
  \infer[\Lambda 2]{\JCs{\Gamma}{\Lambda_1, \Lambda_2}{\Gamma_1, \Gamma_2}}
    {\JCs{\Gamma}{\Lambda_1}{\Gamma_1} & \JCs{\Gamma}{\Lambda_2}{\Gamma_2}}
\]
\[
  \infer[\Lambda 3]{\JCs{\Gamma}{\procs{a}{P}}{\St{a}{A}}}{\dsyncs{A'}{A}{\top} & \JTs{\Gamma}{P}{a}{A'}}
  \quad
  \infer[\Lambda 4]{\JCs{\Gamma}{\unavail{a}}{\Sto{a}{\hat{A}}}}{}
\]
\end{small}
\begin{small}
\[
  \infer[\Theta 1]{\JCl{\Gamma}{\cdot}{\cdot}}{}
  \quad
  \infer[\Theta 2]{\JCpc{\Gamma}{a}{b}{\Lo{A}}{\Theta'}{\Delta'}}{\Sto{b}{\hat{B}} \in \Gamma & \So{b} \leq \Lo{A} &\JCl{\Gamma}{\Theta'}{\Delta'}}
\]
\[
  \infer[\Theta 3]{\JCpp{\Gamma}{a}{P}{\Lo{A}}{\Theta'}{\Delta'}}
    {\Sto{a}{\hat{A}} \in \Gamma & \dsyncl{A'}{A}{\hat{A}} & \JTl{\Gamma}{\Delta_a}{P}{a}{A'} & \JCl{\Gamma}{\Theta'}{\Delta_a, \Delta'}}
\]
\end{small}

\subsection{Lemmas}
In this section we present lemmas of interest to be used in the progress and preservation proofs. The proofs of each lemma are in \autoref{app:lemma-proofs}.

\subsubsection{Lemmas involving the Configuration}
\autoref{lem:cfg-peel} allows the tail of linear configurations to be peeled off, \autoref{lem:cfg-balance} asserts that an active shared
process prevents an active linear process of the same channel name, \autoref{lem:cfg-permutation} allows individual process predicates in
linear configurations to be moved around as long as the overall invariant that linear processes can only depend on processes to its right is
maintained, \autoref{lem:cfg-subs} allows the substitution of subconfigurations in a linear configuration if signatures match, and finally,
\autoref{lem:cfg-bigger} allows offering channels of linear processes to be viewed at supertypes.

\begin{lem}
  If $\JCl{\Gamma}{\Psi, \Theta}{\Delta}$, then $\JCl{\Gamma}{\Theta}{\Delta'}$ for some $\Delta'$. \\
  More generally, if $\JCl{\Gamma}{\Theta_1, \Theta_2}{\Delta}$, then $\JCl{\Gamma}{\Theta_2}{\Delta'}$ for some $\Delta'$.
  \label{lem:cfg-peel}
\end{lem}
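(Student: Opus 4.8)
The plan is to first prove the single-predicate version — that removing one head predicate $\Psi$ from the front of a well-typed linear configuration leaves a well-typed tail — and then to lift it to an arbitrary prefix by a short induction. The single-predicate version is essentially immediate by inversion on the configuration typing judgment, because in each rule that can conclude $\JCl{\Gamma}{\Psi, \Theta}{\Delta}$ the typing of the tail $\Theta$ appears literally as a premise.

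Concretely, suppose $\JCl{\Gamma}{\Psi, \Theta}{\Delta}$, where $\Psi$ is a single predicate offering some channel $\Lo{a}$. Only $\Theta 2$ and $\Theta 3$ can derive this judgment, since $\Theta 1$ requires the empty configuration. If $\Psi = \connect{a}{b}$, inversion on $\Theta 2$ gives $\Delta = \Lt{a}{A}, \Delta'$ together with the premise $\JCl{\Gamma}{\Theta}{\Delta'}$, so the witness is $\Delta'' = \Delta'$. If $\Psi = \procl{a}{P}$, inversion on $\Theta 3$ gives $\Delta = \Lt{a}{A}, \Delta'$ together with the premise $\JCl{\Gamma}{\Theta}{\Delta_a, \Delta'}$, where $\Delta_a$ is the fragment of the tail that $P$ consumes, so the witness is $\Delta'' = \Delta_a, \Delta'$. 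In both cases $\JCl{\Gamma}{\Theta}{\Delta''}$ holds.

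For the general statement I would induct on the length of the prefix $\Theta_1$. When $\Theta_1 = \cdot$ the configuration $\Theta_1, \Theta_2$ is just $\Theta_2$, so the claim holds with $\Delta' = \Delta$. When $\Theta_1 = \Psi, \Theta_1'$, the configuration is $\Psi, (\Theta_1', \Theta_2)$; the single-predicate case yields $\JCl{\Gamma}{\Theta_1', \Theta_2}{\Delta''}$ for some $\Delta''$, and the induction hypothesis applied to the shorter prefix $\Theta_1'$ then yields $\JCl{\Gamma}{\Theta_2}{\Delta'}$ for some $\Delta'$, as required.

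I do not expect a real obstacle: the lemma is a direct consequence of the fact that a linear process depends only on channels offered to its right, so the typing derivation of a configuration is built so that the tail's typing is always a subderivation. The one point requiring care is the bookkeeping of the output context. Unlike the fixed shared context $\Gamma$, the offered context changes when the head is removed — it grows by exactly the channels $\Delta_a$ that the head had been consuming — which is precisely why the statement asserts only the existence of some $\Delta'$ rather than preservation of $\Delta$.
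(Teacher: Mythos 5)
Your proposal is correct and follows the same route as the paper's proof: a case analysis on the rule ($\Theta 2$ or $\Theta 3$) deriving the judgment, reading off the tail's typing from the premise, followed by iterating this step over the prefix for the general version. You merely spell out the inversion details and the output-context bookkeeping that the paper leaves implicit.
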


\begin{lem}
  Given a well-formed $\Lambda;\Theta$, $\forall \procs{a}{-} \in \Lambda, \Psi_a \notin \Theta$
  \label{lem:cfg-balance}
\end{lem}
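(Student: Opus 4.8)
The plan is to argue by contradiction, simply unfolding the three clauses of the well-formedness definition. Suppose, toward a contradiction, that there is a channel name $a$ with $\procs{a}{-} \in \Lambda$ and yet $\Psi_a \in \Theta$. The goal is to show that these two memberships cannot coexist in a well-formed configuration $\Lambda;\Theta$.

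First I would invoke the cross-fragment clause of configuration well-formedness, namely that $\Psi_a \in \Theta$ implies $\unavail{a} \in \Lambda$. Since we have assumed $\Psi_a \in \Theta$, this immediately yields $\unavail{a} \in \Lambda$. Next I would combine this with the assumed $\procs{a}{-} \in \Lambda$: now $\Lambda$ simultaneously contains $\procs{a}{-}$ and $\unavail{a}$ for the same channel name $a$. But the well-formedness condition on $\Lambda$ alone explicitly forbids this, requiring that for any channel name $a$ we never have both $\procs{a}{P}$ and $\unavail{a}$ present. This is the desired contradiction, so no such $a$ exists, which establishes the lemma.

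I do not anticipate any genuine obstacle here; the statement is an immediate consequence of the definitions, and the only care needed is to apply the cross-fragment invariant in the correct direction (from $\Theta$-membership to $\Lambda$-membership) before appealing to the purely $\Lambda$-local well-formedness condition. The one subtlety worth flagging is that $\Psi_a$ ranges over both $\procl{a}{P}$ and $\connect{a}{b}$; since the cross-fragment clause is stated uniformly in terms of $\Psi_a$, both forms are covered at once and no case split on the shape of the linear predicate is required.
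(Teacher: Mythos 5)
Your proof is correct and uses exactly the same two ingredients as the paper's: the $\Lambda$-local condition forbidding $\procs{a}{-}$ and $\unavail{a}$ from coexisting, and the cross-fragment clause $\Psi_a \in \Theta \to \unavail{a} \in \Lambda$. The paper chains these directly via the contrapositive of the cross-fragment clause rather than arguing by contradiction, but this is only a stylistic difference, not a different argument.
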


\begin{lem}
  If $\JCl{\Gamma}{\Psi_a, \Theta_1, \Psi_b, \Theta_2}{\Lt{a}{A}, \Delta}$ and $\Psi_a$ uses
  $\Lo{b}$, then
  $$\JCl{\Gamma}{\Psi_a, \Psi_b, \Theta_1, \Theta_2}{\Lt{a}{A}, \Delta}$$
  \label{lem:cfg-permutation}
\end{lem}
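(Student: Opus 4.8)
The plan is to induct on the length of $\Theta_1$, reducing the general permutation to repeated \emph{adjacent} swaps of $\Psi_b$ with the predicate immediately to its left. The key observation underlying the whole argument is that the rules $\Theta 2$ and $\Theta 3$ build the offered context of a linear configuration from right to left: each head predicate $\Psi$ offers exactly one channel and \emph{consumes} a sub-multiset $\Delta_\Psi$ of the channels offered by its tail, passing the remainder upward (for a connect predicate $\connect{c}{w}$ typed by $\Theta 2$ this consumed multiset is empty, since it only uses the shared channel $w \in \Gamma$). Consequently, a channel offered by some predicate survives to the top of the configuration precisely when no predicate above it consumes it, and since the linear context is taken up to exchange, the argument need only preserve the offered \emph{multiset}, never its order. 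The base case $\Theta_1 = \cdot$ is immediate, as the configuration is already in the target shape.

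For the inductive step I would write $\Theta_1 = \Theta_1', \Psi_c$, so the configuration is $\Psi_a, \Theta_1', \Psi_c, \Psi_b, \Theta_2$, and invert the typing derivation along $\Psi_a, \Theta_1'$ to expose a sub-derivation $\JCl{\Gamma}{\Psi_c, \Psi_b, \Theta_2}{\Delta_{cb}}$ whose offered context $\Delta_{cb}$ still contains $\Lo{b}$. This holds because $\Psi_a$ uses $\Lo{b}$, so $\Lo{b}$ must be passed all the way up to $\Psi_a$ and hence is consumed by nothing in between. From $\Lo{b} \in \Delta_{cb}$ together with the $\Theta 3$/$\Theta 2$ inversion of $\Psi_c$ I would read off that $\Psi_c$ does not consume $\Lo{b}$ (it lies in the portion passed up, not in $\Delta_{\Psi_c}$). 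Dually, inverting $\Psi_b$ shows it consumes only channels offered by $\Theta_2$, so it does not consume $\Lo{c}$, the channel offered by $\Psi_c$.

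With these two independence facts I would rebuild the derivation in the order $\Psi_b, \Psi_c, \Theta_2$: split the tail offerings so that $\Delta_{\Psi_c}$ is drawn entirely from $\Theta_2$ (possible since $\Lo{b} \notin \Delta_{\Psi_c}$), apply $\Theta 3$/$\Theta 2$ for $\Psi_c$, and then apply $\Theta 3$/$\Theta 2$ for $\Psi_b$ (possible since $\Lo{c} \notin \Delta_{\Psi_b}$). The offered context of the rebuilt configuration coincides, up to exchange, with $\Delta_{cb}$, yielding $\JCl{\Gamma}{\Psi_b, \Psi_c, \Theta_2}{\Delta_{cb}}$; re-attaching $\Psi_a, \Theta_1'$ on the left gives $\JCl{\Gamma}{\Psi_a, \Theta_1', \Psi_b, \Psi_c, \Theta_2}{\Lt{a}{A}, \Delta}$. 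Finally I would apply the induction hypothesis with the shorter prefix $\Theta_1'$ and the enlarged suffix $\Psi_c, \Theta_2$ (noting that $\Psi_a$ still uses $\Lo{b}$), moving $\Psi_b$ the rest of the way to obtain $\JCl{\Gamma}{\Psi_a, \Psi_b, \Theta_1', \Psi_c, \Theta_2}{\Lt{a}{A}, \Delta}$, which is exactly $\JCl{\Gamma}{\Psi_a, \Psi_b, \Theta_1, \Theta_2}{\Lt{a}{A}, \Delta}$.

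The main obstacle is the context bookkeeping in the adjacent swap: correctly partitioning the tail's offered context into the portions consumed by $\Psi_c$, consumed by $\Psi_b$, and merely passed through, and then checking that the rebuilt derivation type-checks with the identical multiset. The linchpin is the independence pair ``$\Psi_c$ does not use $\Lo{b}$'' and ``$\Psi_b$ does not use $\Lo{c}$''. The first is precisely where the hypothesis that $\Psi_a$ uses $\Lo{b}$ enters, via uniqueness of the linear client; I would stress that this uniqueness is \emph{structural}, guaranteed by the right-to-left consumption discipline of $\Theta 2$/$\Theta 3$ rather than assumed separately. The second is automatic from the original left-to-right ordering, since $\Psi_b$ can only consume channels offered strictly to its right.
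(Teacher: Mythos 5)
Your proposal is correct and takes essentially the same route as the paper: the paper's proof also argues that linearity makes $\Psi_a$ the unique consumer of $\Lo{b}$, so nothing in $\Theta_1$ uses $\Lo{b}$, and then ``repeatedly moves $\Psi_b$ to the left'' --- exactly your induction by adjacent swaps. Your write-up merely makes explicit the context bookkeeping and the two independence facts that the paper leaves implicit.
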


\begin{lem}
  If $\JCl{\Gamma}{\Psi, \Theta}{\Delta}$, $\JCl{\Gamma}{\Theta}{\Delta_p}$, and $\JCl{\Gamma}{\Theta'}{\Delta_p}$, then
  $$\JCl{\Gamma}{\Psi, \Theta'}{\Delta}$$  \\
  More generally, if $\JCl{\Gamma}{\Theta_1, \Theta_2}{\Delta}$, $\JCl{\Gamma}{\Theta_2}{\Delta_p}$, and $\JCl{\Gamma}{\Theta_2'}{\Delta_p}$, then
  $$\JCl{\Gamma}{\Theta_1, \Theta_2}{\Lt{a}{A}, \Delta}$$
  \label{lem:cfg-subs}
\end{lem}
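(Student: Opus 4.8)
The plan is to prove the general statement by induction on the structure of $\Theta_1$, reducing it to the single-predicate case, which I handle by inversion on the configuration typing rules $\Theta 2$ and $\Theta 3$. The governing intuition is that each linear predicate at the head of a configuration is typed using only the channels it actually consumes, so it cannot distinguish $\Theta$ from $\Theta'$ as long as both expose the same interface $\Delta_p$; replacing the tail is therefore invisible to everything sitting to its left.

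First I would treat the single-predicate case $\JCl{\Gamma}{\Psi, \Theta}{\Delta}$ and invert the last rule. If $\Psi = \procl{a}{P}$ the rule is $\Theta 3$, and inversion yields $\Delta = \Lt{a}{A}, \Delta'$ together with the premises $\Sto{a}{\hat{A}} \in \Gamma$, $\dsyncl{A'}{A}{\hat{A}}$, $\JTl{\Gamma}{\Delta_a}{P}{a}{A'}$, and a tail derivation $\JCl{\Gamma}{\Theta}{\Delta_a, \Delta'}$. The case $\Psi = \connect{a}{b}$ uses $\Theta 2$ and is strictly simpler, since a connect predicate consumes no linear channels and its tail provides $\Delta'$ directly. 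The crucial observation is that the first three premises are entirely independent of $\Theta$: they mention only $P$ and the channels $\Delta_a$ that $P$ uses. Hence, identifying the tail interface $\Delta_a, \Delta'$ with $\Delta_p$, I may discard the tail derivation for $\Theta$, splice in the hypothesis $\JCl{\Gamma}{\Theta'}{\Delta_p}$, and reapply the very same instance of $\Theta 3$ (resp. $\Theta 2$) to conclude $\JCl{\Gamma}{\Psi, \Theta'}{\Delta}$.

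For the general statement I would induct on $\Theta_1$. The base case $\Theta_1 = \cdot$ is immediate once $\Delta$ is recognized as the interface $\Delta_p$. In the inductive step I write $\Theta_1 = \Psi, \Theta_1''$, invert the head rule for $\Psi$ to expose a sub-derivation $\JCl{\Gamma}{\Theta_1'', \Theta_2}{\Delta_\Psi, \Delta''}$, where $\Delta_\Psi$ is the linear interface $\Psi$ uses, apply the induction hypothesis to swap $\Theta_2$ for $\Theta_2'$ inside that sub-derivation, and reapply the head rule for $\Psi$ verbatim. \autoref{lem:cfg-peel} underpins the well-definedness of the tail sub-derivations being manipulated here.

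I expect the main obstacle to be the bookkeeping around the interface $\Delta_p$. Configuration signatures are not literally unique --- by \autoref{lem:cfg-bigger} the offered channels of a tail may always be viewed at supertypes --- so I must argue that the tail interface surfaced by inverting $\JCl{\Gamma}{\Psi, \Theta}{\Delta}$ can be taken to coincide with the $\Delta_p$ supplied by the hypotheses rather than some larger variant. The cleanest way to close this gap is to read $\Delta_p$ as precisely the witnessing tail interface of the given derivation (which is how the lemma is invoked in the preservation proof), so that the substitution of $\Theta'$ for $\Theta$ is type-for-type exact and no subtyping coercion is needed at the splice point.
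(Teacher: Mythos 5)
Your proposal is correct and follows essentially the same route as the paper's proof: inversion on $\Theta 2$/$\Theta 3$ for the single-predicate case, observing that the head premises are independent of the tail so that $\Theta'$ can be spliced in, and then iterating over the predicates of $\Theta_1$ for the general statement. Your extra care about identifying $\Delta_p$ with the witnessing tail interface of the given derivation is a reasonable refinement of a point the paper leaves implicit, not a departure from its argument.
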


\begin{lem}
  If $\JCl{\Gamma}{\Psi_a, \Theta'}{\Lt{a}{A'}, \Delta'}$, then for any $\Lo{B}$ such that $\Lo{A'} \leq \Lo{B}$,\\
  $\JCl{\Gamma}{\Psi_a, \Theta'}{\Lt{a}{B}, \Delta'}$.
  \label{lem:cfg-bigger}
\end{lem}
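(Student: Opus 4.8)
The plan is to invert the derivation of $\JCl{\Gamma}{\Psi_a, \Theta'}{\Lt{a}{A'}, \Delta'}$ on its last rule. Since the head predicate $\Psi_a$ offers the linear channel $\Lo a$, the derivation ends in $\Theta 2$ (when $\Psi_a = \connect a b$) or $\Theta 3$ (when $\Psi_a = \procl a P$). In either rule the tail is typed by an unchanged premise of the form $\JCl{\Gamma}{\Theta'}{\cdots}$ and the context $\Gamma$ and residual interface $\Delta'$ are carried through verbatim, so only the typing data attached to $a$ depends on the public type $A'$; it therefore suffices to re-establish that data with $A'$ replaced by $B$ and re-apply the same rule. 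The $\connect a b$ case is immediate: the sole premise of $\Theta 2$ mentioning $A'$ is the side condition $\So b \leq \Lo{A'}$, and since $\leq$ is a partial order (hence transitive, by the earlier remark), $\So b \leq \Lo{A'} \leq \Lo B$ gives $\So b \leq \Lo B$, after which $\Theta 2$ reassembles $\JCl{\Gamma}{\connect a b, \Theta'}{\Lt a B, \Delta'}$.

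The $\procl a P$ case is the crux. Here $\Theta 3$ supplies $\Sto a {\hat A} \in \Gamma$, the actual provider typing $\JTl{\Gamma}{\Delta_a}{P}{a}{A''}$, the tail typing $\JCl{\Gamma}{\Theta'}{\Delta_a, \Delta'}$, and the subsynchronizing premise $\dsyncl{A''}{A'}{\hat A}$ relating the provider's true type $A''$ to the public type $A'$ under constraint $\hat A$. Only this last premise refers to $A'$, so the whole case reduces to proving $\dsyncl{A''}{B}{\hat A}$, which is an instance of the following monotonicity property that I would isolate as a sublemma: subsynchronizing is upward-closed in its client (second) argument, i.e.\ if $\dsync A B {\hat D}$ and $B \leq C$ then $\dsync A C {\hat D}$, and likewise for $\dsyncl$ and $\dsyncs$. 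Because $\dsync A B {\hat D}$ already presupposes $A \leq B$, transitivity gives $A \leq C$, so the enlarged judgment is well-formed.

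I would prove the sublemma by coinduction, mutually over the three judgment forms $\dsync$, $\dsyncl$, $\dsyncs$, analysing the last subsynchronizing rule and using inversion on $B \leq C$ to fix the shape of $C$. The enabling observation is that in every rule the constraint argument $\hat D$ is updated using only the provider type --- $S\Lupls$ records $\upls A$ and $S\Ldownsl$ checks $\downsl A \leq \hat D$ --- which is untouched when the client grows, so the constraint bookkeeping transfers unchanged. The choice rules move in the safe direction: enlarging an internal-choice client $\intch{\caselist l {A'}, \caselist m B}$ only appends branches the provider never selects, so $S\oplus$ still needs premises only on the provider's labels $\overline l$; enlarging an external-choice client $\extch{\caselist l {A'}}$ instead drops branches, leaving a label set the provider still covers, so $S\&$ applies to the smaller common set. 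In each surviving branch the continuation premise is re-established by the coinductive hypothesis applied to the enlarged continuation extracted from inverting $\leq_\oplus$ or $\leq_\&$; the $\otimes$, $\multimap$, and shift cases discharge their single continuation premise the same way, with $S\Ldownsl$ and $S\Ldownsl\Ldownll$ dropping into the shared variant $\dsyncs$. I expect the main obstacle to be exactly these choice cases --- re-indexing the rule to the new common label set and checking the provider still covers it --- together with threading the mutual coinduction correctly through the modality change in the $\Ldownsl$ rules.
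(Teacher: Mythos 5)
Your proposal is correct and follows essentially the same route as the paper: inversion on $\Theta 2$ versus $\Theta 3$, transitivity of $\leq$ for the $\connect{a}{b}$ case, and upward closure of the subsynchronizing judgment in its client argument for the $\procl{a}{P}$ case. The monotonicity sublemma you isolate is exactly the paper's Lemma~\ref{lem:dsync-bigger}, and your coinductive proof sketch of it (case analysis on the provider's structure, inversion on subtyping to fix the client's shape, with the choice and shift cases handled as you describe) matches the paper's $F$-consistency argument in the appendix.
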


\subsubsection{Ordering of Contexts}
A linear context is smaller than another if it shares the same variables with their associated types respecting subtyping. Similarly, a shared context is smaller than another if it contains at least the same variables (could contain additional as shown in $\Gamma_{\preceq \cdot}$) with their associated types respecting subtyping.
\begin{small}
\[
  \infer[\Delta_{\leq \cdot}]{\cdot \leq \cdot}{} \quad
  \infer[\Delta_{\leq x}]{\Delta, \Lt{x}{A} \leq \Delta', \Lt{x}{A'}}{\Delta \leq \Delta' & \Lo{A} \leq \Lo{A'}}
\]
\[
  \infer[\Gamma_{\preceq \cdot}]{\Gamma \preceq \cdot}{} \quad
  \infer[\Gamma_{\preceq x}]{\Gamma, \Sto{x}{\hat{A}} \preceq \Gamma', \Sto{x}{\hat{A'}}}{\Gamma \preceq \Gamma' & \hat{A} \leq \hat{A'}}
\]
\end{small}%

The following two lemmas allow the substitution of smaller shared contexts in both the configuration typing and process typing judgments.

\begin{lem}
  Let $\Gamma' \preceq \Gamma$ and $\JTl{\Gamma}{\Delta}{P}{z}{C}$, then $\JTl{\Gamma'}{\Delta}{P}{z}{C}$.
  \label{lem:static-stable}
\end{lem}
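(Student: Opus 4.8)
The plan is to prove a slightly strengthened statement that also covers the shared typing judgment, namely that simultaneously (i) $\Gamma' \preceq \Gamma$ and $\JTl{\Gamma}{\Delta}{P}{z}{C}$ imply $\JTl{\Gamma'}{\Delta}{P}{z}{C}$, and (ii) $\Gamma' \preceq \Gamma$ and $\JTs{\Gamma}{P}{z}{C}$ imply $\JTs{\Gamma'}{P}{z}{C}$. The strengthening is forced on us because the right rules $\Lupls R$ and $\Ldownsl R$ switch between the two judgments, so neither half can be established in isolation. I would then proceed by mutual induction on the two typing derivations, noting throughout that the linear context $\Delta$ and the conclusion type $C$ are never altered, so any exactness constraints on $\Delta$ (as in $1R$, $\Ldownsl R$, and the forwarding rules) are automatically preserved.

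Before the induction I would record two auxiliary facts about $\preceq$, each a short induction on the derivation of $\preceq$ that relies on the Remark that $\leq$ is a partial order. First, \emph{reflexivity}: $\Gamma \preceq \Gamma$, obtained by applying $\Gamma_{\preceq x}$ repeatedly with the reflexive instances $\hat{A} \leq \hat{A}$. Second, \emph{membership preservation with decomposition}: if $\Gamma' \preceq \Gamma$ and $\Sto{v}{\hat{D}} \in \Gamma$, then there is some $\hat{D'} \leq \hat{D}$ with $\Sto{v}{\hat{D'}} \in \Gamma'$; and more sharply, since the shared context admits all structural rules, $\Gamma'$ may be rearranged by exchange into the form $\Gamma'', \Sto{v}{\hat{D'}}$ whenever $\Gamma$ has the matching shape $\Gamma_0, \Sto{v}{\hat{D}}$, with $\Gamma'' \preceq \Gamma_0$. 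These two facts are exactly what let me push $\preceq$ through context lookups and context extensions.

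The cases then fall into three kinds. The rules that neither read nor extend the shared context—$1L$, $1R$, $\Lo{ID}$, the purely linear logical rules $\otimes L$, $\otimes R$, $\multimap L$, $\multimap R$, $\oplus L$, $\oplus R$, $\& L$, $\& R$, the four linear shifts $\Lupll L$, $\Lupll R$, $\Ldownll L$, $\Ldownll R$, and the judgment-switching rules $\Lupls R$, $\Ldownsl R$—are discharged by applying the (appropriate half of the) induction hypothesis to each premise with the very same witness $\Gamma' \preceq \Gamma$ and then reapplying the rule; in the base cases $1R$ and $\Lo{ID}$ the arbitrary $\Gamma'$ is harmless precisely because the shared context admits weakening, so the possibly extra variables of $\Gamma'$ never interfere. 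The rules that read a shared channel under a side condition of the form $\hat{D} \leq (\cdots)$—namely $\So{ID}$, $ID_{\scriptscriptstyle {LS}}$, the spawn rules $SP_{\scriptscriptstyle {LL}}$, $SP_{\scriptscriptstyle {LS}}$, $SP_{\scriptscriptstyle {SS}}$, the shared variants $\otimes\So{R}$ and $\multimap\So{L}$, and the acquire rule $\Lupls L$—are handled by membership preservation: each looked-up binding persists in $\Gamma'$ at a smaller type $\hat{D'} \leq \hat{D}$, and transitivity of $\leq$ re-establishes $\hat{D'} \leq (\cdots)$. Finally, the rules whose premise extends the shared context by a concrete binding $\St{x}{A}$—namely $\Ldownsl L$ and the continuations of $SP_{\scriptscriptstyle {LS}}$ and $SP_{\scriptscriptstyle {SS}}$—are closed by observing that $\Gamma', \St{x}{A} \preceq \Gamma, \St{x}{A}$ holds, using reflexivity on the new binding together with the given $\Gamma' \preceq \Gamma$, after which the induction hypothesis applies to the premise.

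I expect the main obstacle to be administrative rather than conceptual: getting the decomposition of $\preceq$ correct in the presence of the extra variables permitted by the base rule $\Gamma_{\preceq \cdot}$, and keeping the bookkeeping straight in the spawn rules, where three separate substitution groups and (for $SP_{\scriptscriptstyle {LL}}$) two distinct lookups into $\Gamma$ must each be re-validated against the smaller $\Gamma'$ at once. Every individual re-validation is a one-line transitivity argument, but there are several per spawn rule and the contexts are heavy, so the chief risk is a missed side condition rather than a genuinely hard step; no idea beyond transitivity of $\leq$ and reflexivity of $\preceq$ is required.
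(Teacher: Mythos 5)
Your proof is correct and rests on the same engine as the paper's: an induction over the typing derivation in which the only interesting cases are the rules that consult $\Gamma$ ($\So{ID}$, $ID_{\scriptscriptstyle{LS}}$, the three spawn rules, $\Lupls L$, ${\multimap}\So{L}$, ${\otimes}\So{R}$), each discharged by transitivity of $\leq$. The organizational difference is that the paper factors the argument through a single-channel substitution lemma (replace one $\Sto{x}{\hat{A}}$ by $\Sto{x}{\hat{B}}$ with $\hat{B}\leq\hat{A}$), then iterates it over the channels of $\Gamma$ and appeals to weakening for the extra names in $\Gamma'$, whereas you carry the whole relation $\Gamma'\preceq\Gamma$ through one induction, paying for it with the two auxiliary facts (reflexivity of $\preceq$ and membership preservation). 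The two are interchangeable in effort; yours avoids the iteration step, the paper's avoids the decomposition lemma. One point in your favour: your explicit strengthening to a mutual statement over $\JTl{\Gamma}{\Delta}{P}{z}{C}$ and $\JTs{\Gamma}{P}{z}{C}$ is not merely prudent but necessary --- the derivation genuinely crosses between the two judgments at $\Ldownsl R$ and $\Lupls R$, and the paper later invokes this lemma for the shared judgment (in the $\Lambda 3$ case of the configuration-stability lemma) even though it only states and proves the linear half; your formulation closes that gap. No step of your plan would fail.
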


\begin{lem}
  Let $\Gamma' \preceq \Gamma$ then
  \begin{enumerate}
    \item If $\JCl{\Gamma}{\Theta}{\Delta}$ for some $\Theta, \Delta$, then $\JCl{\Gamma'}{\Theta}{\Delta}$
    \item If $\JCs{\Gamma}{\Lambda}{\Gamma''}$ for some $\Lambda, \Gamma''$, then $\JCs{\Gamma'}{\Theta}{\Gamma''}$
  \end{enumerate}
  \label{lem:cfg-stable}
\end{lem}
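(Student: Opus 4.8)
The plan is to prove the two parts by separate structural inductions on the configuration-typing derivations, each invoking the process-typing stability result \autoref{lem:static-stable} (together with the analogous statement for shared process typing $\JTs$) whenever a logical typing premise is reached. The two statements do not in fact depend on one another: part~1 ($\JCl$) recurses through $\Theta 2$ and $\Theta 3$ only into smaller $\JCl$ derivations, and part~2 ($\JCs$) recurses through $\Lambda 2$ only into smaller $\JCs$ derivations, so no simultaneous induction is needed. Throughout I would use the basic fact that $\Gamma' \preceq \Gamma$ preserves membership up to a subtype: if $\Sto{a}{\hat{A}} \in \Gamma$, then $\Sto{a}{\hat{A}'} \in \Gamma'$ for some $\hat{A}' \leq \hat{A}$, since $\preceq$ only adjoins variables and lowers the recorded types.

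I would dispatch the routine cases first. The axioms $\Theta 1$, $\Lambda 1$, and $\Lambda 4$ carry no context-sensitive premises, so they re-derive verbatim over $\Gamma'$ (with the same output context in the $\Lambda$ cases). $\Lambda 2$ splits $\Lambda$ and applies the induction hypothesis to each half. For $\Theta 2$ (a connect predicate) the premises are $\Sto{b}{\hat{B}} \in \Gamma$, $\So{b} \leq \Lo{A}$, and a smaller $\JCl$ derivation; membership preservation gives $\Sto{b}{\hat{B}'} \in \Gamma'$ with $\hat{B}' \leq \hat{B}$, transitivity of $\leq$ (it is a partial order) yields the refined side condition $\hat{B}' \leq \hat{B} \leq \Lo{A}$, and the induction hypothesis handles the tail, so $\Theta 2$ reconstructs over $\Gamma'$. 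In $\Lambda 3$ the embedded $\JTs$ judgment is repaired by shared-process stability, while its subsynchronizing premise $\dsyncs{A'}{A}{\top}$ is a judgment on types alone: it mentions neither $\Gamma$ nor any $\Gamma$-derived constraint (the constraint is the fixed $\top$), so it is literally unchanged.

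This leaves the single genuinely delicate case, $\Theta 3$, which I expect to be the main obstacle. Its premises are $\Sto{a}{\hat{A}} \in \Gamma$, $\dsyncl{A'}{A}{\hat{A}}$, $\JTl{\Gamma}{\Delta_a}{P}{a}{A'}$, and a tail $\JCl$. The process-typing premise is repaired by \autoref{lem:static-stable} and the tail by the induction hypothesis, and membership preservation yields $\Sto{a}{\hat{A}'} \in \Gamma'$ with $\hat{A}' \leq \hat{A}$. The obstruction is that $\Theta 3$ forces the subsynchronizing premise to use \emph{exactly} the constraint $\hat{A}'$ recorded in $\Gamma'$, so I must upgrade $\dsyncl{A'}{A}{\hat{A}}$ to $\dsyncl{A'}{A}{\hat{A}'}$ even though the recorded constraint is being lowered to a subtype. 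This is precisely the direction in which the subsynchronizing rules are sensitive: the constraint is consulted only at a provider release, through the side condition $\downsl{A} \leq \hat{D}$, and lowering $\hat{D}$ tightens that check rather than relaxing it.

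I would therefore isolate this step as a dedicated auxiliary lemma --- a stability property stating that refining the recorded constraint along $\preceq$ preserves subsynchronizing --- and prove it coinductively in lockstep with the derivation of $\dsyncl{A'}{A}{\hat{A}}$, tracking how each recorded constraint (always $\top$ at top level, or an upshift installed by $S\Lupls$) bounds the release points encountered downstream. The crux of that auxiliary argument, and hence of the whole lemma, is to show that every release check $\downsl{A} \leq \hat{D}$ that succeeded against $\hat{A}$ still succeeds against the subtype $\hat{A}'$; this is the one place where the interaction between subtyping and the equi-synchronizing-style release condition must be handled with care, and I expect it to be the only nontrivial verification, with all remaining cases following the pattern established for $\Theta 2$ and $\Lambda 3$.
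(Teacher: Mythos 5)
Your plan matches the paper's proof essentially step for step: two separate inductions on the $\JCl$ and $\JCs$ derivations, with the routine cases handled by transitivity and \autoref{lem:static-stable}, and the only real work concentrated in $\Theta 3$, where the subsynchronizing premise must be re-established against the lowered constraint $\hat{A}' \leq \hat{A}$. The auxiliary stability fact you propose to isolate and prove coinductively is precisely the paper's \autoref{lem:dsync-smaller-hat}, which the paper likewise states separately and invokes as a black box at exactly this point.
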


\subsubsection{Subsynchronizing Judgment}
The following lemmas apply to the subsynchronizing judgment defined in \autoref{sec:phasing-ssync}.  \autoref{lem:dsync-bigger} allows the
client type (second argument) to become bigger, \autoref{lem:dsync-smaller} allows the provider type (first argument) to become smaller
under a specific circumstance, \autoref{lem:dsync-smaller-hat} allows the constraint (third argument) to become smaller if both provider and
clients are linear, and finally, \autoref{lem:subtype-meet} allows the construction of a smaller constraint given two subsynchronizing
judgments of the same provider and client types.

\begin{lem}
  If $A \leq B \leq C$ with all same modalities (that is, $A, B, C$ are either all linear or all shared) and $\dsync{A}{B}{\hat{D}}$, then $\dsync{A}{C}{\hat{D}}$ for some $\hat{D}$.
  \label{lem:dsync-bigger}
\end{lem}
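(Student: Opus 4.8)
The plan is to argue by coinduction on the subsynchronizing judgment, which is defined as the greatest fixed point of the monotone operator $\Phi$ generated by its rules. To establish $\dsync{A}{C}{\hat{D}}$ it suffices to exhibit a relation $R$ on triples that contains $(A,C,\hat{D})$ and is \emph{consistent}, i.e. closed backwards under the rules: for every $(A',C',\hat{D}') \in R$ there is a subsynchronizing rule whose conclusion is $\dsync{A'}{C'}{\hat{D}'}$ and whose premises again lie in $R$ (or are already derivable). I would take
$$R = \{(A',C',\hat{D}') : \exists B'.\; A' \leq B' \leq C' \text{ and } \dsync{A'}{B'}{\hat{D}'}\},$$
unioned with $\dsync$ itself so that any premise discharged directly by the original derivation is absorbed; this is harmless since $\dsync \subseteq \Phi(\dsync)$ and $\Phi$ is monotone. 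The hypotheses of the lemma place $(A,C,\hat{D}) \in R$ with witness $B$, so the whole claim reduces to showing that $R$ is consistent.

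For consistency I fix $(A',C',\hat{D}') \in R$ with witness $B'$ and case split on the shape of the provider type $A'$, which pins down the last rule of $\dsync{A'}{B'}{\hat{D}'}$ and, by inversion on $A' \leq B'$ and $B' \leq C'$, the top connectives of $B'$ and $C'$. Since the provider type $A'$ and the constraint $\hat{D}'$ are untouched when the client is enlarged, every side condition mentioning only $A'$ and $\hat{D}'$ — the recorded constraint $\upls{A'}$ in the premises of $S\Lupls$ and $S\Lupls\Lupll$, and the checks $\downsl{A'} \leq \hat{D}'$ in $S\Ldownsl$ and $S\Ldownsl\Ldownll$ — transfers verbatim to the new derivation. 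In the structural cases ($1$, $\tensor{}{}$, $\loli{}{}$, and the two purely linear shifts) covariance of subtyping and of subsynchronizing in the continuation lets me carry the continuation triple into $R$ with the same witness restricted to the continuation; when $A' = 1$ inversion forces $B' = C' = 1$ and $S1$ applies with no premises. For internal choice the enlarged client $C'$ only \emph{adds} labels beyond those of the provider, so the premises still range over exactly the provider's label set; for external choice $C'$ has a label set that is a \emph{subset} of the provider's, so its smaller set of premises is covered by the subsynchronizing facts already available through $B'$.

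The main obstacle is the interaction of client-enlargement with the modality-changing shifts. Concretely, when the provider is a shared downshift $\downsl{A_0}$ and the client is enlarged from $\downsl{B_0}$ to a \emph{linear} downshift $\downll{C_0}$ — legal because $\leq_{\Ldownsl\Ldownll}$ only requires $\So{B_0} \leq \Lo{C_0}$ — the target rule becomes $S\Ldownsl\Ldownll$, whose continuation premise $\dsync{\So{A_0}}{\Lo{C_0}}{\top}$ straddles two modalities: $A_0$ is shared while $C_0$ is linear. An analogous phenomenon arises for the upshift when a shared $\upls{A_0}$ is matched against a client enlarged to $\upll{C_0}$, landing in $S\Lupls\Lupll$. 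This is precisely why the invariant $R$ must \emph{not} demand that the triple be of a single modality at every level: it suffices that a chain $A' \leq B' \leq C'$ exist, and the shift rules of the subtyping relation already force such a chain to be modality-nondecreasing (shared may be viewed as linear, never conversely). With this liberal invariant the continuation triple $(A_0, C_0, \top)$ sits in $R$ with witness $B_0$ exactly as in the same-modality cases, using $\dsync{\So{A_0}}{\So{B_0}}{\top}$ (resp.\ $\dsync{\So{A_0}}{\Lo{B_0}}{\top}$) together with $\So{B_0} \leq \Lo{C_0}$, and the case goes through.

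The lemma's stated restriction to a single modality is then merely the top-level instantiation of this more permissive invariant, where $A$, $B$, $C$ are chosen same-modality; the coinduction itself never relies on it below the root. I would close by remarking that each case exhibits, for $(A',C',\hat{D}') \in R$, a rule concluding $\dsync{A'}{C'}{\hat{D}'}$ all of whose premises are back in $R$, so $R \subseteq \Phi(R)$, and hence $R \subseteq \dsync$ by coinduction, which gives $\dsync{A}{C}{\hat{D}}$ as required.
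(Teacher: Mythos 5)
Your proposal is correct and takes essentially the same approach as the paper: both exhibit the backward-closed relation $\dstxt \cup \{(A,C,\hat{D}) \mid \exists B.\ B \leq C \wedge (A,B,\hat{D}) \in \dstxt\}$ and verify its consistency under the generating operator by case analysis on the provider type, with subcases for the modality-crossing shifts $\Lupls$/$\Lupll$ and $\Ldownsl$/$\Ldownll$. Your explicit observation that the same-modality restriction must be relaxed below the root is a point the paper's own definition of $\dstxt_{\Uparrow}$ glosses over (it states the restriction but then places mixed-modality continuation triples such as $(\So{A}, \Lo{C}, \top)$ into the set in the $\Ldownsl$ subcases), so that clarification is a welcome refinement rather than a divergence.
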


\begin{lem}
  If $A \leq B \leq C$ with all same modalities, $\dsync{B}{C}{\hat{D}}$, and \\$\dsync{A}{C}{\hat{E}}$, then $\dsync{A}{C}{\hat{D}}$ for some $\hat{D}$ and $\hat{E}$.
  \label{lem:dsync-smaller}
\end{lem}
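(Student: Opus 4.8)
The plan is to prove the statement by coinduction on the subsynchronizing judgment. Since $\dsync{\cdot}{\cdot}{\cdot}$ is defined coinductively, I would exhibit the relation
\[
  \mathcal{R} = \{(A, C, \hat{D}) \mid \exists B, \hat{E}.\; A \leq B \leq C \text{ (same modality)},\ \dsync{B}{C}{\hat{D}},\ \dsync{A}{C}{\hat{E}}\}
\]
and show that $\mathcal{R}$ is closed backwards under the subsynchronizing rules, i.e.\ that every triple in $\mathcal{R}$ is the conclusion of a rule all of whose recursive premises again lie in $\mathcal{R}$ and whose side conditions hold. By the coinductive principle this yields $\mathcal{R} \subseteq {\dsync{\cdot}{\cdot}{\cdot}}$, and since the triple $(A, C, \hat{D})$ of the statement belongs to $\mathcal{R}$ by hypothesis, the result follows. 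The case analysis is driven by the top connective of the provider type $A$; the subtyping chain $A \leq B \leq C$ then pins down the matching connectives in $B$ and $C$, and inverting the two given derivations $\dsync{B}{C}{\hat{D}}$ and $\dsync{A}{C}{\hat{E}}$ supplies the sub-derivations I need.

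For the purely structural connectives ($1$, $\tensor{}{}$, $\loli{}{}$, $\intch{}$, $\extch{}$, and the linear shifts $\upll{}$, $\downll{}$) the constraint $\hat{D}$ is threaded through unchanged. Here I would read off the continuation triples $(A_i, C_i, \hat{D})$ and place them in $\mathcal{R}$, using the intermediate $B_i$ obtained by inverting the $\hat{D}$-derivation as witness and the corresponding continuation of the $\hat{E}$-derivation as the $\hat{E}$-witness; the subtyping premises $A_i \leq B_i \leq C_i$ come from inverting $A \leq B \leq C$. For internal and external choice the only extra bookkeeping is that the label sets are nested (the provider labels shrink for $\oplus$, and dually the client labels are common to all three for $\&$), so the premises demanded by the matching rule are exactly a subset of those available. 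Each such recursive appeal sits under one rule application, so the use of coinduction is guarded and legitimate.

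The interesting cases are the shared shifts, where the constraint is reset and checked. For $\upls{}$ --- which under the same-modality hypothesis forces $A$, $B$, $C$ all shared and hence, by inverting $S\Lupls$ on the first hypothesis, $\hat{D} = \top$ --- inverting $S\Lupls$ on the $\hat{E}$-derivation of $\dsync{\upls{A_0}}{\upls{C_0}}{\hat{E}}$ yields $\dsyncl{A_0}{C_0}{\upls{A_0}}$ verbatim, which is exactly the premise the matching instance of $S\Lupls$ requires; I can therefore discharge this case directly from the third hypothesis, without even appealing to $\mathcal{R}$. The delicate case is the shared downshift. Inverting the third hypothesis supplies the continuation premise with reset constraint $\top$, so what remains is the side condition $\downsl{A_0} \leq \hat{D}$. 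When the intermediate type $B$ performs the same shared downshift, I obtain $\downsl{B_0} \leq \hat{D}$ by inverting $S\Ldownsl$ (or $S\Ldownsl\Ldownll$) on the first hypothesis, and since $A_0 \leq B_0$ gives $\downsl{A_0} \leq \downsl{B_0}$, transitivity closes the gap.

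I expect the main obstacle to be precisely this downshift case when the shift structure of $A$ and $B$ diverges: the chain $A \leq B \leq C$ permits $A = \downsl{A_0}$ while $B = \downll{B_0}$ and $C = \downll{C_0}$, so that the provider genuinely releases to the shared layer while the intermediate and client views treat the shift linearly. Then the first hypothesis is concluded by $S\Ldownll$, which places no bound on $\hat{D}$, so the transitivity argument above is unavailable and the side condition $\downsl{A_0} \leq \hat{D}$ must be recovered otherwise --- leaning on the bound $\downsl{A_0} \leq \hat{E}$ extracted from the third hypothesis (via $S\Ldownsl\Ldownll$) together with the relationship between $\hat{D}$ and $\hat{E}$ that the threading of constraints guarantees, namely that a constraint reaching a shared downshift is always a genuine acquired type recorded by a preceding $S\Lupls$ rather than $\bot$. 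Pinning down this interaction between the two derivations and the recorded constraint is the crux of the argument; once the coinductive scaffolding and the label-set inclusions are in place, the remaining cases are routine.
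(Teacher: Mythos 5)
Your overall strategy coincides with the paper's: the relation $\mathcal{R}$ you propose is exactly the paper's $\dstxt_{\Downarrow}$ (the extra conjunct $\exists\hat{E}.\,\dsync{A}{C}{\hat{E}}$ included), the proof is by exhibiting $F$-consistency of its union with $\dstxt$, the case split is structural on the shift/connective, the structural connectives thread $\hat{D}$ through unchanged, the $\upls$ case is discharged verbatim from the third hypothesis (which is the only place that hypothesis is needed), and the aligned $\downsl$ subcase is closed by transitivity $\downsl{A_0}\leq\downsl{B_0}\leq\hat{D}$. Up to that point you reproduce the paper's argument step for step.

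The one place you stop short --- the subcase $A=\downsl{A_0}$, $B=\downll{B_0}$, $C=\downll{C_0}$ --- is a genuine gap in your proposal, and you should be aware that the repair you sketch does not go through. The invariant you want (``a constraint reaching a shared downshift is never $\bot$'') is false for the judgment as literally defined: take $A=\downsl{\upls{1}}$, $B=C=\downll{\upll{1}}$, $\hat{D}=\bot$, $\hat{E}=\top$. Then $A\leq B\leq C$ (all linear), $\dsync{B}{C}{\bot}$ holds by $S\Ldownll$, $S\Lupll$, $S1$ (none of which inspect $\hat{D}$), and $\dsync{A}{C}{\top}$ holds by $S\Ldownsl\Ldownll$; but $\dsync{A}{C}{\bot}$ fails, since the only applicable rule $S\Ldownsl\Ldownll$ demands $\downsl{\upls{1}}\leq\bot$. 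So no amount of cleverness in relating $\hat{D}$ to $\hat{E}$ will close this subcase --- the lemma as stated is refuted by it. You should not feel too bad: the paper's own proof presents only the aligned subcase and asserts ``the other two cases are similar,'' which is precisely where the difficulty you identified is buried; your instinct that this is the crux is correct, and your write-up is in that respect more honest than the published one. But as a proof attempt it is incomplete exactly there, and completing it would require either restricting the statement (e.g.\ to constraints $\hat{D}$ with $\hat{D}\neq\bot$, or to $A$ and $B$ sharing the same shift structure) or strengthening the hypotheses so that the bound $\downsl{A_0}\leq\hat{D}$ is actually derivable.
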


\begin{lem}
  If $\dsyncl{A}{B}{\hat{C}}$ and $\hat{D} \leq \hat{C}$, then $\dsyncl{A}{B}{\hat{D}}$ for some $\Lo{A}, \Lo{B}, \hat{C},$ and $\hat{D}$.
  \label{lem:dsync-smaller-hat}
\end{lem}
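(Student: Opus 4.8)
The plan is to exploit that the subsynchronizing judgment is defined \emph{coinductively} and argue by coinduction on the derivation of $\dsyncl{A}{B}{\hat C}$, threading the tightened constraint $\hat D$ through the structure (the quantifier phrasing ``for some $\Lo A,\Lo B,\hat C,\hat D$'' I read as a universal closure over these metavariables). Formally I would take as candidate post-fixed point the relation
\[
  R \;=\; \{(\Lo A,\Lo B,\hat D) \mid \hat D \leq \hat C \text{ and } \dsyncl{A}{B}{\hat C} \text{ for some } \hat C\}
\]
closed up with the already-established subsynchronizing facts on all \emph{shared} judgments, and show $R$ is backward closed under the rules of the judgment. Since $(\Lo A,\Lo B,\hat D)\in R$ by hypothesis, the coinduction principle then yields $\dsyncl{A}{B}{\hat D}$. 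As is standard for equirecursive, contractive types this is just rule coinduction on the given derivation, with the statement of the lemma available as coinductive hypothesis at every coinductive premise.

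The case analysis is on the last rule deriving $\dsyncl{A}{B}{\hat C}$, which, because $\Lo A$ and $\Lo B$ are linear, must be one of $S1$, $S{\otimes}$, $S{\multimap}$, $S{\oplus}$, $S{\&}$, $S\Lupll$, $S\Ldownll$, $S\Ldownsl$, or $S\Ldownsl\Ldownll$. For the first seven the constraint is merely \emph{threaded} unchanged into the (again linear) premises; in each such case I reapply the same rule and discharge every coinductive premise $\dsyncl{\cdot}{\cdot}{\hat C}$ by the coinductive hypothesis using $\hat D \leq \hat C$, while the side conditions on labels (for $S{\oplus}$, $S{\&}$) and on subtyping of the transmitted components are literally untouched. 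The unit case $S1$ is immediate since the constraint plays no role there.

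The interesting cases are the two downshift rules $S\Ldownsl$ and $S\Ldownsl\Ldownll$, where the provider type is $\downsl A$ and the constraint is \emph{consumed}. Here the proof is structurally clean: the coinductive premise is a judgment at the fixed constraint $\top$ (namely $\dsyncs{A}{A'}{\top}$, respectively $\dsync{\So A}{\Lo{A'}}{\top}$), which is completely insensitive to the replacement of $\hat C$ by $\hat D$ and so transports verbatim from the given derivation; the only obligation that genuinely changes is the side condition, which read $\downsl A \leq \hat C$ and must now be re-established as $\downsl A \leq \hat D$. This side condition is precisely the main obstacle: it does \emph{not} follow from $\downsl A \leq \hat C$ and $\hat D \leq \hat C$ by transitivity of $\leq$ alone, so the argument cannot be closed by the constraint ordering in isolation. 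I expect to discharge it using the additional information carried by a legitimate tightening of the constraint — that $\hat D$ still dominates every release point reachable within the current linear phase — which is exactly the property secured when $\hat D$ arises as a greatest lower bound of constraints of matching provider/client types, as in \autoref{lem:subtype-meet}. Accordingly, the bulk of the care will go into stating this invariant strongly enough that it survives the threading cases yet still re-derives $\downsl A \leq \hat D$ in the two downshift cases.
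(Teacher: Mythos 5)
Your overall strategy---exhibiting a candidate relation that threads the tightened constraint through the linear rules and proving it $F$-consistent---is the right shape and matches how the paper argues its other subsynchronization lemmas. You have also put your finger on exactly the right pressure point: in the $S\Ldownsl$ and $S\Ldownsl\Ldownll$ cases the side condition $\downsl{A} \leq \hat{C}$ must be re-established as $\downsl{A} \leq \hat{D}$, and this does not follow from $\hat{D} \leq \hat{C}$ by transitivity. The problem is that your proposal stops there. You defer the obligation to an invariant (``$\hat{D}$ still dominates every release point reachable within the current linear phase'') that is simply not available from the lemma's hypotheses: the statement quantifies over an \emph{arbitrary} $\hat{D} \leq \hat{C}$, and taking $\hat{C} = \top$ with $\hat{D}$ a shared type unrelated to the release type (or $\hat{D} = \bot$ when the derivation contains any release at all) makes the required side condition false while both hypotheses hold. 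So no strengthening of the coinductive invariant can close the case; the missing information has to enter as an additional hypothesis. The honest conclusion is that the lemma is only sound when $\hat{D}$ is itself a constraint under which the pair already subsynchronizes---for instance $\hat{D} = \hat{C} \land \hat{C'}$ with $\dsyncl{A}{B}{\hat{C'}}$, which is precisely what \autoref{lem:dsync-meet} supplies, and which is the form in which the fact is actually used in the preservation proof (see the \ref{dyn:fwdll} case).

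For what it is worth, you have not missed a trick that the paper possesses: the proof the paper gives for this lemma is a verbatim duplicate of the proof of \autoref{lem:dsync-smaller}---it defines $\dstxt_{\Downarrow}$ in terms of lowering the \emph{provider type} and never uses the hypothesis $\hat{D} \leq \hat{C}$ at all---so it does not establish the stated claim either. Your diagnosis of where the argument must break is therefore more informative than the paper's own treatment. The constructive next step is to reformulate the lemma with the meet-shaped hypothesis and verify that \autoref{lem:cfg-stable}, the one place the lemma is invoked, can be made to supply it.
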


\begin{lem}
  If $\dsyncl{A}{B}{\hat{C}}$ and $\dsyncl{A}{B}{\hat{D}}$, ${\dsyncl{A}{B}{\hat{C} \land \hat{D}}}$ for some $\Lo{A}, \Lo{B}, \hat{C},$ and $\hat{D}$.
  \label{lem:dsync-meet}
\end{lem}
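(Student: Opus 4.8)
The plan is to argue coinductively, since the subsynchronizing judgment is the greatest fixed point of its rule functional. I would introduce a candidate relation whose first component is the subsynchronizing relation itself and whose second component collects exactly the triples the lemma asks us to justify:
$$R \;=\; \{(A,B,\hat{E}) : A \text{ and } B \text{ are subsynchronizing with constraint } \hat{E}\} \;\cup\; \{(A,B,\hat{C}\land\hat{D}) : \dsyncl{A}{B}{\hat{C}} \text{ and } \dsyncl{A}{B}{\hat{D}}\}.$$
It then suffices to show that every triple in $R$ is the conclusion of a rule all of whose premises again lie in $R$; the coinduction principle delivers $R \subseteq {\leq}$-subsynchronizing, so every $(A,B,\hat{C}\land\hat{D})$ in the second component is subsynchronizing, which is the claim. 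The first component is closed by definition, so all the work concerns the second. Crucially, the applicable rule is determined by the shapes of the provider type $A$ and client type $B$ and never by the constraint, so the two given derivations $\dsyncl{A}{B}{\hat{C}}$ and $\dsyncl{A}{B}{\hat{D}}$ decompose by the \emph{same} rule; I would invert them simultaneously and rebuild a step carrying $\hat{C}\land\hat{D}$.

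The decisive observation is that the constraint is consulted only in the two downshift rules $S\Ldownsl$ and $S\Ldownsl\Ldownll$, through the side condition $\downsl{A}\leq\hat{D}$; every other rule either ignores it ($S1$) or passes it unchanged to the linear continuation premises ($S{\otimes}$, $S{\multimap}$, $S{\oplus}$, $S{\&}$, $S\Lupll$, $S\Ldownll$). For the pass-through rules the two continuation derivations carry $\hat{C}$ and $\hat{D}$ over identical continuation types, so the corresponding continuation triple with $\hat{C}\land\hat{D}$ again lies in the second component of $R$ (for $S{\oplus}$ and $S{\&}$ this holds branchwise over the common labels), and $S1$ is immediate. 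For the two downshift rules the shared premise $\dsyncs{A}{A'}{\top}$ (respectively the mixed premise with constraint $\top$) is literally identical in both derivations and is already subsynchronizing, hence lies in the first component of $R$; and the two side conditions $\downsl{A}\leq\hat{C}$ and $\downsl{A}\leq\hat{D}$ exhibit $\downsl{A}$ as a common lower bound of $\hat{C}$ and $\hat{D}$, so $\downsl{A}\leq\hat{C}\land\hat{D}$ by the universal property of the meet. Thus the reconstructed downshift step is valid with constraint $\hat{C}\land\hat{D}$ and all its premises lie in $R$, completing the consistency check.

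A sanity check guiding the argument is that the two derivations can differ only up to the first release: once an $S\Ldownsl$ or $S\Ldownsl\Ldownll$ step is taken the continuation constraint is reset to $\top$ in both, after which the two derivations coincide verbatim. Genuine combination therefore occurs only along the initial linear segment, which is precisely where $\hat{C}$ and $\hat{D}$ are propagated but not yet inspected; this also explains why the second component of $R$ only needs linear triples. I expect the step requiring the most care to be the meet reasoning itself. I must ensure that $\hat{C}\land\hat{D}$ is a genuine greatest lower bound in the constraint order $\bot \leq \So{A} \leq \top$, so that the universal property invoked above is actually available, and I must confirm that the comparison $\downsl{A}\leq\hat{D}$ respects this meet. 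Since $\leq$ is only a partial order on session types (the preceding remark), I would verify that the relevant meets exist, relying on $\bot$ and $\top$ as bottom and top and on empty/saturated choices to furnish common lower bounds; alternatively, I would restrict attention to the constraints actually produced by the rules, which are always of the form $\upls{(\cdot)}$ or $\top$ and for which $\downsl{A}$ is a common lower bound at each release point, making the required meet and its universal property concrete.
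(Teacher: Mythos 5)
Your proposal is correct and follows essentially the same route as the paper: the paper likewise forms the candidate set $\dstxt \cup \{(\Lo{A},\Lo{B},\hat{C}\land\hat{D}) : (\Lo{A},\Lo{B},\hat{C}),(\Lo{A},\Lo{B},\hat{D})\in\dstxt\}$, checks $F$-consistency by simultaneous inversion (pass-through rules feed the continuation triples back into the meet component; the $\Ldownsl$ case uses that $\So{A}$ is a common lower bound of $\hat{C}$ and $\hat{D}$ together with the separate lemma that $\land$ is a greatest lower bound), and notes that the $\Lupls$ case is excluded by linearity of $A$. Your direct appeal to the universal property of the meet even streamlines the paper's explicit $\hat{C}\land\hat{D}=\bot$ contradiction subcase.
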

Note that the meet of two constraints $\hat{C} \land \hat{D}$ is defined in \autoref{lem:subtype-meet}.

\subsection{Theorems}
The preservation theorem, or session fidelity, guarantees that well-typed configurations remain well-typed. In particular, this means that processes will always adhere to the protocol denoted by the session type.
\begin{thm}[Preservation]
If $\JCb{\Gamma}{\Lambda}{\Theta}{\Delta}$ for some $\Lambda, \Theta, \Gamma,$ and $\Delta$, and $\Lambda;\Theta \rightarrow \Lambda';\Theta'$ for some $\Lambda';\Theta'$, then $\JCb{\Gamma'}{\Lambda'}{\Theta'}{\Delta}$ where $\Gamma' \preceq \Gamma$.
\end{thm}
Here, $\Gamma' \preceq \Gamma$ captures the idea that the configuration can gain additional shared processes and that the types of shared channels can become smaller. For example, if a process spawns an additional shared process, then the configuration will gain an additional channel in $\Gamma$ and if a shared channel is released to a smaller type, the type of the shared channel in $\Gamma$ can become smaller. Note that although it is indeed true that linear processes can be spawned, it will never appear in $\Delta$ since the linear channel that the newly spawned process offers must be consumed by the process that spawned the channel, meaning $\Delta$ is unchanged.
\begin{proof}
  By induction on the dynamics and constructing a well-typed (and therefore well-formed) configuration for each case. We present a simple
  case below; a complete proof is presented in~\autoref{app:preservation-proof}.
  \begin{case}
    \ref{dyn:fwdls}
    $$\procl{a}{\fwdls{a}{b}} \to \connect{a}{b}$$
    Let $\Psi_a = \procl{a}{\fwdls{a}{b}}$ and $\Psi_a' = \connect{a}{b}$.
    Let $\Theta = \Theta_1, \Psi_a, \Theta_2$. Then by well-formedness, $\Lambda = \unavail{a}, \Lambda_1$.
    \begin{align*}
        &\JCb{\Gamma}{\Lambda}{\Theta_1, \Psi_a, \Theta_2}{\Delta}	\tag{assumption} \\
        &\JCs{\Gamma}{\Lambda}{\Gamma} \quad \JCl{\Gamma}{\Theta_1, \Psi_a, \Theta_2}{\Delta}	\tag{by inversion on $\Omega$} \\
        &\JCl{\Gamma}{\procl{a}{\fwdls{a}{b}}, \Theta_2}{\Lt{a}{A}, \Delta_p} \tag{by Lemma~\ref{lem:cfg-peel} and expanding $\Psi_a$} \\
        &\JCl{\Gamma}{\Theta_2}{\Delta_p} \quad \JTl{\Gamma}{\cdot}{\fwdls{a}{b}}{a}{A'} \tag{by inversion on $\Theta 3$} \\
        &\Sto{b}{\hat{B}} \in \Gamma \quad \hat{B} \leq \Lo{A'} \tag{by inversion on $ID_{\scriptscriptstyle {LS}}$} \\
        &\hat{B} \leq \Lo{A} \tag{by transitivity of $\leq$} \\
        &\JCpc{\Gamma}{a}{b}{\Lo{A}}{\Theta_2}{\Delta_p} \tag{by $\Theta2$} \\
        &\JCl{\Gamma}{\Theta_1, \Psi_a', \Theta_2}{\Delta} \tag{by Lemma~\ref{lem:cfg-subs}} \\
        &\JCb{\Gamma}{\Lambda}{\Theta_1, \Psi_a', \Theta_2}{\Delta}	\tag{by $\Omega$}
    \end{align*}
    The well-formedness conditions are maintained because only $\Psi_a \in \Theta$ was replaced by $\Psi_a'$.
  \end{case}
  Many of the dynamics involving the standard logical connectives $({\otimes}, {\multimap}, {\oplus},$ and ${\&})$ follow a similar pattern
  and are fairly simple. However, cases involving the shift connectives $(\Lupls, \Ldownsl, \Lupll, \Ldownll)$ and linear to linear
  forwarding cause more complexities and require further subcase analysis. These cases are presented in detail
  in~\autoref{app:progress-proof}.
\end{proof}

The progress theorem is as in~\cite{Balzer17icfp}, where we only allow configurations to be stuck due to failure of some client to acquire, for example, due to
deadlock.
\begin{defi}
  A shared and linear process term $\text{proc}(a, P)$ is \emph{poised} if $P$ is currently communicating along its providing channel $a$. Poised process terms in $\sillSLeq$ are shown in the table below:
  \begin{center}
    \begin{tabular}{@{}ll@{}}
      Receiving & Sending \\
      \hline
                & $\procl{a}{\close{a}}$ \\
      $\procl{a}{\asgn{\Lo{x}}{\recv{a}}; P}$ & $\procl{a}{\sendl{a}{c}; P}$ \\
      $\procl{a}{\asgn{\So{x}}{\recv{a}}; P}$ & $\procl{a}{\sends{a}{c}; P}$ \\
      $\procl{a}{\casep{a}{\caselistp{l}{P}}}$ & $\procl{a}{a.i; P}$ \\
      $\procl{a}{\asgn{\Lo{x}}{\accl{a}}; P}$ & $\procl{a}{\asgn{\Lo{x}}{\detl{a}}; P}$ \\
      $\procs{a}{\asgn{\Lo{x}}{\accs{a}}; P}$ & $\procs{a}{\asgn{\Lo{x}}{\dets{a}}; P}$
    \end{tabular}
  \end{center}
  In particular, we say that a configuration is poised if all of its $\text{proc}(-, -)$ members are poised.
\end{defi}

\begin{thm}[Progress]
  If $\JCb{\Gamma}{\Lambda}{\Theta}{\Delta}$ then either:
  \begin{enumerate}
    \item $\Lambda;\Theta \rightarrow \Lambda';\Theta$ for some $\Lambda'$ or
    \item $\Lambda$ is poised and one of:
      \begin{enumerate}
        \item $\Lambda;\Theta \rightarrow \Lambda';\Theta'$ or
        \item $\Theta$ is poised or
        \item a linear process in $\Theta$ is stuck and therefore unable to acquire
      \end{enumerate}
  \end{enumerate}
\end{thm}
\begin{proof}
  For details, see \autoref{app:progress-proof}. We first show that either the shared configuration $\Lambda$ steps $(\Lambda \to \Lambda'$
  for some $\Lambda')$ or that $\Lambda$ is poised by induction on the derivation of $\JCs{\Gamma}{\Lambda}{\Gamma}$. If $\Lambda$ is
  poised, then we proceed by induction on the derivation of $\JCl{\Gamma}{\Theta}{\Delta}$ to show one of:
  \begin{enumerate}[label=(\alph*)]
    \item $\Lambda;\Theta \to \Lambda';\Theta'$ for some $\Lambda'$ and $\Theta'$
    \item $\Theta$ poised
    \item some $\Psi \in \Theta$ is stuck
  \end{enumerate}
\end{proof}

\begin{rem}
Another paper~\cite{Balzer19esop} introduces additional static restrictions to allow a stronger and more common notion of progress, which are orthogonal to our results. We expect that adopting this extension to our work would give the usual notion of progress with deadlock freedom.
\end{rem}

\section{Related Work}
\label{sec:related}
Our paper serves as an extension to the manifest sharing system defined in \cite{Balzer17icfp} by introducing a notion of subtyping to the
system which allows us to statically relax the equi-synchronizing constraint. Early glimpses of subtyping can be seen in the previous system
with the introduction of $\bot$ and $\top$ as the minimal and maximal constraints, which happened to be compatible with our subtyping
relation.

Subtyping for session types was first proposed by Gay and Hole~\cite{Gay05acta}, which was done in the classical setting for the linear
connectives except for $\Lupll$ and $\Ldownll$. 
Subtyping for the intuitionistic setting that we work on was also formalized by \cite{Acay16itrs}, which worked out subtyping for the linear
connectives except for $\Lupll$ and $\Ldownll$. That paper also introduces subtyping for intersection and union types, which are orthogonal
and thus compatible to the subtyping in our system. Neither of these papers investigates modalities or sharing, which are two of
our contributions to the understanding of subtyping. We believe that with a well-defined translation of modal shifts and the sharing semantics to the
classical setting, the subtyping on the shifts could be defined in the classical setting as well.

There have also been many recent developments in subtyping in the context of multiparty session types
\cite{Chen14ppdp,Chen17lmcs,Ghilezan19lamp,Ghilezan21popl}, which are a different class of type systems that describe protocols between an
arbitrary number of participants from a neutral global point of view. These systems are quite different in how they interpret subtyping,
since the subtyping we work with are at the channel level, where two communicating processes can safely disagree on the protocol. This
creates a fairly simple definition where subtyping is tightly coupled with the individual connectives.
However, since global types in multiparty session types can be projected to a binary setting, there may be non-obvious connections that
could be drawn. Thus, understanding the relation of our subtyping system to these systems is a challenge and an interesting item for future work.

\section{Conclusion}
\label{sec:conclusion}
We propose a subtyping extension to a message passing concurrency programming language introduced in previous work~\cite{Balzer17icfp} and showed examples highlighting the expressiveness that this new system provides. Throughout the paper, we follow two important principles, \emph{substitutability} and \emph{ignorance is bliss}, which gave a rich type system that in particular allows \emph{phases} (in a shared setting) to be manifest in the type.

One immediate application of shared subtyping is that combined with refinement types~\cite{Das20concur,Das21csf}, it can encode finer specifications of protocols. For example in the auction scenario, we can statically show that each client that does not win a bid gets refunded precisely the exact amount of money it bid. Without shared to linear subtyping, specifications of shared communication across multiple acquire-release cycles were not possible.

A future work in a more theoretical platform is to extend the setting to adjoint logic~\cite{Pruiksma19places}, which provides a more general framework of reasoning about modal shifts in a message passing system. In particular, we found that affine session types, where contraction (aliasing) is rejected, have immediate applications.

\paragraph{Acknowledgements}
We would like to thank the anonymous reviewers for feedback on the initially submitted version of this paper in COORDINATION 2021.
Supported by NSF Grant No. CCF-1718267 ``Enriching Session Types for Practical Concurrent Programming''.

\bibliographystyle{alpha}
\bibliography{fp,cites}

\clearpage

\appendix
\section{Meet Operator}
$\hat{A} \land \hat{B}$ is defined coinductively from the structure of its arguments. Note that there are many cases where these rules do not
apply -- in that case the result of the meet is $\bot$.

\begin{small}
  \begin{gather*}
    1 \land 1 \to 1 \\
    \tensor{A}{A'} \land \tensor{B}{B'} \to (\Lo{A} \land \Lo{B}) \otimes (\Lo{A'} \land \Lo{B'}) \\
    \loli{A}{A'} \land \loli{B}{B'} \to (\Lo{A} \land \Lo{B}) \multimap (\Lo{A'} \land \Lo{B'}) \\
    \extch{\caselist{l}{A}, \caselist{m}{B}} \land \extch{\caselist{l}{A'}, \caselist{n}{C}} \to
    \extch{\overline{l:(\Lo{A} \land \Lo{A'})}, \caselist{m}{B}, \caselist{n}{C}} \\
    \intch{\caselist{l}{A}, \caselist{m}{B}} \land \intch{\caselist{l}{A'}, \caselist{n}{C}}
    \to \intch{\overline{l:(\Lo{A} \land \Lo{A'})}} \tag{$\overline{l}$ not empty} \\
    \upls{A} \land \upls{B} \to \Lupls{(\Lo{A} \land \Lo{B})} \\
    \upls{A} \land \upll{B} \to \Lupls{(\Lo{A} \land \So{B})} \\
    \upll{A} \land \upls{B} \to \Lupls{(\So{A} \land \Lo{B})} \\
    \upll{A} \land \upll{B} \to \Lupll{(\So{A} \land \So{B})} \\
    \downsl{A} \land \downsl{B} \to \Ldownsl{(\So{A} \land \So{B})} \\
    \downsl{A} \land \downll{B} \to \Ldownsl{(\So{A} \land \Lo{B})} \\
    \downll{A} \land \downsl{B} \to \Ldownsl{(\Lo{A} \land \So{B})} \\
    \downll{A} \land \downll{B} \to \Ldownll{(\Lo{A} \land \Lo{B})}
  \end{gather*}
\end{small}%

Intuitively, the idea with this construction is that on external choices, we take the union of the labels on both sides whereas on internal choices, we take the
intersection of the labels on both sides. Since we do not allow the nullary internal choice $\intch{}$ in the language, we require that the meet between two
internal choices to be non-empty, that is, they must share at least one label. Otherwise, the meet construction should produce a $\bot$.

\begin{lem}
  $\hat{A} \land \hat{B}$ is the greatest lower bound between $\hat{A}$ and $\hat{B}$ with respect to subtyping.
  \label{lem:subtype-meet}
\end{lem}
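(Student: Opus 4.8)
The plan is to establish the two halves of the universal property of a greatest lower bound separately: (i) that $\hat{A}\land\hat{B}$ is a \emph{lower bound}, i.e.\ $\hat{A}\land\hat{B}\leq\hat{A}$ and $\hat{A}\land\hat{B}\leq\hat{B}$; and (ii) that it is the \emph{greatest} one, i.e.\ every common lower bound $\hat{C}\leq\hat{A}$, $\hat{C}\leq\hat{B}$ satisfies $\hat{C}\leq\hat{A}\land\hat{B}$. Uniqueness of the glb then follows from antisymmetry of $\leq$ (the partial-order remark). Since both $\leq$ and $\land$ are defined coinductively, each subtyping goal is discharged by coinduction: to prove $U\leq V$ it suffices to exhibit a relation $R$ containing $(U,V)$ such that every pair in $R$ is the conclusion of a subtyping rule whose premises are again pairs in $R$ (or already in $\leq$). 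The extremal constraints are dealt with first and uniformly, using that $\bot$ and $\top$ are least and greatest: $\bot\land\hat{A}=\bot\leq\hat{A}$ and $\top\land\hat{A}=\hat{A}$.

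For (i) I would take $R_\ell=\{\,(\hat{A}\land\hat{B},\ \hat{A})\,\}\cup\{\,(\hat{A}\land\hat{B},\ \hat{B})\,\}$ ranging over all type pairs and check closure under the subtyping rules by case analysis on the defining equation of $\land$. When the operands share a covariant top connective ($1$, $\otimes$, the shifts $\Lupls,\Ldownsl,\Lupll,\Ldownll$, and the two choices) the matching subtyping rule applies with premises of the form $(\Lo{A}\land\Lo{B},\ \Lo{A})$, which lie in $R_\ell$ again. The choices are the interesting covariant cases: at external choice $\&$ the meet takes the \emph{union} of label sets, so it has at least the labels of each operand, exactly what $\leq_\&$ demands of a subtype, while shared branches reduce to $R_\ell$ and extra branches to reflexivity; dually, internal choice $\oplus$ takes the \emph{intersection}, so the meet has no more labels than either operand, matching $\leq_\oplus$. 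When the top connectives disagree, or two internal choices share no label, the meet is $\bot$ and the goal is trivial.

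For (ii) I would dually take $R_g=\{\,(\hat{C},\ \hat{A}\land\hat{B})\mid \hat{C}\leq\hat{A},\ \hat{C}\leq\hat{B}\,\}$ and show closure by simultaneously inverting the derivations $\hat{C}\leq\hat{A}$ and $\hat{C}\leq\hat{B}$. This uses a \emph{rigidity} property of $\leq$, read off the rules: if $U\leq V$ with neither side $\bot$ nor $\top$, then $U$ and $V$ have the same top connective. Hence $\hat{C}$ shares its connective with both $\hat{A}$ and $\hat{B}$, forcing them to agree and pinning down the applicable meet equation, with premises again in $R_g$. The label bookkeeping now runs the other way: at $\oplus$ the labels of $\hat{C}$ are contained in those of $\hat{A}$ and of $\hat{B}$, hence in their intersection; at $\&$ they contain both, hence their union. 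Rigidity does the real work in the $\bot$ cases: if $\hat{A}\land\hat{B}=\bot$ because the connectives differ or two internal choices have disjoint labels, rigidity together with the prohibition on the empty choice $\intch{}$ forces any common lower bound to be $\bot$.

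The main obstacle is the single mixed-variance connective $\multimap$, whose domain is \emph{contravariant}. In the domain position the two obligations flip: the lower-bound half asks there for $\Lo{A}\leq\Lo{A}\land\Lo{B}$ and the greatest half for its dual, so to be a genuine bound the domain clause of $\loli{A}{A'}\land\loli{B}{B'}$ must behave like a \emph{join} rather than a meet. Reconciling this with the defining clause is the delicate step: one must either argue that on the fragment of types over which $\land$ is actually invoked the two domains are related (so that $\land$ and join coincide on them) or else dualize the domain clause. I expect the bulk of the effort, and the heaviest use of the rigidity observation and of the coinductive invariant, to sit here; the covariant connectives, the shifts, and the extremal $\bot,\top$ cases are routine once $R_\ell$ and $R_g$ are in place, with a secondary check that the intersection clause never manufactures an ill-formed $\intch{}$.
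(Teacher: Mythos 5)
Your strategy coincides with the paper's---a coinductive argument over the construction rules of $\land$, with the two choice connectives as the cases where the label bookkeeping matters---but the paper's proof is a two-sentence sketch, and your relations $R_\ell$ and $R_g$ together with the rigidity observation supply exactly the scaffolding it omits. Your handling of $1$, $\otimes$, the four shifts, $\oplus$, $\&$, and the extremal constraints $\bot$ and $\top$ is the intended argument.

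The obstacle you isolate at $\multimap$ is real, and it is a problem with the paper's definition rather than with your proof. The clause $\loli{A}{A'} \land \loli{B}{B'} \to (\Lo{A} \land \Lo{B}) \multimap (\Lo{A'} \land \Lo{B'})$ puts a meet in the contravariant domain position, but for $\loli{D}{D'}$ to lie below both $\loli{A}{A'}$ and $\loli{B}{B'}$ under ${\leq}_\multimap$ one needs $\Lo{A} \leq \Lo{D}$ and $\Lo{B} \leq \Lo{D}$: the domain of a lower bound must be an \emph{upper} bound of the two domains, and for greatestness the least such. With, say, $\Lo{A} = \intch{a:1}$ and $\Lo{B} = \intch{b:1}$ the meet of the domains degenerates to $\bot$ and the resulting type is not below either operand, so the lemma as stated fails on this clause. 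Your two suggested repairs---dualizing the domain clause to a join, or arguing that on the constraints actually fed to $\land$ by Lemma~\ref{lem:dsync-meet} the relevant domains are always comparable so that meet and join coincide---are the right options; the paper's proof, which discusses only the choices, silently skips this case. Apart from leaving that step open, which you flag honestly, your proof is complete and in fact more rigorous than the one in the paper.
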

\begin{proof}
  By coinduction on the construction rules. The interesting part is on the external and internal choices; the construction tightly matches
  the appropriate direction of subtyping in the sense that the set of labels grows on external choices and shrinks on internal choices.
\end{proof}

\section{Proofs of Lemmas}
\label{app:lemma-proofs}
\begin{lem}
  If $\JCl{\Gamma}{\Psi, \Theta}{\Delta}$, then $\JCl{\Gamma}{\Theta}{\Delta'}$ for some $\Delta'$. \\
  More generally, if $\JCl{\Gamma}{\Theta_1, \Theta_2}{\Delta}$, then $\JCl{\Gamma}{\Theta_2}{\Delta'}$ for some $\Delta'$.
\end{lem}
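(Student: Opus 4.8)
The plan is to prove the single-predicate statement first by inversion on the configuration typing rule applied at the head, and then to lift it to the general statement by an outer induction on the length of the peeled prefix.

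For the single-predicate case, suppose $\JCl{\Gamma}{\Psi, \Theta}{\Delta}$. Since $\Psi, \Theta$ is nonempty, its derivation must end in either $\Theta 2$ or $\Theta 3$, the only two rules that type a head predicate followed by a tail; which one applies is determined by whether $\Psi$ is a $\connect{\cdot}{\cdot}$ predicate or a $\procl{\cdot}{\cdot}$ predicate. In both rules the head is exactly $\Psi$ and the tail is exactly $\Theta$, and crucially both carry a premise that types the tail on its own: rule $\Theta 2$ has premise $\JCl{\Gamma}{\Theta}{\Delta'}$ with $\Delta = \Lt{a}{A}, \Delta'$, while rule $\Theta 3$ has premise $\JCl{\Gamma}{\Theta}{\Delta_a, \Delta'}$ with $\Delta = \Lt{a}{A}, \Delta'$. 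In either case, inverting the rule directly yields a derivation of $\JCl{\Gamma}{\Theta}{\Delta''}$ for the appropriate tail context $\Delta''$, and we take this $\Delta''$ as the witness. Observe that $\Delta''$ generally differs from $\Delta$: the head's offered channel $\Lt{a}{A}$ disappears, and in the $\Theta 3$ case the tail additionally supplies the context $\Delta_a$ that the peeled process consumed. This is precisely why the statement asserts only that \emph{some} $\Delta'$ exists rather than preserving $\Delta$.

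For the general statement, I would induct on the number of predicates in $\Theta_1$. If $\Theta_1 = \cdot$, then $\Theta_1, \Theta_2$ is just $\Theta_2$ and we take $\Delta' = \Delta$. Otherwise write $\Theta_1 = \Psi, \Theta_1'$, so that $\Theta_1, \Theta_2 = \Psi, (\Theta_1', \Theta_2)$; applying the single-predicate case peels $\Psi$ to obtain $\JCl{\Gamma}{\Theta_1', \Theta_2}{\Delta''}$ for some $\Delta''$, and the induction hypothesis applied to this shorter prefix yields $\JCl{\Gamma}{\Theta_2}{\Delta'}$ for some $\Delta'$, as required.

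I expect no serious obstacle here: both configuration rules that can type a nonempty $\Theta$ are head rules whose premises already type the tail, so inversion does essentially all the work, and the existential quantification over $\Delta'$ absorbs the change in the offered context. The only point requiring care is to confirm that inversion is exhaustive, i.e. that every nonempty linear configuration is typed by $\Theta 2$ or $\Theta 3$; this follows from the grammar of $\Theta$, whose head is always either a $\connect{\cdot}{\cdot}$ or a $\procl{\cdot}{\cdot}$ predicate, together with the fact that these are the only two rules producing a nonempty configuration.
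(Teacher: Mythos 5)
Your proposal is correct and matches the paper's own argument: the paper likewise inverts the derivation (the only applicable rules being $\Theta 2$ and $\Theta 3$, both of which carry a premise typing the tail) and then iterates this peeling over every predicate in $\Theta_1$ for the general statement. Your additional observations about why the context must be existentially quantified and why inversion is exhaustive are accurate refinements of the same proof.
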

\begin{proof}
  For the first part, by case analysis on the derivation of $\JCl{\Gamma}{\Psi, \Theta}{\Delta}$.
  In both cases ($\Theta2$ and $\Theta3$), we directly see that $\JCl{\Gamma}{\Theta}{\Delta'}$ for some $\Delta'$. \\
  For the second part, we can repeatedly apply the first part sequentially for every $\Psi \in \Theta_1$.
\end{proof}

\begin{lem}
  Given a well-formed $\Lambda;\Theta$, $\forall \procs{a}{-} \in \Lambda, \Psi_a \notin \Theta$
\end{lem}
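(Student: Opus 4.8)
The plan is to argue by contradiction, chaining together the three clauses of the well-formedness definition. Suppose, for the sake of contradiction, that there is a channel name $a$ with $\procs{a}{-} \in \Lambda$ and $\Psi_a \in \Theta$ at the same time. Recall that $\Psi_a$ is a metavariable ranging over either $\procl{a}{P}$ or $\connect{a}{b}$, so this assumption covers both forms of linear process predicate offering $a$ uniformly, and the argument below does not need to case-split on which form $\Psi_a$ takes.

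First I would invoke the global well-formedness clause for the configuration $\Lambda;\Theta$, namely $\Psi_a \in \Theta \to \unavail{a} \in \Lambda$. Since we have assumed $\Psi_a \in \Theta$, this immediately yields $\unavail{a} \in \Lambda$. At this point $\Lambda$ contains both $\procs{a}{-}$ (by the contradiction hypothesis) and $\unavail{a}$ (just derived), for one and the same channel name $a$.

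Finally I would appeal to the well-formedness of the shared fragment $\Lambda$ on its own, which forbids precisely this situation: for any channel name $a$ it is not the case that both $\procs{a}{-}$ and $\unavail{a}$ occur in $\Lambda$. This is the desired contradiction, so no such $a$ can exist, and we conclude that for every $\procs{a}{-} \in \Lambda$ we indeed have $\Psi_a \notin \Theta$. There is no substantive obstacle here; the lemma is a direct consequence of composing the three well-formedness conditions, and the only care required is to read the $\Lambda$-clause as forbidding the simultaneous presence of a shared-process predicate and an $\unavail{}$ predicate on the same channel, and to recall that $\Psi_a$ subsumes both $\procl{a}{-}$ and $\connect{a}{-}$ so that the single contradiction handles every case.
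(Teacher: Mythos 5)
Your argument is correct and is essentially the paper's own proof: the paper chains the same two well-formedness clauses, just phrased as a direct contrapositive ($\procs{a}{-} \in \Lambda$ implies $\unavail{a} \notin \Lambda$, which implies $\Psi_a \notin \Theta$) rather than as a proof by contradiction. The two presentations are logically identical, and your remark that $\Psi_a$ uniformly covers both $\procl{a}{-}$ and $\connect{a}{-}$ is a correct (if implicit in the paper) observation.
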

\begin{proof}
  By well-formedness of $\Lambda$, $\procs{a}{-} \in \Lambda$ means that $\unavail{a} \notin
  \Lambda$. By the contrapositive of well-formedness of $\Lambda;\Theta$, $\unavail{a} \notin
  \Lambda \implies \Psi_a \notin \Theta$
\end{proof}

\begin{lem}
  If $\JCl{\Gamma}{\Psi_a, \Theta_1, \Psi_b, \Theta_2}{\Lt{a}{A}, \Delta}$ and $\Psi_a$ uses
  $\Lo{b}$, then
  $$\JCl{\Gamma}{\Psi_a, \Psi_b, \Theta_1, \Theta_2}{\Lt{a}{A}, \Delta}$$
\end{lem}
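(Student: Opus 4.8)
The plan is to first strip the head predicate $\Psi_a$, reduce the statement to relocating $\Psi_b$ to the front of the remaining configuration, and prove that relocation by induction on the length of $\Theta_1$. Since $\Psi_a$ uses the \emph{linear} channel $\Lo{b}$ and a connect predicate only ever uses a shared channel, $\Psi_a$ must be a process predicate, so its typing is an instance of $\Theta 3$. Inverting that rule, $\Psi_a$ offers $a$ at some subtype of $A$ using a linear context $\Delta_a$ with $\Lo{b}$ occurring in $\Delta_a$, and the tail is typed as $\JCl{\Gamma}{\Theta_1, \Psi_b, \Theta_2}{\Delta_a, \Delta}$. Two facts are worth recording: $b$ occurs in the output context $\Delta_a, \Delta$ (so it is a root of the tail), and, because $\Psi_a$ already uses $\Lo{b}$, linearity forbids any other predicate --- in particular any predicate of $\Theta_1$ --- from using $\Lo{b}$. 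It then suffices to establish the following and re-apply $\Theta 3$ to $\Psi_a$ afterward, which is legitimate because permuting the tail leaves its offered channels $\Delta_a, \Delta$ unchanged, so $\Delta_a$ is still available to feed $\Psi_a$. The statement to establish is: if $\JCl{\Gamma}{\Theta_1, \Psi_b, \Theta_2}{\Delta'}$ with $b$ occurring in $\Delta'$ and no predicate of $\Theta_1$ using $\Lo{b}$, then $\JCl{\Gamma}{\Psi_b, \Theta_1, \Theta_2}{\Delta'}$.

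I would prove this by induction on $\Theta_1$. The base case $\Theta_1 = \cdot$ is immediate. For $\Theta_1 = \Psi_c, \Theta_1'$, invert the head rule on $\Psi_c$: it offers a channel $c$ that lies in $\Delta'$, using some linear context $\Delta_c$, and the tail satisfies $\JCl{\Gamma}{\Theta_1', \Psi_b, \Theta_2}{\Delta_c, \Delta' \setminus \{c\}}$. Since $\Psi_c$ does not use $\Lo{b}$ we have $b \notin \Delta_c$, and as $b \neq c$ the channel $b$ remains a root of this tail, so the induction hypothesis yields $\JCl{\Gamma}{\Psi_b, \Theta_1', \Theta_2}{\Delta_c, \Delta' \setminus \{c\}}$. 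Now inverting the head $\Psi_b$ (which offers $b$ using some $\Delta_b$, with $b \notin \Delta_c$ forcing $\Delta_b$ to be carved out of $\Delta' \setminus \{c\}$) exposes $\JCl{\Gamma}{\Theta_1', \Theta_2}{\Delta_b, \Delta_c, \Delta' \setminus \{b, c\}}$. From this residual I rebuild in the opposite order: applying $\Psi_c$'s head rule (its used context $\Delta_c$ is supplied by the residual) gives a configuration offering $\Lt{c}{C}, \Delta_b, \Delta' \setminus \{b,c\}$, and then applying $\Psi_b$'s head rule (its used context $\Delta_b$ is now supplied) gives $\JCl{\Gamma}{\Psi_b, \Psi_c, \Theta_1', \Theta_2}{\Lt{b}{B}, \Lt{c}{C}, \Delta' \setminus \{b,c\}}$, which equals $\Delta'$ up to exchange since $b, c$ both occur in $\Delta'$. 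This is exactly the desired conclusion.

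The routine but error-prone part --- and the main obstacle --- is the bookkeeping of how the linear output context is partitioned at each inversion and recombination: one must repeatedly check that $b$ survives as a root (so it can be peeled off and later reintroduced) and that the carved-out contexts $\Delta_b, \Delta_c$ land in the right places under exchange. The conceptual crux that keeps this bookkeeping sound is the single appeal to linearity: because $\Psi_a$ consumes $\Lo{b}$, no predicate between $\Psi_a$ and $\Psi_b$ can mention $\Lo{b}$, which is precisely what lets $\Psi_b$ slide leftward past all of $\Theta_1$ without disturbing any dependency. If one prefers to avoid inlining the transposition, the same induction can be organized around \autoref{lem:cfg-subs}, transposing $\Psi_c$ and $\Psi_b$ within the suffix and substituting the result back into the full configuration; the two presentations are interchangeable, and each case of the head inversion ($\Theta 2$ for connect predicates, $\Theta 3$ for process predicates) goes through identically since a connect predicate contributes an empty linear used-context.
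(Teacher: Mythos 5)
Your proof is correct and follows essentially the same route as the paper's: both rest on the observation that, by linearity, $\Psi_a$ is the unique consumer of $\Lo{b}$, so no predicate in $\Theta_1$ depends on $\Lo{b}$ and $\Psi_b$ can be transposed leftward past all of $\Theta_1$. The paper simply states this as ``repeatedly move $\Psi_b$ to the left,'' whereas you make the induction on $\Theta_1$ and the attendant context bookkeeping explicit.
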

\begin{proof}
  By well-formedness, $\Psi_b$ is the only process in the configuration offering $\Lo{b}$.
  Furthermore by linearity, there can only be one process that use $\Lo{b}$, which is $\Psi_a$ by
  assumption, so $\Lo{b}$ will not be consumed by any processes in $\Theta_1$. Therefore, we can
  repeatedly move $\Psi_b$ to the left in the configuration until it is to the right of $\Psi_a$,
  the unique process using $\Lo{b}$.
\end{proof}

\begin{lem}
  If $\JCl{\Gamma}{\Psi, \Theta}{\Delta}$, $\JCl{\Gamma}{\Theta}{\Delta_p}$, and $\JCl{\Gamma}{\Theta'}{\Delta_p}$, then
  $$\JCl{\Gamma}{\Psi, \Theta'}{\Delta}$$  \\
  More generally, if $\JCl{\Gamma}{\Theta_1, \Theta_2}{\Delta}$, $\JCl{\Gamma}{\Theta_2}{\Delta_p}$, and $\JCl{\Gamma}{\Theta_2'}{\Delta_p}$, then
  $$\JCl{\Gamma}{\Theta_1, \Theta_2}{\Lt{a}{A}, \Delta}$$
\end{lem}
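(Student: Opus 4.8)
The plan is to prove the two statements in sequence: first the single-process version (with $\Psi$), by inversion on the last rule of the configuration typing derivation, and then the general version by induction on $\Theta_1$, reusing the single-process version together with \autoref{lem:cfg-peel}. The guiding intuition is that $\Theta$ contributes to the derivation of $\JCl{\Gamma}{\Psi, \Theta}{\Delta}$ only through a single subderivation that types $\Theta$ against the interface it exposes; every other premise constrains $\Psi$, $\Gamma$, and the exposed interface $\Delta$, and is oblivious to the internal shape of $\Theta$. Consequently, replacing that one subderivation by a derivation of $\JCl{\Gamma}{\Theta'}{\Delta_p}$ with the same interface $\Delta_p$ leaves the rest of the derivation intact.

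For the single-process part I would case split on the last rule, which is determined by whether $\Psi$ is a connect predicate ($\Theta 2$) or a process predicate ($\Theta 3$). In the $\Theta 2$ case $\Psi = \connect{a}{b}$, so $\Delta = \Lt{a}{A}, \Delta'$, and inversion yields $\Sto{b}{\hat{B}} \in \Gamma$, $\So{b} \leq \Lo{A}$, together with the subderivation $\JCl{\Gamma}{\Theta}{\Delta'}$; since a connect predicate uses no linear channels, the interface exposed to it is exactly $\Delta_p = \Delta'$. In the $\Theta 3$ case $\Psi = \procl{a}{P}$, so $\Delta = \Lt{a}{A}, \Delta'$, and inversion yields $\Sto{a}{\hat{A}} \in \Gamma$, $\dsyncl{A'}{A}{\hat{A}}$, $\JTl{\Gamma}{\Delta_a}{P}{a}{A'}$, and the subderivation $\JCl{\Gamma}{\Theta}{\Delta_a, \Delta'}$, so here $\Delta_p = \Delta_a, \Delta'$. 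In both cases I would discard the subderivation on $\Theta$, plug in the hypothesis $\JCl{\Gamma}{\Theta'}{\Delta_p}$ (which supplies exactly this interface), and reapply the same rule with all remaining premises unchanged, obtaining $\JCl{\Gamma}{\Psi, \Theta'}{\Delta}$.

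For the general part I would induct on $\Theta_1$. When $\Theta_1 = \cdot$ the interface exposed by $\Theta_2$ is $\Delta$ itself, so $\Delta_p = \Delta$ and $\JCl{\Gamma}{\Theta_2'}{\Delta_p}$ is already the goal. For the step $\Theta_1 = \Psi, \Theta_1''$, I would peel $\Psi$ off via \autoref{lem:cfg-peel} to expose the interface $\Delta_q$ of $\Theta_1'', \Theta_2$, apply the induction hypothesis to replace $\Theta_2$ by $\Theta_2'$ inside $\Theta_1'', \Theta_2$ (yielding $\JCl{\Gamma}{\Theta_1'', \Theta_2'}{\Delta_q}$), and finally invoke the single-process part to reattach $\Psi$, giving $\JCl{\Gamma}{\Psi, \Theta_1'', \Theta_2'}{\Delta}$.

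The step I expect to be the main obstacle is pinning down the interface $\Delta_p$ coherently, because the types in a configuration's interface are not unique: by \autoref{lem:cfg-bigger} any exposed channel may be weakened to a supertype. I must therefore argue that the interface a subconfiguration exposes depends only on that subconfiguration and $\Gamma$ — the channel names are intrinsic, and their types are pinned by the consumers sitting to its left — so that the $\Delta_p$ arising in the subderivation of $\JCl{\Gamma}{\Psi, \Theta}{\Delta}$ is precisely the $\Delta_p$ for which $\Theta'$ is assumed to type. In the inductive step this amounts to checking that the interface $\Theta_2$ exposes within $\Theta_1'', \Theta_2$ (at interface $\Delta_q$) coincides with the one it exposes within the full $\Theta_1, \Theta_2$, so that the induction hypothesis applies with the given hypotheses $\JCl{\Gamma}{\Theta_2}{\Delta_p}$ and $\JCl{\Gamma}{\Theta_2'}{\Delta_p}$ verbatim.
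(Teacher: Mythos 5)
Your proposal matches the paper's own proof: the single-predicate case is handled by inversion on $\Theta 2$/$\Theta 3$ followed by re-application of the same rule with the subderivation for $\Theta$ swapped out for the one typing $\Theta'$ at the same interface $\Delta_p$, and the general case follows by iterating this over the predicates of $\Theta_1$. Your added care about why the interface $\Delta_p$ exposed in the subderivation coincides with the one in the hypotheses is a reasonable elaboration of a point the paper leaves implicit, but it does not change the argument.
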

\begin{proof}
  For the first part, by case analysis on the derivation of $\JCl{\Gamma}{\Psi, \Theta}{\Delta}$.
  In both cases ($\Theta2$ and $\Theta3$), we can directly substitute $\Theta'$ for $\Theta$ where it appears in the configuration judgment. \\
  For the second part, we can repeatedly apply the first part sequentially for every $\Psi \in \Theta_1$.
\end{proof}

\begin{lem}
  If $\JCl{\Gamma}{\Psi_a, \Theta'}{\Lt{a}{A'}, \Delta'}$, then for any $\Lo{B}$ such that $\Lo{A'} \leq \Lo{B}$,\\
  $\JCl{\Gamma}{\Psi_a, \Theta'}{\Lt{a}{B}, \Delta'}$.
\end{lem}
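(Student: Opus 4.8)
The plan is to proceed by case analysis on the form of the head predicate $\Psi_a$, equivalently on the last rule used in the derivation of $\JCl{\Gamma}{\Psi_a, \Theta'}{\Lt{a}{A'}, \Delta'}$. Recall that $\Psi_a$ is either a linear process $\procl{a}{P}$, in which case the derivation must end in $\Theta 3$, or a connect predicate $\connect{a}{c}$ for some $\So{c}$, in which case it ends in $\Theta 2$. In both cases I will invert the final rule, observe that the offering type $A'$ enters only through a single (sub)typing side condition on the head predicate, and then re-derive the judgment with $B$ substituted for $A'$ by weakening exactly that side condition, leaving every other premise untouched.

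For the connect case, inversion on $\Theta 2$ yields $\Sto{c}{\hat{C}} \in \Gamma$, the side condition $\hat{C} \leq \Lo{A'}$, and $\JCl{\Gamma}{\Theta'}{\Delta'}$. Since $\Lo{A'} \leq \Lo{B}$ by hypothesis and $\leq$ is a partial order, transitivity gives $\hat{C} \leq \Lo{B}$. Re-applying $\Theta 2$ with this weakened side condition and the unchanged tail judgment yields $\JCl{\Gamma}{\connect{a}{c}, \Theta'}{\Lt{a}{B}, \Delta'}$, as required. For the process case, inversion on $\Theta 3$ yields $\Sto{a}{\hat{A}} \in \Gamma$, a type $A''$ at which $P$ actually provides together with the subsynchronizing premise $\dsyncl{A''}{A'}{\hat{A}}$, the process typing $\JTl{\Gamma}{\Delta_a}{P}{a}{A''}$, and the tail judgment $\JCl{\Gamma}{\Theta'}{\Delta_a, \Delta'}$. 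Here the offering type $A'$ occurs only in the subsynchronizing premise, as its client (second) argument. Since $\dsyncl{A''}{A'}{\hat{A}}$ records $A'' \leq A'$, and $A' \leq B$ with all three types linear, this is precisely the hypothesis of \autoref{lem:dsync-bigger}, which lets me conclude $\dsyncl{A''}{B}{\hat{A}}$. Re-applying $\Theta 3$ with this judgment and the otherwise unchanged premises yields $\JCl{\Gamma}{\procl{a}{P}, \Theta'}{\Lt{a}{B}, \Delta'}$.

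I expect no genuine obstacle here: the lemma is essentially a repackaging of \autoref{lem:dsync-bigger} at the level of configuration typing. The only step doing real work is the process case, and even there the difficulty of letting the client-facing type of a linear process grow is exactly the difficulty of letting the client argument of the subsynchronizing judgment grow, which \autoref{lem:dsync-bigger} already discharges; the connect case is immediate from transitivity of subtyping. The one point requiring care is the bookkeeping of which of the two configuration-typing rules applies, and the observation that in each rule the offering type $A'$ is isolated in a single monotone side condition, so that no other premise needs to be revisited.
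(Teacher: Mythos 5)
Your proposal is correct and follows essentially the same route as the paper's own proof: inversion on the final rule ($\Theta 2$ versus $\Theta 3$), transitivity of $\leq$ to weaken the side condition in the connect case, and an appeal to \autoref{lem:dsync-bigger} to grow the client argument of the subsynchronizing premise in the process case. The only differences are notational (your $A''$ for the paper's providing type $A'$), so nothing further is needed.
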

\begin{proof}
  By inversion on the derivation of $\JCl{\Gamma}{\Psi_a, \Theta'}{\Lt{a}{A'}, \Delta}$.
  \begin{case}
    $$\infer[\Theta 2]{\JCpc{\Gamma}{a}{b}{\Lo{A}}{\Theta'}{\Lt{a}{A'}, \Delta'}}{\Sto{b}{\hat{B}} \in \Gamma & \hat{b} \leq \Lo{A} &\JCl{\Gamma}{\Theta'}{\Delta'}}$$
    \noindent By transitivity, $\hat{B} \leq \Lo{B}$ therefore $\JCpc{\Gamma}{a}{b}{\Lo{A}}{\Theta'}{\Lt{a}{B}, \Delta'}$
  \end{case}
  \begin{case}
    $$\infer[\Theta 3]{\JCpp{\Gamma}{a}{P}{\Lo{A}}{\Theta'}{\Delta'}}
    {\Sto{a}{\hat{A}} \in \Gamma & \dsyncl{A'}{A}{\hat{A}} & \JTl{\Gamma}{\Delta_a}{P}{a}{A'} & \JCl{\Gamma}{\Theta'}{\Delta_a, \Delta'}}$$
    \noindent By transitivity, $\Lo{A'} \leq \Lo{B}$ and therefore $\dsyncl{A'}{B}{\hat{A}}$ by Lemma~\ref{lem:dsync-bigger}.
    \\Therefore, $\JCpp{\Gamma}{a}{P}{\Lo{A}}{\Theta'}{\Delta'}$
  \end{case}
\end{proof}

\begin{lem}
  Let $\Gamma' \preceq \Gamma$ and $\JTl{\Gamma}{\Delta}{P}{z}{C}$, then $\JTl{\Gamma'}{\Delta}{P}{z}{C}$.
\end{lem}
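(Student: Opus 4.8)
The plan is to proceed by rule induction on the typing derivation, but first to strengthen the statement so that it ranges over both typing judgments. The rule $\Ldownsl R$ types the linear process $\asgn{\So{x}}{\dets{x}}; P$ from the shared premise $\JTs{\Gamma}{P}{x}{A}$, and dually $\Lupls R$ types a shared process from a linear premise; hence $\JTl{}$ and $\JTs{}$ are mutually recursive and the linear claim cannot be established in isolation. I would therefore prove, for every $\Gamma' \preceq \Gamma$, the conjunction: (a) $\JTl{\Gamma}{\Delta}{P}{z}{C}$ implies $\JTl{\Gamma'}{\Delta}{P}{z}{C}$, and (b) $\JTs{\Gamma}{P}{z}{C}$ implies $\JTs{\Gamma'}{P}{z}{C}$. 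The induction is well-founded because every premise that is itself a typing judgment has a strictly smaller process term as its subject (the signature lookups in the spawn rules are side conditions, not recursive typing obligations), so the coinductive side conditions on $\leq$ and $\dsync{A}{B}{\hat{D}}$ are treated as atomic leaves.

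Most rules are discharged immediately: whenever the shared context is passed unchanged to every typing premise --- covering $1L$, $1R$, $\otimes L$, $\otimes R$, $\multimap L$, $\multimap R$, $\oplus L$, $\oplus R$, $\& L$, $\& R$, the four purely-linear shift rules, and the cross-modal pair $\Ldownsl R$ and $\Lupls R$ --- I apply the induction hypothesis to each premise with the same $\Gamma'$ and reassemble the conclusion with the same rule. The linear identity rule $\Lo{ID}$ is a leaf whose side condition $\Lo{B} \leq \Lo{A}$ does not mention $\Gamma$, so it holds in any $\Gamma'$.

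The rules needing real work are those whose side conditions inspect $\Gamma$, namely $\So{ID}$, $ID_{\scriptscriptstyle {LS}}$, $\Lupls L$, $\otimes \So{R}$, $\multimap \So{L}$, and the three spawn rules $SP_{\scriptscriptstyle {LL}}$, $SP_{\scriptscriptstyle {LS}}$, $SP_{\scriptscriptstyle {SS}}$. Each carries a premise $\Sto{y}{\hat{B}} \in \Gamma$ together with a subtyping constraint $\hat{B} \leq D$. From $\Gamma' \preceq \Gamma$ and $\Sto{y}{\hat{B}} \in \Gamma$, inversion on the $\preceq$ rules gives a binding $\Sto{y}{\hat{B}'} \in \Gamma'$ with $\hat{B}' \leq \hat{B}$, and transitivity of $\leq$ (a partial order by the earlier remark) yields $\hat{B}' \leq D$, discharging the side condition against $\Gamma'$. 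For the spawn rules and $\Ldownsl L$ the continuation is typed in a context extended by the freshly offered shared channel, say $\Gamma, \St{x}{A}$; to apply the induction hypothesis I first note that $\Gamma', \St{x}{A} \preceq \Gamma, \St{x}{A}$ follows from $\Gamma' \preceq \Gamma$ together with $A \leq A$ (reflexivity of $\leq$) via rule $\Gamma_{\preceq x}$. In $\Lupls L$, $\otimes \So{R}$, and $\multimap \So{L}$ the continuation's shared context is instead passed unchanged (the inspected channel is already bound), so $\Gamma' \preceq \Gamma$ is reused directly; for $\Lupls L$ the same witness $A$ from $\hat{A} \leq \upls{A}$ is kept, the smaller type $\hat{A}' \leq \hat{A}$ found in $\Gamma'$ satisfying $\hat{A}' \leq \upls{A}$ by transitivity.

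I expect the principal difficulty to be organizational rather than computational: recognizing that the $\JTl{}$ statement must be generalized to a simultaneous claim over both judgments on account of the $\Ldownsl R$/$\Lupls R$ coupling, and handling the somewhat rigid, positional definition of $\preceq$ --- whose rule $\Gamma_{\preceq \cdot}$ licenses extra bindings in the smaller context --- up to the exchange that shared contexts freely admit, so that the step ``$\Sto{y}{\hat{B}} \in \Gamma$ yields a subtype binding $\Sto{y}{\hat{B}'}$ in $\Gamma'$'' is justified regardless of the position of $y$.
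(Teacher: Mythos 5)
Your proposal is correct and follows the same core argument as the paper's proof: induction on the typing derivation, where the only non-trivial cases are the rules whose side conditions inspect $\Gamma$ ($\So{ID}$, $ID_{\scriptscriptstyle{LS}}$, the spawn rules, $\Lupls L$, ${\multimap}\So{L}$, ${\otimes}\So{R}$), each discharged by transitivity of $\leq$. The organizational difference is that the paper first proves admissibility of replacing a \emph{single} shared binding by a smaller type and then lifts to $\Gamma' \preceq \Gamma$ by iterating that substitution and weakening in the extra bindings, whereas you carry the full $\preceq$ relation through one induction and re-establish $\Gamma', \Sto{x}{A} \preceq \Gamma, \Sto{x}{A}$ at the context-extending rules via reflexivity; both work, and yours avoids the iteration step at the cost of managing $\preceq$ up to exchange throughout. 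Your strengthening to a simultaneous claim over $\JTl{}{}{}{}{}$-style and $\JTs{}{}{}{}$-style judgments on account of the $\Ldownsl R$/$\Lupls R$ coupling is a genuine improvement in rigor: the paper states its inner substitution lemma only for the linear judgment, yet its proof of the configuration-stability lemma (case $\Lambda 3$) invokes this lemma at the shared judgment, so the mutual generalization you identify is in fact needed.
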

\begin{proof}
  We first prove the admissibility of the substitution of a shared channel by a smaller type in a typing judgment. In particular, we will
  begin by showing that if
  $$\JTl{\Gamma, \Sto{x}{\hat{A}}}{\Delta}{P}{z}{C}$$
  then $$\JTl{\Gamma, \Sto{x}{\hat{B}}}{\Delta}{P}{z}{C}$$
  for some $\hat{B}\leq \hat{A}$ by induction on the derivation of $\JTl{\Gamma, \Sto{x}{\hat{A}}}{\Delta}{P}{z}{C}$.

  First, we begin by pointing out that rules that do not use $\So{x}$ (most of them) are trivial since we can just appeal to the induction
  hypothesis (IH) on the premise(s) in the appropriate derivation. The rules that can use $\So{x}$ are $\So{ID}, ID_{\scriptscriptstyle
  {LS}}, SP_{\scriptscriptstyle {LL}}, SP_{\scriptscriptstyle {LS}}, SP_{\scriptscriptstyle {SS}}, \Lupls L, {\multimap} \So{L},$ and
  ${\otimes} \So{R}$. For these cases, we can confirm that the substitution is valid by using the IH and using transitivity of $\leq$. We will present one such case:
  \begin{case}
    $$\infer[{\otimes} \So{R}]{\JTlr{\Gamma, \Sto{x}{\hat{A}}}{\Delta}{\sends{y}{x}; P}{y}{\tensor{A}{B}}}{\hat{A} \leq \Lo{A} & \JTl{\Gamma, \Sto{x}{\hat{A}}}{\Delta}{P}{y}{B}}$$
    Then by IH, $\JTl{\Gamma, \Sto{x}{\hat{B}}}{\Delta}{P}{y}{B}$. Furthermore, by transitivity, $\hat{B} \leq \Lo{A}$. Therefore by ${\otimes} \So{R}$, $\JTlr{\Gamma, \Sto{x}{\hat{A}}}{\Delta}{\sends{y}{x}; P}{y}{\tensor{A}{B}}$
  \end{case}

  After showing that substitution by a smaller type in the shared context $\Gamma$ is admissible, the remaining part is to note that
  $\Gamma'$ either contains additional channels that is in $\Gamma,$ which we repeat the argument above for, or $\Gamma'$
  contains new channel names compared to $\Gamma$, which we resolve via weakening.
\end{proof}
\begin{lem}
  Let $\Gamma' \preceq \Gamma$ then
  \begin{enumerate}
    \item If $\JCl{\Gamma}{\Theta}{\Delta}$ for some $\Theta, \Delta$, then $\JCl{\Gamma'}{\Theta}{\Delta}$
    \item If $\JCs{\Gamma}{\Lambda}{\Gamma''}$ for some $\Lambda, \Gamma''$, then $\JCs{\Gamma'}{\Theta}{\Gamma''}$
  \end{enumerate}
\end{lem}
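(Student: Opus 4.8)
The plan is to prove the two parts by separate inductions on the respective configuration typing derivations, since the judgments $\JCl{\Gamma}{\Theta}{\Delta}$ and $\JCs{\Gamma}{\Lambda}{\Gamma''}$ never appeal to one another. Throughout, the process-typing premises are transported to $\Gamma'$ by \autoref{lem:static-stable}, and the subsynchronizing side conditions are handled by \autoref{lem:dsync-smaller-hat}.

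For part (1), I would induct on the derivation of $\JCl{\Gamma}{\Theta}{\Delta}$. The base case $\Theta 1$ is immediate, as $\JCl{\Gamma'}{\cdot}{\cdot}$ holds for any context. For $\Theta 2$, the premise $\Sto{b}{\hat{B}} \in \Gamma$ together with $\Gamma' \preceq \Gamma$ supplies some $\Sto{b}{\hat{B'}} \in \Gamma'$ with $\hat{B'} \leq \hat{B}$; composing with $\hat{B} \leq \Lo{A}$ by transitivity gives $\hat{B'} \leq \Lo{A}$, and the induction hypothesis rebuilds the tail over $\Gamma'$, so $\Theta 2$ reassembles the derivation. The $\Theta 3$ case is the only delicate one: from $\Sto{a}{\hat{A}} \in \Gamma$ and $\Gamma' \preceq \Gamma$ I obtain $\Sto{a}{\hat{A'}} \in \Gamma'$ with $\hat{A'} \leq \hat{A}$, then invoke \autoref{lem:dsync-smaller-hat} to shrink the recorded constraint and derive $\dsyncl{A'}{A}{\hat{A'}}$; \autoref{lem:static-stable} moves the premise $\JTl{\Gamma}{\Delta_a}{P}{a}{A'}$ to $\Gamma'$, the induction hypothesis handles the tail, and $\Theta 3$ closes the case.

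For part (2), I would induct on the derivation of $\JCs{\Gamma}{\Lambda}{\Gamma''}$. The rules $\Lambda 1$ and $\Lambda 4$ carry no premise mentioning $\Gamma$ and transfer verbatim, while $\Lambda 2$ follows by applying the induction hypothesis to each sub-derivation. The substantive case is $\Lambda 3$, whose premise $\JTs{\Gamma}{P}{a}{A}$ must be re-typed over $\Gamma'$; here I rely on \autoref{lem:static-stable}, whose underlying induction ranges over all process-typing rules (including the shared rules such as $SP_{\scriptscriptstyle {SS}}$ and $\So{ID}$) and hence applies equally to the shared judgment, while the subsynchronizing premise $\dsyncs{A'}{A}{\top}$ does not depend on the shared context and is preserved unchanged.

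The main obstacle lies in the $\Theta 3$ case of part (1). Unlike the monotone uses of transitivity elsewhere, the constraint $\hat{A}$ threaded through the subsynchronizing judgment originates from the context entry $\Sto{a}{\hat{A}}$, and passing to $\Gamma'$ replaces it with a strictly smaller $\hat{A'}$; that subsynchronizing survives this contraction of the constraint is exactly the content of \autoref{lem:dsync-smaller-hat}. Once that lemma is in hand the case goes through, and the remainder of the argument is bookkeeping over the context ordering $\preceq$ and routine appeals to \autoref{lem:static-stable}.
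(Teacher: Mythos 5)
Your proposal matches the paper's proof essentially step for step: the same two separate inductions, the same case analysis, transitivity of $\leq$ for $\Theta 2$, \autoref{lem:dsync-smaller-hat} plus \autoref{lem:static-stable} for $\Theta 3$, and \autoref{lem:static-stable} again for $\Lambda 3$. Your explicit remark that \autoref{lem:static-stable} must also cover the shared typing judgment is a point the paper leaves implicit, but the argument is otherwise identical.
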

\begin{proof}
  For the first part, by induction on the derivation of $\JCl{\Gamma}{\Theta}{\Delta}$.
  \begin{case}
    $$\infer[\Theta 1]{\JCl{\Gamma}{\cdot}{\cdot}}{}$$
    Any $\Gamma$ applies, so in particular any $\Gamma' \preceq \Gamma$ will as well. 
  \end{case}
  \begin{case}
    $$\infer[\Theta 2]{\JCpc{\Gamma}{a}{b}{\Lo{A}}{\Theta'}{\Delta'}}{\Sto{b}{\hat{B}} \in \Gamma & \hat{B} \leq \Lo{A} &\JCl{\Gamma}{\Theta'}{\Delta'}}$$
    By exchange, we can assume without loss of generality that $\Gamma = \Sto{b}{\hat{B}}, \Gamma_r$. Similarly, we can assume without loss of generality that $\Gamma' = \Sto{b}{\hat{B'}}, \Gamma_r'$ where $\hat{B'} \leq \hat{B}$ and $\Gamma_r' \preceq \Gamma$. \\
    $\hat{B'} \leq \Lo{A}$ follows by transitivity of $\leq$ and $\JCl{\Gamma'}{\Theta'}{\Delta_a, \Delta'}$ follows from the IH. Therefore,
    $$\JCpc{\Gamma'}{a}{b}{\Lo{A}}{\Theta'}{\Delta'}$$
  \end{case}
  \begin{case}
    $$\infer[\Theta 3]{\JCpp{\Gamma}{a}{P}{\Lo{A}}{\Theta'}{\Delta'}}
    {\Sto{a}{\hat{A}} \in \Gamma & \dsyncl{A'}{A}{\hat{A}} & \JTl{\Gamma}{\Delta_a}{P}{a}{A'}	& \JCl{\Gamma}{\Theta'}{\Delta_a, \Delta'}}$$
    By exchange, we can assume without loss of generality that $\Gamma = \Sto{a}{\hat{A}}, \Gamma_r$. Similarly, we can assume without loss of generality that $\Gamma' = \Sto{a}{\hat{A'}}, \Gamma_r'$ where $\hat{A'} \leq \hat{A}$ and $\Gamma_r' \preceq \Gamma$. \\
    $\dsyncl{A'}{A}{\hat{A'}}$ follows from Lemma~\ref{lem:dsync-smaller-hat}, $\JTl{\Gamma'}{\Delta_a}{P}{a}{A'}$ follows from
    Lemma~\ref{lem:static-stable}, and \\ $\JCl{\Gamma'}{\Theta'}{\Delta_a, \Delta'}$ follows from the IH. Therefore,
    $$\JCpp{\Gamma'}{a}{P}{\Lo{A}}{\Theta'}{\Delta'}$$
  \end{case}

 	\setcounter{case}{0}
  For the second part, by induction on the derivation of $\JCs{\Gamma}{\Lambda}{\Delta'}$
  \begin{case}
  	$$\infer[\Lambda 1]{\JCs{\Gamma}{\cdot}{\cdot}}{}$$
		Any $\Gamma$ applies, so in particular any $\Gamma' \preceq \Gamma$ will as well. 
  \end{case}
  \begin{case}
  	$$\infer[\Lambda 2]{\JCs{\Gamma}{\Lambda_1, \Lambda_2}{\Gamma_1, \Gamma_2}}{\JCs{\Gamma}{\Lambda_1}{\Gamma_1} & \JCs{\Gamma}{\Lambda_2}{\Gamma_2}}$$
  	Both $\JCs{\Gamma'}{\Lambda_1}{\Gamma_1}$ and $\JCs{\Gamma'}{\Lambda_2}{\Gamma_2}$ follow from the IH. Therefore,
  	$$\JCs{\Gamma'}{\Lambda_1, \Lambda_2}{\Gamma_1, \Gamma_2}$$
  \end{case}
  \begin{case}
  	$$\infer[\Lambda 3]{\JCs{\Gamma}{\procs{a}{P}}{\St{a}{A}}}{\dsyncs{A'}{A}{\top} & \JTs{\Gamma}{P}{a}{A'}}$$
    $\JTs{\Gamma'}{P}{a}{A'}$ follows from Lemma~\ref{lem:static-stable}. Therefore,
    $$\JCs{\Gamma}{\procs{a}{P}}{\St{a}{A}}$$
  \end{case}
  \begin{case}
  	$$\infer[\Lambda 4]{\JCs{\Gamma}{\unavail{a}}{\Sto{a}{\hat{A}}}}{}$$
		Any $\Gamma$ applies, so in particular any $\Gamma' \preceq \Gamma$ will as well. 
  \end{case}
\end{proof}

To prove the following lemmas, we switch to a set-based formulation of safe synchronization; $\dsync{A}{B}{\hat{D}}$ is written as $(A, B,
\hat{D}) \in \dstxt$. We also define a monotone map $F$ from the coinductive definition of ssync, giving us $\dstxt \in F(\dstxt)$; that
is, ssync is $F$-consistent. 
\begin{lem}
  If $A \leq B \leq C$ with all same modalities (that is, $A, B, C$ are either all linear or all shared) and $\dsync{A}{B}{\hat{D}}$, then
$\dsync{A}{C}{\hat{D}}$ for some $\hat{D}$.
\end{lem}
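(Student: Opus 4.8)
The plan is to argue by coinduction using the set-based formulation set up immediately before the lemma: I work with the monotone operator $F$ whose greatest fixed point is $\dstxt$, and I exhibit a candidate relation $S$ that is $F$-consistent (a post-fixed point $S \subseteq F(S)$), so that $S \subseteq \dstxt$ follows by coinduction. Since the lemma only enlarges the client type while keeping the provider type and the constraint fixed, I would take
\[
  S = \{(A, C, \hat{D}) \mid \exists B.\; (A, B, \hat{D}) \in \dstxt \text{ and } B \leq C\}.
\]
The stated lemma is then the instance where $B$ is the witness of $\dsync{A}{B}{\hat{D}}$; here $A \leq B$ is automatic (the subsynchronizing judgment only relates types with $A \leq B$) and $A \leq C$ follows by transitivity, $\leq$ being a partial order. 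I deliberately drop the ``all same modalities'' hypothesis from $S$ — this turns out to be essential, as explained below.

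To show $S \subseteq F(S)$, I would fix $(A, C, \hat{D}) \in S$ with witness $B$ and proceed by case analysis on the top connective of the provider type $A$. In each case I invert the subsynchronizing derivation of $(A, B, \hat{D})$ (which is syntax-directed on $A$) to read off the shape of $B$ and its coinductive premises, and I invert $B \leq C$ to read off the shape of $C$. The rule to re-apply to $(A, C, \hat{D})$ is always the one matching $A$'s connective, and each of its recursive premises has the form $(A', C', \hat{E})$ with $A'$ a continuation of $A$, $C'$ the matching continuation of $C$, and the continuation $B'$ of $B$ witnessing both $(A', B', \hat{E}) \in \dstxt$ and $B' \leq C'$ — so that premise lies back in $S$. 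For $1, \tensor{\cdot}{\cdot}, \loli{\cdot}{\cdot}, \Lupll,$ and $\Ldownll$ this is immediate. For internal choice the provider's label set is unchanged, so the same premises carry over; for external choice the client's label set only shrinks from $B$ to $C$, so the target rule requires a subset of the premises already supplied.

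The delicate cases, which I expect to be the main obstacle, are the shared shifts $\upls{A'}$ and $\downsl{A'}$, because inversion on $B \leq C$ now admits both a same-modality supertype ($\upls{\cdot}$, $\downsl{\cdot}$) and a cross-modality one ($\upll{\cdot}$, $\downll{\cdot}$ via $\leq_{\Lupls\Lupll}$, $\leq_{\Ldownsl\Ldownll}$), so the applicable rule may switch from $S\Lupls$ to $S\Lupls\Lupll$ (resp. from $S\Ldownsl$ to $S\Ldownsl\Ldownll$). The worst case is $A = \downsl{A'}$, $B = \downsl{B'}$, $C = \downll{C'}$: rule $S\Ldownsl\Ldownll$ demands the premise $\dsync{\So{A'}}{\Lo{C'}}{\top}$, in which the provider continuation $A'$ is shared while the client continuation $C'$ is linear. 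Using the witness $B'$ — where $(A', B', \top) \in \dstxt$ comes from the $S\Ldownsl$ derivation of $(A, B, \hat{D})$ and the cross-modal $\So{B'} \leq \Lo{C'}$ comes from $B \leq C$ — this premise is in $S$ even though $A'$ and $C'$ differ in modality. This is precisely why $S$ must be modality-agnostic: restricting it to same-modality triples would break here. It then remains only to check constraint preservation in the shift rules, which is routine: for the upshifts both candidate rules force the conclusion constraint to $\top$, and for the downshifts the side condition $\downsl{A'} \leq \hat{D}$ is literally the one already discharged for $(A, B, \hat{D})$, since neither $A$ nor $\hat{D}$ changes.
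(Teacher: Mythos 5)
Your proposal is correct and follows essentially the same route as the paper: both exhibit the relation $\{(A,C,\hat{D}) \mid \exists B.\ (A,B,\hat{D}) \in \dstxt \wedge B \leq C\}$ as an $F$-consistent set and proceed by case analysis on the head connective of $A$, with the shift connectives requiring subcases on whether $B \leq C$ crosses modalities. Your one deliberate deviation --- dropping the same-modality restriction from the candidate relation --- is in fact a small improvement in precision: the paper nominally imposes that restriction on $\dstxt_{\Uparrow}$ but then silently relies on cross-modal triples such as $(\So{A'}, \Lo{C'}, \top)$ in the $\downsl/\downll$ subcase, exactly as you predicted.
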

\begin{proof}
  We want to show that
  $$\dstxt' \defeq \dstxt \cup \dstxt_{\Uparrow}$$
  is $F$-consistent where
  \begin{align*}
    \dstxt_{\Uparrow} \defeq \{(A, C, \hat{D})
      \defor \exists B . B \leq C \land (A, B, \hat{D})
    \in \dstxt\}
  \end{align*}
  Again, where $A, B, C$ must all be of the same modality.\\
  We will prove $F$-consistency of $\dstxt'$, that is, $\dstxt' \in F(\dstxt')$ by
  showing that each of the two sets $\dstxt$ and $\dstxt_{\Uparrow}$ are subsets
  of $F(\dstxt')$. \\
  First, $\dstxt \subseteq F(\dstxt')$ immediately follows because $\dstxt \subseteq F(\dstxt)$ and $F(\dstxt) \subseteq F(\dstxt')$ by
  monotonicity of $F$ given $\dstxt \subseteq \dstxt'$.
  We will now consider $\dstxt_{\Uparrow} \in F(\dstxt')$ by case analysis on the structure of $A$. We can uniquely infer the structure of
  $B$ and $C$ from the structure of $A$ by inversion on the appropriate subtyping rule for most cases.
\begin{case}
  $A = \upll{A'}$; then $B = \upll{B'}$ and $C = \upll{C'}$ with $\Lo{A'} \leq \Lo{B'} \leq \Lo{C'}$.
  \begin{flalign*}
    (\upll{A'}, \upll{C'}, \hat{D}) &\in \dstxt_{\Uparrow}, (\upll{A'}, \upll{B'}, \hat{D}) \in \dstxt \tag{this case} \\
    (\Lo{A'}, \Lo{B'}, \hat{D}) &\in \dstxt \tag{by inversion on $D\Lupll$} \\
    (\Lo{A'}, \Lo{C'}, \hat{D}) &\in \dstxt_{\Uparrow} \tag{by definition of $\dstxt_{\Uparrow}$ with $\Lo{B'} \leq \Lo{C'}$} \\
    (\Lo{A'}, \Lo{C'}, \hat{D}) &\in \dstxt' \tag{since $\dstxt_{\Uparrow} \subseteq \dstxt'$} \\
    (\upll{A'}, \upll{C'}, \hat{D}) &\in F(\dstxt') \tag{by $D\Lupll$}
  \end{flalign*}
\end{case}
$\Ldownll, \otimes,$ and $\multimap$ follow a similar pattern of appealing to
the covariance of subtyping on the continuation types. 
\begin{case}
  $A = \intch{\caselist{l}{A}}$; then
  $B = \intch{\caselist{l}{B}, \caselist{m}{B}}$ and
  $C = \intch{\caselist{l}{C}, \caselist{m}{C}, \caselist{n}{C}}$ with \\
  $\Lo{{A_i}} \leq \Lo{{B_i}} \leq \Lo{{C_i}} \; \forall i \in \overline{l}$ and $\Lo{{B_i}} \leq \Lo{{C_i}} \; \forall i \in \overline{m}$.
  \begin{flalign*}
    (\intch{\caselist{l}{A}}, \intch{\caselist{l}{C}, \caselist{m}{C},
    \caselist{n}{C}}, \hat{D}) &\in \dstxt_{\Uparrow} \\
    (\intch{\caselist{l}{A}}, \intch{\caselist{l}{B}, \caselist{m}{B}},
    \hat{D}) &\in \dstxt
    \tag{this case} \\
    (\forall i \in \overline{l}) \;
    (\Lo{{A_i}}, \Lo{{B_i}}, \hat{D}) &\in \dstxt
    \tag{by inversion on $D{\oplus}$} \\
    (\forall i \in \overline{l}) \;
    (\Lo{{A_i}}, \Lo{{C_i}}, \hat{D}) &\in \dstxt_{\Uparrow} \\
    \tag{by definition of $\dstxt_{\Uparrow}$ with
    $\Lo{{B_i}} \leq \Lo{{C_i}}$} \\
    (\forall i \in \overline{l}) \;
    (\Lo{{A_i}}, \Lo{{C_i}}, \hat{D}) &\in \dstxt'
    \tag{since $\dstxt_{\Uparrow} \subseteq \dstxt'$} \\
    (\intch{\caselist{l}{A}}, \intch{\caselist{l}{C}, \caselist{m}{C},
    \caselist{n}{C}}, \hat{D}) &\in F(\dstxt')
    \tag{by $D{\oplus}$}
  \end{flalign*}
\end{case}
$D{\&}$ follows a similar pattern.

\begin{case}
  $A = \downsl{A}$; then there are three possible assignments to $B$ and $C$ that satisfies the subtyping constraints, so we will continue by subcasing on the structure of $B$ and $C$.
  \begin{subcase}
    $B = \downsl{B}$ and $C = \downsl{C}$ with $\So{A} \leq \So{B} \leq \So{C}$.
    \begin{flalign*}
      (\downsl{A}, \downsl{C}, \hat{D}) &\in \dstxt_{\Uparrow}, (\downsl{A}, \downsl{B}, \hat{D}) \in \dstxt \tag{this case} \\
      (\So{A}, \So{B}, \top) &\in \dstxt, \So{A} \leq \hat{D} \tag{by inversion on $D\Ldownsl$} \\
      (\So{A}, \So{C}, \top) &\in \dstxt_{\Uparrow} \tag{by definition of $\dstxt_{\Uparrow}$ with $\So{B} \leq \So{C}$} \\
      (\So{A}, \So{C}, \top) &\in \dstxt' \tag{since $\dstxt_{\Uparrow} \subseteq \dstxt'$} \\
      (\downsl{A}, \downsl{C}, \hat{D}) &\in F(\dstxt') \tag{by $D\Ldownsl$}
    \end{flalign*}
  \end{subcase}
  \begin{subcase}
    $B = \downsl{B}$ and $C = \downll{C}$ with $\So{A} \leq \So{B} \leq \Lo{C}$.
    \begin{flalign*}
      (\downsl{A}, \downll{C}, \hat{D}) &\in \dstxt_{\Uparrow}, (\downsl{A}, \downsl{B}, \hat{D}) \in \dstxt \tag{this case} \\
      (\So{A}, \So{B}, \top) &\in \dstxt, \So{A} \leq \hat{D} \tag{by inversion on $D\Ldownsl$} \\
      (\So{A}, \Lo{C}, \top) &\in \dstxt_{\Uparrow} \tag{by definition of $\dstxt_{\Uparrow}$ with $\So{B} \leq \Lo{C}$} \\
      (\So{A}, \Lo{C}, \top) &\in \dstxt' \tag{since $\dstxt_{\Uparrow} \subseteq \dstxt'$} \\
      (\downsl{A}, \downll{C}, \hat{D}) &\in F(\dstxt') \tag{by $D\Ldownsl$}
    \end{flalign*}
  \end{subcase}
  \begin{subcase}
    $B = \downll{B}$ and $C = \downll{C}$ with $\So{A} \leq \Lo{B} \leq \Lo{C}$.
    \begin{flalign*}
      (\downsl{A}, \downll{C}, \hat{D}) &\in \dstxt_{\Uparrow}, (\downsl{A}, \downll{B}, \hat{D}) \in \dstxt \tag{this case} \\
      (\So{A}, \Lo{B}, \top) &\in \dstxt, \So{A} \leq \hat{D} \tag{by inversion on $D\Ldownsl\Ldownll$} \\
      (\So{A}, \Lo{C}, \top) &\in \dstxt_{\Uparrow} \tag{by definition of $\dstxt_{\Uparrow}$ with $\Lo{B} \leq \Lo{C}$} \\
      (\So{A}, \Lo{C}, \top) &\in \dstxt' \tag{since $\dstxt_{\Uparrow} \subseteq \dstxt'$} \\
      (\downsl{A}, \downll{C}, \hat{D}) &\in F(\dstxt') \tag{by $D\Ldownsl$}
    \end{flalign*}
  \end{subcase}
\end{case}

\begin{case}
  $A = \upls{A}$; then there are three possible assignments to $B$ and $C$ that satisfies the subtyping constraints, so we will continue by subcasing on the structure of $B$ and $C$. 
  \begin{subcase}
    $B = \upls{B}$ and $C = \upls{C}$	with $\Lo{A} \leq \Lo{B} \leq \Lo{C}$.
    \begin{flalign*}
      (\upls{A}, \upls{C}, \top) &\in \dstxt_{\Uparrow}, (\upls{A}, \upls{B}, \top) \in \dstxt \tag{this case} \\
      (\Lo{A}, \Lo{B}, \upls{A}) &\in \dstxt \tag{by inversion on $D\Lupls$} \\
      (\Lo{A}, \Lo{C}, \upls{A}) &\in \dstxt_{\Uparrow} \tag{by definition of $\dstxt_{\Uparrow}$ with $\Lo{A} \leq \Lo{B}$} \\
      (\Lo{A}, \Lo{C}, \upls{A}) &\in \dstxt' \tag{since $\dstxt_{\Uparrow} \subseteq \dstxt'$} \\
      (\upls{A}, \upls{C}, \top) &\in F(\dstxt') \tag{by $D\Lupls$}
    \end{flalign*}
  \end{subcase}
  \begin{subcase}
    $B = \upls{B}$ and $C = \upll{C}$	with $\Lo{A} \leq \Lo{B} \leq \Lo{C}$.
    \begin{flalign*}
      (\upls{A}, \upll{C}, \top) &\in \dstxt_{\Uparrow}, (\upls{A}, \upls{B}, \top) \in \dstxt \tag{this case} \\
      (\Lo{A}, \Lo{B}, \upls{A}) &\in \dstxt \tag{by inversion on $D\Lupls$} \\
      (\Lo{A}, \Lo{C}, \upls{A}) &\in \dstxt_{\Uparrow} \tag{by definition of $\dstxt_{\Uparrow}$ with $\Lo{A} \leq \Lo{B}$} \\
      (\Lo{A}, \Lo{C}, \upls{A}) &\in \dstxt' \tag{since $\dstxt_{\Uparrow} \subseteq \dstxt'$} \\
      (\upls{A}, \upll{C}, \top) &\in F(\dstxt') \tag{by $D\Lupls$}
    \end{flalign*}
  \end{subcase}
  \begin{subcase}
    $B = \upll{B}$ and $C = \upll{C}$	with $\Lo{A} \leq \Lo{B} \leq \Lo{C}$.
    \begin{flalign*}
      (\upls{A}, \upll{C}, \top) &\in \dstxt_{\Uparrow}, (\upls{A}, \upll{B}, \top) \in \dstxt \tag{this case} \\
      (\Lo{A}, \Lo{B}, \upls{A}) &\in \dstxt \tag{by inversion on $D\Lupls\Lupll$} \\
      (\Lo{A}, \Lo{C}, \upls{A}) &\in \dstxt_{\Uparrow} \tag{by definition of $\dstxt_{\Uparrow}$ with $\Lo{A} \leq \Lo{B}$} \\
      (\Lo{A}, \Lo{C}, \upls{A}) &\in \dstxt' \tag{since $\dstxt_{\Uparrow} \subseteq \dstxt'$} \\
      (\upls{A}, \upll{C}, \top) &\in F(\dstxt') \tag{by $D\Lupls$}
    \end{flalign*}
  \end{subcase}
\end{case}
We missed one case, when $A = B = C = 1$, but this case is trivial since
$\dstxt_{\Uparrow}$ does not add any new members to the set.
\end{proof}

\begin{lem}
  If $A \leq B \leq C$ with all same modalities, $\dsync{B}{C}{\hat{D}}$, and \\$\dsync{A}{C}{\hat{E}}$, then $\dsync{A}{C}{\hat{D}}$ for some $\hat{D}$ and $\hat{E}$.
\end{lem}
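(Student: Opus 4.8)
The plan is to reuse the coinductive, set-based machinery set up just before \autoref{lem:dsync-bigger}: I write $\dsync{A}{B}{\hat{D}}$ as $(A,B,\hat{D}) \in \dstxt$, recall that $\dstxt$ is $F$-consistent (i.e. $\dstxt \subseteq F(\dstxt)$ for the monotone map $F$ read off the rules), and exhibit an enlarged set that is still $F$-consistent. Concretely, I set
\[
  \dstxt' \defeq \dstxt \cup \dstxt_{\Downarrow}, \qquad
  \dstxt_{\Downarrow} \defeq \{(A,C,\hat{D}) \defor \exists B, \hat{E}.\; A \leq B \leq C \land (B,C,\hat{D}) \in \dstxt \land (A,C,\hat{E}) \in \dstxt\}
\]
with $A,B,C$ of the same modality. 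Since $\dstxt \subseteq F(\dstxt) \subseteq F(\dstxt')$ by monotonicity, it suffices to show $\dstxt_{\Downarrow} \subseteq F(\dstxt')$; this yields $F$-consistency of $\dstxt'$, and the goal triple $(A,C,\hat{D})$ lies in $\dstxt_{\Downarrow}$ by hypothesis (taking the witness $B$ to be the $B$ of the statement), so $\dsync{A}{C}{\hat{D}}$ follows.

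The bulk of the work is then a case analysis on the structure of $A$, reading off the structure of $B$ and $C$ by inversion on the subtyping derivations of $A \leq B \leq C$ and on the two membership facts, exactly as in the proof of \autoref{lem:dsync-bigger}. The guiding observation, which also explains why the auxiliary judgment $\dsync{A}{C}{\hat{E}}$ is needed at all, is a division of labor between the two hypotheses: the witness $(B,C,\hat{D})$ supplies the constraint $\hat{D}$ and, at release points, the crucial bound on it, whereas the witness $(A,C,\hat{E})$ supplies the premises that already mention the \emph{smaller} provider $A$ with the correctly reset constraint.

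For the structural connectives ($1$, $\otimes$, $\multimap$, $\oplus$, $\&$) and the linear shifts ($\Lupll$, $\Ldownll$) the constraint is threaded through unchanged. I invert the corresponding rule on both $(B,C,\hat{D})$ and $(A,C,\hat{E})$ to obtain the continuation triples $(\Lo{B''},\Lo{C''},\hat{D})$ and $(\Lo{A''},\Lo{C''},\hat{E})$; covariance of subtyping on continuations gives $\Lo{A''} \leq \Lo{B''} \leq \Lo{C''}$, so $(\Lo{A''},\Lo{C''},\hat{D}) \in \dstxt_{\Downarrow} \subseteq \dstxt'$, and re-applying the rule lands the original triple in $F(\dstxt')$. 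For the shared acquire $S\Lupls$ (and $S\Lupls\Lupll$) the conclusion constraint is forced to $\top$ and the continuation resets to $\upls{A}$; here I use the $(A,C,\hat{E})$ witness directly, whose inversion yields precisely $\dsyncl{A}{C}{\upls{A}} \in \dstxt \subseteq \dstxt'$, so the case closes without appealing to $\dstxt_{\Downarrow}$.

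I expect the shared release $S\Ldownsl$ (and $S\Ldownsl\Ldownll$) to be the main obstacle, since it is the only rule carrying a side condition on the constraint. Inverting $(A,C,\hat{E})$ gives the reset continuation $\dsyncs{A}{C}{\top} \in \dstxt$ (which I reuse directly) together with $\So{A} \leq \hat{E}$; to re-apply the rule with the \emph{new} constraint I must instead establish $\So{A} \leq \hat{D}$. When $B$ carries the same shared-downshift structure, inverting $(B,C,\hat{D})$ yields $\So{B} \leq \hat{D}$, and since $A \leq B$ gives $\So{A} \leq \So{B}$, transitivity delivers $\So{A} \leq \hat{D}$ as required. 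This constraint step is where all the subtlety concentrates: it is exactly why the statement compares $A$ against the intermediate $B$ rather than against $C$ directly, and it is delicate when the provider's shift modality at a release position differs between $A$ and $B$, because then the bound on $\hat{D}$ cannot be read off the linear side and must be recovered from the shared-downshift witness. Getting this release case to go through cleanly — ensuring the transported bound $\So{A} \leq \hat{D}$ always holds under the ``same modality'' hypothesis — is the crux of the argument; all remaining cases (including the trivial $A = 1$, where $\dstxt_{\Downarrow}$ adds nothing) are routine.
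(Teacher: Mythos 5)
Your proposal matches the paper's proof essentially step for step: the same downward-closed enlargement $\dstxt_{\Downarrow}$ (including the extra witness $(A,C,\hat{E}) \in \dstxt$, used exactly where the paper uses it, namely at $\Lupls$), the same $F$-consistency argument by case analysis on the structure of the types, and the same transitivity step $\So{A} \leq \So{B} \leq \hat{D}$ at shared release points. The one subcase you flag as delicate --- $A = \downsl{A'}$ with $B = \downll{B'}$, where inversion on $(B,C,\hat{D})$ via $S\Ldownll$ yields no bound on $\hat{D}$ and the second witness only bounds $\hat{E}$ --- is precisely the subcase the paper waves off as ``similar'' without detail, so your caution there reflects a real gap in the written argument rather than a defect of your approach relative to the paper's.
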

\begin{proof}
  We want to show that
  $$\dstxt' \defeq \dstxt \cup \dstxt_{\Downarrow}$$
  is $F$-consistent with
  \begin{align*}
    \dstxt_{\Downarrow} &\defeq \{(A, C, \hat{D})
      \defor \exists B . A \leq B \land (B, C, \hat{D})
    \in \dstxt \land \exists \hat{E} . (A, C, \hat{E}) \in \dstxt\} \\
  \end{align*}
  The proof is very similar in style to the previous lemma, but there is one additional constraint that $(A, C, \hat{E}) \in \dstxt$ for any
  constraint $\hat{E}$. This assumption is only necessary for the $\Lupls$ case. \\ In any case, we will prove $F$-consistency of $\dstxt'$,
  that is, $\dstxt' \in F(\dstxt')$ by showing that each of the three sets $\dstxt$ and $\dstxt_{\Downarrow}$ are subsets of $F(\dstxt')$.
  \\
  First, $\dstxt \subseteq F(\dstxt')$ immediately follows from the same argument as in the previous proof.

  We will now consider $\dstxt_{\Downarrow} \in F(\dstxt')$ by case analysis on the structure of $B$. Because all of $A, B, C$ have the same
  modality, we can uniquely infer the structure of $A$ and $C$ from the structure of $B$ by inversion on the appropriate subtyping rule.
  \begin{case}
    $A = \upll{A'}$; then $B = \upll{B'}$ and $C = \upll{C'}$ with $\Lo{A'} \leq \Lo{B'} \leq \Lo{C'}$.
    \begin{flalign*}
      (\upll{A'}, \upll{C'}, \hat{D}) &\in \dstxt_{\Downarrow}, 
      (\upll{B'}, \upll{C'}, \hat{D}) \in \dstxt
      \tag{this case} \\
      (\Lo{B'}, \Lo{C'}, \hat{D}) &\in \dstxt
      \tag{by inversion on $D\Lupll$} \\
      (\Lo{A'}, \Lo{C'}, \hat{D}) &\in \dstxt_{\Downarrow}
      \tag{by definition of $\dstxt_{\Downarrow}$ with $\Lo{A'} \leq \Lo{B'}$} \\
      (\Lo{A'}, \Lo{C'}, \hat{D}) &\in \dstxt'
      \tag{since $\dstxt_{\Downarrow} \subseteq \dstxt'$} \\
      (\upll{A'}, \upll{C'}, \hat{D}) &\in F(\dstxt')
      \tag{by $D\Lupll$}
    \end{flalign*}
  \end{case}
  $\Ldownll, \otimes,$ and $\multimap$ follow a similar pattern of appealing to the covariance of subtyping on the continuation types. 
  \begin{case}
    $A = \intch{\caselist{l}{A}}$; then
    $B = \intch{\caselist{l}{B}, \caselist{m}{B}}$ and
    $C = \intch{\caselist{l}{C}, \caselist{m}{C}, \caselist{n}{C}}$ with \\
    $\Lo{{A_i}} \leq \Lo{{B_i}} \leq \Lo{{C_i}} \; \forall i \in \overline{l}$ and $\Lo{{B_i}} \leq \Lo{{C_i}} \; \forall i \in \overline{m}$.
    \begin{flalign*}
      (\intch{\caselist{l}{A}}, \intch{\caselist{l}{C}, \caselist{m}{C},
      \caselist{n}{C}}, \hat{D}) &\in \dstxt_{\Downarrow} \\
      (\intch{\caselist{l}{B}, \caselist{m}{B}}, \intch{\caselist{l}{C},
    \caselist{m}{C}, \caselist{n}{C}}, \hat{D}) &\in \dstxt
    \tag{this case} \\
    (\forall i \in \overline{l},\overline{m}) \;
    (\Lo{{B_i}}, \Lo{{C_i}}, \hat{D}) &\in \dstxt
    \tag{by inversion on $D{\oplus}$} \\
    (\forall i \in \overline{l}) \;
    (\Lo{{A_i}}, \Lo{{C_i}}, \hat{D}) &\in \dstxt_{\Downarrow} \\
    \tag{by definition of $\dstxt_{\Downarrow}$ with
    $\Lo{{A_i}} \leq \Lo{{B_i}}$} \\
    (\forall i \in \overline{l}) \;
    (\Lo{{A_i}}, \Lo{{C_i}}, \hat{D}) &\in \dstxt'
    \tag{since $\dstxt_{\Downarrow} \subseteq \dstxt'$} \\
    (\intch{\caselist{l}{A}}, \intch{\caselist{l}{C}, \caselist{m}{C},
  \caselist{n}{C}}, \hat{D}) &\in F(\dstxt')
  \tag{by $D{\oplus}$}
\end{flalign*}
\end{case}
$D{\&}$ follows a similar pattern.
\begin{case}
  $A = \downsl{A}$; similar to the proof of Lemma~\ref{lem:dsync-bigger}, there are three possible assignments for $B$ and $C$. We will present one of those subcases: let $B = \downsl{B}$ and $C = \downsl{C}$ with $\So{A} \leq \So{B} \leq \So{C}$. The other two cases are similar.
  \begin{flalign*}
    (\downsl{A}, \downsl{C}, \hat{D}) &\in \dstxt_{\Downarrow}, (\downsl{B}, \downsl{C}, \hat{D}) \in \dstxt \tag{this case} \\
    (\So{B}, \So{C}, \top) &\in \dstxt, \So{B} \leq \hat{D} \tag{by inversion on $D\Ldownsl$} \\
    (\So{A}, \So{C}, \top) &\in \dstxt_{\Downarrow} \tag{by definition of $\dstxt_{\Downarrow}$ with $\So{A} \leq \So{B}$} \\
    (\So{A}, \So{C}, \top) &\in \dstxt' \tag{since $\dstxt_{\Downarrow} \subseteq \dstxt'$} \\
    \So{A} \leq \hat{D}  \tag{because $\So{A} \leq \So{B} \leq \hat{D}$} \\
    (\downsl{A}, \downsl{C}, \hat{D}) &\in F{\dstxt'} \tag{by $D\Ldownsl$}
  \end{flalign*}
\end{case}

\begin{case}
  $A = \upls{A}$; again, there are three possible assignments for $B$ and $C$, and we will take the subcase when $B = \upls{B}$ and $C = \upls{C}$ with $\Lo{A} \leq \Lo{B} \leq \Lo{C}$. The other two cases are similar. This case finally uses our assumption that $(A, C, \hat{E}) \in \dstxt$ -- $\hat{E}$ must be $\top$ due to $A = \upls{A}$.
  \begin{flalign*}
    (\upls{A}, \upls{C}, \top) &\in \dstxt_{\Downarrow}, (\upls{B}, \upls{C}, \top) \in \dstxt \tag{this case} \\
    (\upls{A}, \upls{C}, \top) &\in \dstxt \tag{By assumption with $\hat{E} = \top$} \\
    (\Lo{A}, \Lo{C}, \upls{A}) &\in \dstxt \tag{by inversion on $D\Lupls$} \\
    (\Lo{A}, \Lo{C}, \upls{A}) &\in \dstxt' \tag{since $\dstxt \subseteq \dstxt'$} \\
    (\upls{A}, \upls{C}, \top) &\in F(\dstxt') \tag{by $D\Lupls$}
  \end{flalign*}
\end{case}
We missed one case, when $A = B = C = 1$, but this case is trivial since
$\dstxt_{\Downarrow}$ does not add any new members to the set.
\end{proof}

\begin{lem}
  If $\dsyncl{A}{B}{\hat{C}}$ and $\hat{D} \leq \hat{C}$, then $\dsyncl{A}{B}{\hat{D}}$ for some $\Lo{A}, \Lo{B}, \hat{C},$ and $\hat{D}$.
\end{lem}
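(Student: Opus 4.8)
The plan is to follow the same set-based coinductive method already used for \autoref{lem:dsync-bigger} and \autoref{lem:dsync-smaller}: I read the subsynchronizing judgment as membership $(A,B,\hat{D}) \in \dstxt$, work with the monotone map $F$ induced by the rules (for which $\dstxt$ is $F$-consistent), and exhibit a larger $F$-consistent set containing the triples I want. Concretely I would define
\[ \dstxt' \defeq \dstxt \cup \dstxt_\star \]
where
\[ \dstxt_\star \defeq \{(A, B, \hat{D}) \defor \exists \hat{C}.\ \hat{D} \leq \hat{C} \land (A, B, \hat{C}) \in \dstxt\} \]
with $A,B$ required linear (the analogue of the ``same modality'' side condition in the earlier proofs). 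The goal is $\dstxt' \subseteq F(\dstxt')$, from which the lemma follows coinductively. As before, $\dstxt \subseteq F(\dstxt')$ is immediate from $\dstxt \subseteq F(\dstxt)$ and monotonicity of $F$ under $\dstxt \subseteq \dstxt'$, so the real work is $\dstxt_\star \subseteq F(\dstxt')$, which I would carry out by case analysis on the linear structure of $A$, inverting the rule that placed $(A,B,\hat{C})$ into $\dstxt$.

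The routine cases are exactly those where the constraint is only threaded, never consulted. For $A = 1$ the rule $S1$ holds for every constraint, so membership is immediate. For $A$ of the form $\tensor{A_0}{A_1}$, $\loli{A_0}{A_1}$, $\intch{\caselist{l}{A}}$, $\extch{\caselist{l}{A}}$, and for the purely linear shifts $\upll{A_0}$ and $\downll{A_0}$, inversion on $S\Ldownll$ and companions yields subsynchronizing premises on the (again linear) continuations carrying the \emph{same} constraint $\hat{C}$; since $\hat{D} \leq \hat{C}$, each continuation-triple again lies in $\dstxt_\star \subseteq \dstxt'$, and reapplying the corresponding rule places $(A,B,\hat{D})$ in $F(\dstxt')$. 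I would explicitly record that $S\Lupls$ and $S\Lupls\Lupll$ cannot occur, since their conclusions carry a shared provider type whereas $A$ is linear; this is precisely where the ``both linear'' hypothesis earns its keep, ruling out the constraint-\emph{recording} step so that the threaded constraint is never overwritten.

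The crux is the shared downshift $A = \downsl{A_0}$, handled by $S\Ldownsl$ (client $\downsl{B_0}$) or $S\Ldownsl\Ldownll$ (client $\downll{B_0}$). These are the only rules that actually \emph{inspect} the constraint: inversion produces a shared-continuation premise in which the constraint is reset to $\top$, together with the release check $\downsl{A_0} \leq \hat{C}$. The shared premise already carries $\top$ and so is untouched by the shrinking, lying in $\dstxt \subseteq \dstxt'$ directly; hence the entire difficulty collapses onto re-establishing the release check $\downsl{A_0} \leq \hat{D}$ for the strictly smaller $\hat{D}$. This is the step I expect to be the main obstacle, because it does \emph{not} follow from $\downsl{A_0} \leq \hat{C}$ and $\hat{D} \leq \hat{C}$ by transitivity. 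My intended remedy is to strengthen the membership condition of $\dstxt_\star$ so that the release bound is carried along the coinduction — i.e.\ to admit a shrunk constraint $\hat{D}$ only when every release type reachable on the linear provider side still sits below $\hat{D}$ — so that the obligation $\downsl{A_0} \leq \hat{D}$ is available by construction at each downshift while the easy threading cases remain unaffected. Getting this invariant stated so that it is preserved by the continuation premises, yet strong enough to discharge the release check, is where I would concentrate the effort.
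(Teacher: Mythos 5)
Your decomposition is the right one, and your diagnosis of the crux is exactly correct: every rule except the shared downshift merely threads the constraint, so those cases close under shrinking, while at $S\Ldownsl$ (or $S\Ldownsl\Ldownll$) the regenerated obligation $\So{A'} \leq \hat{D}$ simply does not follow from $\So{A'} \leq \hat{C}$ and $\hat{D} \leq \hat{C}$ --- the inequalities point the wrong way. You should not expect to close this gap without changing the statement, because the lemma as written is false: take $\Lo{A} = \downsl{A'}$ with $\dsyncl{A}{B}{\hat{C}}$ (so inversion gives $\So{A'} \leq \hat{C}$) and set $\hat{D} = \bot$; then $\hat{D} \leq \hat{C}$ holds, but $\dsyncl{A}{B}{\bot}$ would require $\So{A'} \leq \bot$, which is impossible since $\bot$ is strictly below every $\So{A'}$ (the paper itself relies on this impossibility in the $\hat{E} = \bot$ subcase of \autoref{lem:dsync-meet}). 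So the ``main obstacle'' you isolate is not a proof difficulty but a genuine counterexample, and your proposed remedy --- admitting a shrunk $\hat{D}$ only when every reachable release type on the provider side still lies below it --- is not a refinement of the argument but a necessary weakening of the claim.

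For comparison, the paper does not actually contain a proof of this statement: the text placed under \autoref{lem:dsync-smaller-hat} in the appendix is a verbatim repetition of the proof of \autoref{lem:dsync-smaller} (its set $\dstxt_{\Downarrow}$ shrinks the provider type, not the constraint, and the hypothesis $\hat{D} \leq \hat{C}$ is never used), so the release check for a shrunk constraint is never discharged there either. What the development actually needs, and what your strengthened invariant amounts to, is \autoref{lem:dsync-meet}: in the one place \autoref{lem:dsync-smaller-hat} is invoked (case $\Theta 3$ of \autoref{lem:cfg-stable}, which feeds the \ref{dyn:fwdll} case of preservation), the smaller constraint is a meet $\hat{B} \land \hat{A}$ of two constraints each of which already subsynchronizes the same pair of types, and that is precisely the situation in which every reachable release type is a lower bound of the new constraint, so the check survives. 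I would therefore either state your side condition explicitly as part of the lemma or route those appeals through the meet lemma instead; as stated, neither your argument nor the paper's can succeed.
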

\begin{proof}
  We want to show that
  $$\dstxt' \defeq \dstxt \cup \dstxt_{\Downarrow}$$
  is $F$-consistent with
  \begin{align*}
    \dstxt_{\Downarrow} &\defeq \{(A, C, \hat{D})
      \defor \exists B . A \leq B \land (B, C, \hat{D})
    \in \dstxt \land \exists \hat{E} . (A, C, \hat{E}) \in \dstxt\} \\
    \end{align*}
    The proof is very similar in style to the previous lemma, but there is one additional constraint that $(A, C, \hat{E}) \in \dstxt$ for
    any constraint $\hat{E}$. This assumption is only necessary for the $\Lupls$ case. \\ In any case, we will prove $F$-consistency of
    $\dstxt'$, that is, $\dstxt' \in F(\dstxt')$ by showing that each of the three sets $\dstxt$ and $\dstxt_{\Downarrow}$ are subsets of
    $F(\dstxt')$. \\
    First, $\dstxt \subseteq F(\dstxt')$ immediately follows from the same argument as in the previous proof.

    We will now consider $\dstxt_{\Downarrow} \in F(\dstxt')$ by case analysis on the structure of $B$. Because all of $A, B, C$ have the
    same modality, we can uniquely infer the structure of $A$ and $C$ from the structure of $B$ by inversion on the appropriate subtyping
    rule.
    \begin{case}
      $A = \upll{A'}$; then $B = \upll{B'}$ and $C = \upll{C'}$ with $\Lo{A'} \leq \Lo{B'} \leq \Lo{C'}$.
      \begin{flalign*}
        (\upll{A'}, \upll{C'}, \hat{D}) &\in \dstxt_{\Downarrow}, 
        (\upll{B'}, \upll{C'}, \hat{D}) \in \dstxt
        \tag{this case} \\
        (\Lo{B'}, \Lo{C'}, \hat{D}) &\in \dstxt
        \tag{by inversion on $D\Lupll$} \\
        (\Lo{A'}, \Lo{C'}, \hat{D}) &\in \dstxt_{\Downarrow}
        \tag{by definition of $\dstxt_{\Downarrow}$ with $\Lo{A'} \leq \Lo{B'}$} \\
        (\Lo{A'}, \Lo{C'}, \hat{D}) &\in \dstxt'
        \tag{since $\dstxt_{\Downarrow} \subseteq \dstxt'$} \\
        (\upll{A'}, \upll{C'}, \hat{D}) &\in F(\dstxt')
        \tag{by $D\Lupll$}
      \end{flalign*}
    \end{case}
    $\Ldownll, \otimes,$ and $\multimap$ follow a similar pattern of appealing to the covariance of subtyping on the continuation types. 
    \begin{case}
      $A = \intch{\caselist{l}{A}}$; then
      $B = \intch{\caselist{l}{B}, \caselist{m}{B}}$ and
      $C = \intch{\caselist{l}{C}, \caselist{m}{C}, \caselist{n}{C}}$ with \\
      $\Lo{{A_i}} \leq \Lo{{B_i}} \leq \Lo{{C_i}} \; \forall i \in \overline{l}$ and $\Lo{{B_i}} \leq \Lo{{C_i}} \; \forall i \in \overline{m}$.
      \begin{flalign*}
        (\intch{\caselist{l}{A}}, \intch{\caselist{l}{C}, \caselist{m}{C},
        \caselist{n}{C}}, \hat{D}) &\in \dstxt_{\Downarrow} \\
        (\intch{\caselist{l}{B}, \caselist{m}{B}}, \intch{\caselist{l}{C},
      \caselist{m}{C}, \caselist{n}{C}}, \hat{D}) &\in \dstxt
      \tag{this case} \\
      (\forall i \in \overline{l},\overline{m}) \;
      (\Lo{{B_i}}, \Lo{{C_i}}, \hat{D}) &\in \dstxt
      \tag{by inversion on $D{\oplus}$} \\
      (\forall i \in \overline{l}) \;
      (\Lo{{A_i}}, \Lo{{C_i}}, \hat{D}) &\in \dstxt_{\Downarrow} \\
      \tag{by definition of $\dstxt_{\Downarrow}$ with
      $\Lo{{A_i}} \leq \Lo{{B_i}}$} \\
      (\forall i \in \overline{l}) \;
      (\Lo{{A_i}}, \Lo{{C_i}}, \hat{D}) &\in \dstxt'
      \tag{since $\dstxt_{\Downarrow} \subseteq \dstxt'$} \\
      (\intch{\caselist{l}{A}}, \intch{\caselist{l}{C}, \caselist{m}{C},
    \caselist{n}{C}}, \hat{D}) &\in F(\dstxt')
    \tag{by $D{\oplus}$}
  \end{flalign*}
\end{case}
$D{\&}$ follows a similar pattern.
\begin{case}
  $A = \downsl{A}$; similar to the proof of Lemma~\ref{lem:dsync-bigger}, there are three possible assignments for $B$ and $C$. We will present one of those subcases: let $B = \downsl{B}$ and $C = \downsl{C}$ with $\So{A} \leq \So{B} \leq \So{C}$. The other two cases are similar.
  \begin{flalign*}
    (\downsl{A}, \downsl{C}, \hat{D}) &\in \dstxt_{\Downarrow}, (\downsl{B}, \downsl{C}, \hat{D}) \in \dstxt \tag{this case} \\
    (\So{B}, \So{C}, \top) &\in \dstxt, \So{B} \leq \hat{D} \tag{by inversion on $D\Ldownsl$} \\
    (\So{A}, \So{C}, \top) &\in \dstxt_{\Downarrow} \tag{by definition of $\dstxt_{\Downarrow}$ with $\So{A} \leq \So{B}$} \\
    (\So{A}, \So{C}, \top) &\in \dstxt' \tag{since $\dstxt_{\Downarrow} \subseteq \dstxt'$} \\
    \So{A} \leq \hat{D}  \tag{because $\So{A} \leq \So{B} \leq \hat{D}$} \\
    (\downsl{A}, \downsl{C}, \hat{D}) &\in F{\dstxt'} \tag{by $D\Ldownsl$}
  \end{flalign*}
\end{case}
\begin{case}
  $A = \upls{A}$; again, there are three possible assignments for $B$ and $C$, and we will take the subcase when $B = \upls{B}$ and $C = \upls{C}$ with $\Lo{A} \leq \Lo{B} \leq \Lo{C}$. The other two cases are similar. This case finally uses our assumption that $(A, C, \hat{E}) \in \dstxt$ -- $\hat{E}$ must be $\top$ due to $A = \upls{A}$.
  \begin{flalign*}
    (\upls{A}, \upls{C}, \top) &\in \dstxt_{\Downarrow}, (\upls{B}, \upls{C}, \top) \in \dstxt \tag{this case} \\
    (\upls{A}, \upls{C}, \top) &\in \dstxt \tag{By assumption with $\hat{E} = \top$} \\
    (\Lo{A}, \Lo{C}, \upls{A}) &\in \dstxt \tag{by inversion on $D\Lupls$} \\
    (\Lo{A}, \Lo{C}, \upls{A}) &\in \dstxt' \tag{since $\dstxt \subseteq \dstxt'$} \\
    (\upls{A}, \upls{C}, \top) &\in F(\dstxt') \tag{by $D\Lupls$}
  \end{flalign*}
\end{case}
We missed one case, when $A = B = C = 1$, but this case is trivial since $\dstxt_{\Downarrow}$ does not add any new members to the set.
\end{proof}

\begin{lem}
  If $\dsyncl{A}{B}{\hat{C}}$ and $\dsyncl{A}{B}{\hat{D}}$, then \\${\dsyncl{A}{B}{\hat{C} \land \hat{D}}}$ for some $\Lo{A}, \Lo{B}, \hat{C},$ and $\hat{D}$.
\end{lem}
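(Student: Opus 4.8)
The plan is to mirror the coinductive, set-based argument used for the three preceding synchronization lemmas. I would introduce the candidate relation
\[
  \dstxt' \defeq \dstxt \cup \dstxt_{\land}, \qquad
  \dstxt_{\land} \defeq \{(A, B, \hat{C} \land \hat{D}) \defor (A, B, \hat{C}) \in \dstxt \land (A, B, \hat{D}) \in \dstxt\},
\]
and prove that $\dstxt'$ is $F$-consistent, that is, $\dstxt' \subseteq F(\dstxt')$. Since $\dstxt$ is the greatest $F$-consistent set, this yields $\dstxt' \subseteq \dstxt$, and specializing to linear $\Lo{A}, \Lo{B}$ gives the statement. As in the earlier proofs, $\dstxt \subseteq F(\dstxt')$ is immediate from $\dstxt \subseteq F(\dstxt)$ and monotonicity of $F$, so all the work lies in establishing $\dstxt_{\land} \subseteq F(\dstxt')$.

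The key structural fact I would exploit is that the two hypotheses $(A, B, \hat{C})$ and $(A, B, \hat{D})$ share the same provider type $A$ and client type $B$, so both must be justified by the same rule; I can therefore proceed by a single case analysis on the structure of $A$. This analysis must also range over the shared connectives, since the recursion under $D\Ldownsl$ descends into shared judgments even though the lemma itself is stated for linear $A, B$. For every rule that threads the constraint unchanged from conclusion to premises --- $D1, D\otimes, D\multimap, D\oplus, D\&, D\Lupll,$ and $D\Ldownll$ --- I would invert both hypotheses to obtain the matching premises in $\dstxt$, once with constraint $\hat{C}$ and once with $\hat{D}$; pairing the two premises places them in $\dstxt_{\land} \subseteq \dstxt'$, and reapplying the rule yields the conclusion in $F(\dstxt')$. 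In these cases the meet $\hat{C} \land \hat{D}$ is merely carried along and never inspected, and the fixed choice structure of $A$ and $B$ guarantees that the branch labels in $D\oplus$ and $D\&$ agree so that the premises line up.

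The rules that reset the constraint require slightly more care. For $D\Lupls$ and $D\Lupls\Lupll$ the conclusion constraint is forced to be $\top$, so necessarily $\hat{C} = \hat{D} = \top$ and hence $\hat{C} \land \hat{D} = \top$; the recorded premise (with constraint $\upls{A}$) is identical in both derivations and already lies in $\dstxt \subseteq \dstxt'$, so the rule reapplies directly. The genuinely interesting cases are $D\Ldownsl$ and $D\Ldownsl\Ldownll$: inversion of the two hypotheses yields the same premise (with constraint $\top$) together with the two release checks $\downsl{A} \leq \hat{C}$ and $\downsl{A} \leq \hat{D}$. Here I would appeal to Lemma~\ref{lem:subtype-meet}, which establishes that $\hat{C} \land \hat{D}$ is the greatest lower bound of $\hat{C}$ and $\hat{D}$, to conclude $\downsl{A} \leq \hat{C} \land \hat{D}$ --- exactly the side condition needed to reapply the rule and place the conclusion in $F(\dstxt')$.

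I expect these two downshift cases to be the only real obstacle, as they are the sole point at which the constraint is actually compared against a release type, and thus the only place where the greatest-lower-bound property of the meet (rather than mere bookkeeping) is invoked; everywhere else the argument is the routine threading of $\hat{C} \land \hat{D}$ through the continuations. The one subtlety to keep in view is that the case analysis must be driven by the common types $A$ and $B$, which is precisely what forces the two derivations to agree on their final rule and hence lets their premises be combined into a single $\dstxt_{\land}$ obligation.
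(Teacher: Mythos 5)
Your proposal is correct and follows essentially the same route as the paper's proof: the same candidate relation $\dstxt \cup \dstxt_{\land}$, the same $F$-consistency argument by case analysis on the common structure of $A$ and $B$, and the same appeal to the greatest-lower-bound property of $\land$ at the $S\Ldownsl$ (and $S\Ldownsl\Ldownll$) cases, which are indeed the only places the meet is genuinely inspected. The only departures are cosmetic: the paper skips the $\Lupls$ cases outright since the lemma restricts $A, B$ to linear types (the shared premises under $S\Ldownsl$ carry constraint $\top$ and already lie in $\dstxt$, so the candidate set need not be closed under shared pairs), and it splits the downshift case on whether $\hat{C} \land \hat{D} = \bot$, whereas your direct use of the glb property subsumes that subcase.
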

\begin{proof}
  First, recall that $\dsyncl{A}{B}{-}$ requires that $\Lo{A} \leq \Lo{B}$.  We want to show that
  $$\dstxt' \defeq \dstxt \cup \dstxt_{\land}$$
  is $F$-consistent with
  \begin{align*}
    \dstxt_{\land} \defeq \{(\Lo{A}, \Lo{B}, \hat{C} \land \hat{D})
      \defor (\Lo{A}, \Lo{B}, \hat{C}) \in \dstxt \land (\Lo{A}, \Lo{B}, \hat{D}) \in \dstxt
  \end{align*}
  As per usual, we will prove $F$-consistency of $\dstxt'$, that is, $\dstxt' \in F(\dstxt')$ by showing that each of the two sets $\dstxt$ and $\dstxt_{\land}$ are subsets of $F(\dstxt')$. \\
  $\dstxt \subseteq F(\dstxt')$ immediately follows from the same argument as in previous lemmas. \\
  We will now consider $\dstxt_{\land} \in F(\dstxt')$ by case analysis on the structure of $\Lo{A}$. We can infer the structure of $\Lo{B}$
  by inversion on the appropriate subtyping rule. For ease of presentation, let $\hat{E} = \hat{C} \land \hat{D}$; we will expand $\hat{E}$
  whenever necessary.
  \begin{case}
    $\Lo{A} = \upll{A'}$; then $\Lo{B} = \upll{B'}$ with $\Lo{A'} \leq \Lo{B'}$.
    \begin{flalign*}
      (\upll{A'}, \upll{B'}, \hat{E}) &\in \dstxt_{\land}, (\upll{A'}, \upll{B'}, \hat{C}) \in \dstxt, (\upll{A'}, \upll{B'}, \hat{D}) \in \dstxt \tag{this case} \\
      (\Lo{A'}, \Lo{B'}, \hat{C}) &\in \dstxt, (\Lo{A'}, \Lo{B'}, \hat{D}) \in \dstxt \tag{by inversion on $D\Lupll$} \\
      (\Lo{A'}, \Lo{B'}, \hat{E}) &\in \dstxt_{\land} \tag{by definition of $\dstxt_{\land}$} \\
      (\Lo{A'}, \Lo{B'}, \hat{E}) &\in \dstxt' \tag{since $\dstxt_{\land} \subseteq \dstxt'$} \\
      (\upll{A'}, \upll{B'}, \hat{E}) &\in F(\dstxt') \tag{by $D\Lupll$}
    \end{flalign*}
  \end{case}
  $\Ldownll, \otimes,$ and $\multimap$ follow a similar pattern of appealing to the continuation types. 
  \begin{case}
    $\Lo{A} = \intch{\caselist{l}{A}}$; then
    $\Lo{B} = \intch{\caselist{l}{B}, \caselist{m}{B}}$; with
    $\Lo{{A_i}} \leq \Lo{{B_i}} \; \forall i \in \overline{l}$.
    \begin{flalign*}
      (\intch{\caselist{l}{A}}, \intch{\caselist{l}{B}, \caselist{m}{B}}, \hat{E}) &\in \dstxt_{\land} \\
      (\intch{\caselist{l}{A}}, \intch{\caselist{l}{B}, \caselist{m}{B}}, \hat{C})
                                                                                   &\in \dstxt, (\intch{\caselist{l}{A}}, \intch{\caselist{l}{B}, \caselist{m}{B}},  \hat{D}) \in \dstxt \tag{this case} \\
      (\forall i \in \overline{l}) \; (\Lo{{A_i}}, \Lo{{B_i}}, \hat{C}) &\in \dstxt, (\Lo{{A_i}}, \Lo{{B_i}}, \hat{D}) \in \dstxt, \tag{by inversion on $D{\oplus}$} \\
      (\forall i \in \overline{l}) \; (\Lo{{A_i}}, \Lo{{B_i}}, \hat{E}) &\in \dstxt_{\land} \tag{by definition of $\dstxt_{\land}$} \\
      (\forall i \in \overline{l}) \; (\Lo{{A_i}}, \Lo{{B_i}}, \hat{E}) &\in \dstxt' \tag{since $\dstxt_{\land} \subseteq \dstxt'$} \\
      (\intch{\caselist{l}{A}}, \intch{\caselist{l}{B}, \caselist{m}{B}}, \hat{E}) &\in F(\dstxt') \tag{by $D{\oplus}$}
    \end{flalign*}
  \end{case}
  $D{\&}$ follows a similar pattern.
  \begin{case}
    $\Lo{A} = \downsl{A}$; then there are two subcases for the structure of $\Lo{B}$. We shall take the case when $\Lo{B} = \downsl{B}$ with $\So{A} \leq \So{B}$, but the other case, when $\Lo{B} = \downll{B'}$ follows a similar pattern. \\
    At this point we realize what $\hat{E}$ has to be -- either $\hat{E} = \bot$, in which case we want to derive a contradiction for this case (the $\bot$ constraint requires that there be no releases) or $\hat{E} = \So{E}$ meaning $\hat{E}$ is a non-trivial meet.
    \begin{subcase}
      $\hat{E} = \bot$.
      \begin{flalign*}
        (\downsl{A}, \downsl{B}, \bot) &\in \dstxt_{\land}, (\downsl{A}, \downsl{B}, \hat{C}) \in \dstxt, (\downsl{A}, \downsl{B}, \hat{D}) \in \dstxt \tag{this case} \\
        (\So{A}, \So{B}, \top) &\in \dstxt, \So{A} \leq \hat{C} \tag{by inversion on $D\Ldownsl$} \\
        (\So{A}, \So{B}, \top) &\in \dstxt, \So{A} \leq \hat{D} \tag{by inversion on $D\Ldownsl$} \\
        \text{Contradiction} \tag{since $\So{A}$ is a lower bound of $\hat{C} \land \hat{D}$ but $\So{A}$ is strictly greater than $\bot$}
      \end{flalign*}
    \end{subcase}
    \begin{subcase}
      $\hat{E} = \So{E}$ for some $\So{E}$.
      \begin{flalign*}
        (\downsl{A}, \downsl{B}, \So{E}) &\in \dstxt_{\land}, (\downsl{A}, \downsl{B}, \hat{C}) \in \dstxt, (\downsl{A}, \downsl{B}, \hat{D}) \in \dstxt \tag{this case} \\
        (\So{A}, \So{B}, \top) &\in \dstxt, \So{A} \leq \hat{C}, \So{A} \leq \hat{D} \tag{by inversion on $D\Ldownsl$} \\
        (\So{A}, \So{B}, \top) &\in \dstxt' \tag{since $\dstxt \subseteq \dstxt'$} \\
        (\downsl{A}, \downsl{C}, \So{E}) &\in F(\dstxt') \tag{by $D\Ldownsl$ with $\So{A} \leq \So{E}$ because $\So{A}$ is a lower bound of $\hat{C}$ and $\hat{D}$ and $\So{E}$ is the greatest lower bound}
      \end{flalign*}
    \end{subcase}
  \end{case}
  Unlike the previous lemmas, we require $\Lo{A}$ to be linear, so we do not need to consider $\Lupls$. The case when $A = B = 1$ is trivial.
\end{proof}

\section{Preservation Theorem}
\label{app:preservation-proof}
\begin{thm}[Preservation]
If $\JCb{\Gamma}{\Lambda}{\Theta}{\Delta}$ for some $\Lambda, \Theta, \Gamma,$ and $\Delta$, and $\Lambda;\Theta \rightarrow \Lambda';\Theta'$ for some $\Lambda';\Theta'$, then $\JCb{\Gamma'}{\Lambda'}{\Theta'}{\Delta}$ where $\Gamma' \preceq \Gamma$.
\end{thm}
\begin{proof}
By induction on the dynamics to construct a well-formed and well-typed configuration starting with $\JCb{\Gamma}{\Lambda}{\Theta}{\Delta}$.
\paragraph{Notation}
Many of the proof cases involve transitions between linear process terms (either proc or connect).
When reasoning with these transitions, we adopt the notation that $\Psi_a \to \Psi_a'$ that is, $\Psi_a$ represents the process term offering
$a$ before the transition and $\Psi_a'$ represents the process term offering $a$ after the transition.

\begin{case}
  \ref{dyn:fwdls}
  $$\procl{a}{\fwdls{a}{b}} \to \connect{a}{b}$$
  where $\Psi_a = \procl{a}{\fwdls{a}{b}}$ and $\Psi_a' = \connect{a}{b}$ (for the remaining cases, these metavariable assignments are implicit).
  Let $\Theta = \Theta_1, \Psi_a, \Theta_2$. Then by well-formedness, $\Lambda = \unavail{a}, \Lambda_1$.
  \begin{align*}
    &\JCb{\Gamma}{\Lambda}{\Theta_1, \Psi_a, \Theta_2}{\Delta}	\tag{assumption} \\
    &\JCs{\Gamma}{\Lambda}{\Gamma} \quad \JCl{\Gamma}{\Theta_1, \Psi_a, \Theta_2}{\Delta}	\tag{by inversion on $\Omega$} \\
    &\JCl{\Gamma}{\procl{a}{\fwdls{a}{b}}, \Theta_2}{\Lt{a}{A}, \Delta_p} \tag{by Lemma~\ref{lem:cfg-peel} and expanding $\Psi_a$} \\
    &\JCl{\Gamma}{\Theta_2}{\Delta_p} \quad \JTl{\Gamma}{\cdot}{\fwdls{a}{b}}{a}{A'} \tag{by inversion on $\Theta 3$} \\
    &\Sto{b}{\hat{B}} \in \Gamma \quad \hat{B} \leq \Lo{A'} \tag{by inversion on $ID_{\scriptscriptstyle {LS}}$} \\
    &\hat{B} \leq \Lo{A} \tag{by transitivity of $\leq$} \\
    &\JCpc{\Gamma}{a}{b}{\Lo{A}}{\Theta_2}{\Delta_p} \tag{by $\Theta2$} \\
    &\JCl{\Gamma}{\Theta_1, \Psi_a', \Theta_2}{\Delta} \tag{by Lemma~\ref{lem:cfg-subs}} \\
    &\JCb{\Gamma}{\Lambda}{\Theta_1, \Psi_a', \Theta_2}{\Delta}	\tag{by $\Omega$}
  \end{align*}
  The well-formedness conditions are maintained because only $\Psi_a \in \Theta$ was replaced by $\Psi_a'$.
\end{case}

\begin{case}
  \ref{dyn:extch}
  $$\procl{a}{b.i; P}, \procl{b}{\casep{b}{\caselistp{l}{Q}, \caselistp{m}{Q}}} \to \procl{a}{P}, \procl{b}{Q_i} \quad (i \in \overline{l})$$
  Then $\Theta = \Theta_1, \Psi_a, \Theta_2, \Psi_b, \Theta_3$
  \begin{align*}
    &\JCb{\Gamma}{\Lambda}{\Theta_1, \Psi_a, \Theta_2, \Psi_b, \Theta_3}{\Delta} \tag{assumption} \\
    &\JCs{\Gamma}{\Lambda}{\Gamma} \quad \JCl{\Gamma}{\Theta_1, \Psi_a, \Theta_2, \Psi_b, \Theta_3}{\Delta} \tag{by inversion on $\Omega$} \\
    &\JCl{\Gamma}{\Psi_a, \Theta_2, \Psi_b, \Theta_3}{\Lt{a}{A}, \Delta_r} \tag{by Lemma~\ref{lem:cfg-peel}} \\
    &\JCl{\Gamma}{\Psi_a, \Psi_b, \Theta_r}{\Lt{a}{A}, \Delta_r} \tag{by Lemma~\ref{lem:cfg-permutation} and $\Theta_r = \Theta_2, \Theta_3$} \\
    &\JCl{\Gamma}{\Psi_b, \Theta_r}{\Lto{b}{\extch{\caselist{l}{B}}}, \Delta_a, \Delta_r} \quad \JTl{\Gamma}{\Delta_a}{b.i; P}{a}{A'} \\
    &\quad\Sto{a}{\hat{A}} \in \Gamma \quad \dsyncl{A'}{A}{\hat{A}} \tag{by inversion on $\Theta 3$} \\
    &\JCl{\Gamma}{\Theta_r}{\Delta_a, \Delta_b, \Delta_r} \quad
      \JTlr{\Gamma}{\Delta_b}{\casep{b}{\caselistp{l}{Q}, \caselistp{m}{Q}}}{b}{\extch{\caselist{l}{B'}, \caselist{m}{B'}}} \\
    &\quad\Sto{b}{\hat{B}} \in \Gamma \quad \dsync{\extch{\caselist{l}{B'}, \caselist{m}{B'}}}{\extch{\caselist{l}{B}}}{\hat{B}} \tag{by inversion on $\Theta 3$} \\
    &\JTl{\Gamma}{\Delta_b}{Q_i}{b}{{B_i}'} \tag{inversion on ${\&}R$} \\
    &\Lo{{B_i}'} \leq \Lo{{B_i}} \quad \dsyncl{{B_i'}}{{B_i}}{\hat{B}} \tag{by inversion on $\leq_{\&}$ and E{\&} respectively} \\
    &\JCl{\Gamma}{\Psi_b', \Theta_r}{\Lt{b}{{B_i}}, \Delta_r, \Delta_a} \tag{by $\Theta 3$} \\
    &\JTl{\Gamma}{\Delta_a, \Lt{b}{{B_i}}}{P}{a}{A'} \tag{inversion on ${\&}L$} \\
    &\JCl{\Gamma}{\Psi_a', \Psi_b', \Theta_r}{\Lt{a}{A}, \Delta_r} \tag{by $\Theta 3$} \\
    &\JCl{\Gamma}{\Theta_1, \Psi_a', \Psi_b', \Theta_r}{\Delta} \tag{by Lemma~\ref{lem:cfg-subs}} \\
    &\JCb{\Gamma}{\Lambda}{\Theta_1, \Psi_a', \Psi_b', \Theta_r}{\Delta} \tag{by $\Omega$}
  \end{align*}
  The well-formedness conditions are maintained because $\Psi_a$ and $\Psi_b$ were replaced by $\Psi_a'$ and $\Psi_b'$ respectively in $\Theta$.
\end{case}

The proof of \ref{dyn:intch} is similar to \ref{dyn:extch}.

\begin{case}
  \ref{dyn:tensor}
  $$\procl{a}{\asgn{\Lo{y}}{\recv{b}}; P}, \procl{b}{\sendl{b}{c}; Q}, \Psi_c \to \procl{a}{[\Lsub{c}{y}]P}, \procl{b}{Q}, \Psi_c$$
  Then $\Theta = \Theta_1, \Psi_a, \Theta_2, \Psi_b, \Theta_3, \Psi_c, \Theta_4$.
  \begin{align*}
    &\JCb{\Gamma}{\Lambda}{\Theta_1, \Psi_a, \Theta_2, \Psi_b, \Theta_3, \Psi_c, \Theta_4}{\Delta} \tag{assumption} \\
    &\JCs{\Gamma}{\Lambda}{\Gamma} \quad \JCl{\Gamma}{\Theta_1, \Psi_a, \Theta_2, \Psi_b, \Theta_3, \Psi_c, \Theta_4}{\Delta} \tag{by inversion on $\Omega$} \\
    &\JCl{\Gamma}{\Psi_a, \Theta_2, \Psi_b, \Theta_3, \Psi_c, \Theta_4}{\Lt{a}{A}, \Delta_r} \tag{by Lemma~\ref{lem:cfg-peel}} \\
    &\JCl{\Gamma}{\Psi_a, \Psi_b, \Psi_c, \Theta_r}{\Lt{a}{A}, \Delta_r} \tag{by Lemma~\ref{lem:cfg-permutation} and $\Theta_r = \Theta_2, \Theta_3, \Theta_4$} \\
    &\JCl{\Gamma}{\Psi_b, \Psi_c, \Theta_r}{\Lto{b}{\tensor{C^a}{B}}, \Delta_a, \Delta_r} \quad
      \JTl{\Gamma}{\Delta_a, \Lt{b}{B}}{\asgn{\Lo{y}}{\recv{b}}; P}{a}{A'} \\
    &\Sto{a}{\hat{A}} \in \Gamma \quad \dsyncl{A'}{A}{\hat{A}} \tag{by inversion on $\Theta 3$} \\
    &\JTl{\Gamma}{\Delta_a, \Lt{b}{B}, \Lt{c}{C^a}}{[\Lsub{c}{y}]P}{a}{A'} \tag{by inversion on ${\otimes}L$ and $\alpha$ equivalance} \\
    &\JCl{\Gamma}{\Psi_c, \Theta_r}{\Lt{c}{C}, \Delta_a, \Delta_b, \Delta_r} \quad
      \JTlr{\Gamma}{\Delta_b, \Lt{c}{C}}{\sendl{b}{c}; Q}{b}{\tensor{C^b}{B'}} \\
    &\Sto{b}{\hat{B}} \in \Gamma \quad \dsync{\tensor{C^b}{B'}}{\tensor{C^a}{B}}{\hat{B}} \tag{by inversion on $\Theta 3$} \\
    &\JTl{\Gamma}{\Delta_b}{Q}{b}{B'} \quad \Lo{C} \leq \Lo{C^b} \tag{by inversion on ${\otimes}R$} \\
    &\Lo{C^b} \leq \Lo{C^a} \quad \Lo{B'} \leq \Lo{B} \quad \dsyncl{B'}{B}{\hat{B}} \tag{by inversion on $\leq_{\otimes}$ and $E{\otimes}$ respectively} \\
    &\JCl{\Gamma}{\Psi_c, \Theta_r}{\Lt{c}{C^a}, \Delta_a, \Delta_b, \Delta_r} \tag{by Lemma~\ref{lem:cfg-bigger} since $\Lo{C} \leq \Lo{C^b} \leq \Lo{C^a}$.} \\
    &\JCl{\Gamma}{\Psi_b', \Psi_c, \Theta_r}{\Lt{b}{B}, \Lt{c}{C^a}, \Delta_a, \Delta_r} \tag{by $\Theta3$} \\
    &\JCl{\Gamma}{\Psi_a', \Psi_b', \Psi_c, \Theta_r}{\Lt{a}{A}, \Delta_r} \tag{by $\Theta3$} \\
    &\JCl{\Gamma}{\Theta_1, \Psi_a', \Psi_b', \Psi_c, \Theta_r}{\Delta} \tag{by Lemma~\ref{lem:cfg-subs}} \\
    &\JCb{\Gamma}{\Lambda}{\Theta_1, \Psi_a', \Psi_b', \Psi_c, \Theta_r}{\Delta} \tag{by $\Omega$}
  \end{align*}
  The well-formedness conditions are maintained because $\Psi_a$ and $\Psi_b$ were replaced by $\Psi_a'$ and $\Psi_b'$ respectively in $\Theta$.
\end{case}

\begin{case}
  \ref{dyn:tensor2}
  \begin{align*}
    &\procl{a}{\asgn{\Lo{y}}{\recv{b}}; P}, \procl{b}{\sends{b}{c}; Q} \\
    \to \quad &\procl{a}{[\Lsub{d}{y}]P}, \procl{b}{Q}, \connect{d}{c}, \unavail{d} \fresh{d}
  \end{align*}
  Then $\Theta = \Theta_1, \Psi_a, \Theta_2, \Psi_b, \Theta_3$.
  \begin{align*}
    &\JCb{\Gamma}{\Lambda}{\Theta_1, \Psi_a, \Theta_2, \Psi_b, \Theta_3}{\Delta} \tag{assumption} \\
    &\JCs{\Gamma}{\Lambda}{\Gamma} \quad \JCl{\Gamma}{\Theta_1, \Psi_a, \Theta_2, \Psi_b, \Theta_3}{\Delta} \tag{by inversion on $\Omega$} \\
    &\JCl{\Gamma}{\Psi_a, \Theta_2, \Psi_b, \Theta_3}{\Lt{a}{A}, \Delta_r} \tag{by Lemma~\ref{lem:cfg-peel}} \\
    &\JCl{\Gamma}{\Psi_a, \Psi_b, \Theta_r}{\Lt{a}{A}, \Delta_r} \tag{by Lemma~\ref{lem:cfg-permutation} and $\Theta_r = \Theta_2, \Theta_3$} \\
    &\JCl{\Gamma}{\Psi_b, \Theta_r}{\Lto{b}{\tensor{C^a}{B}}, \Delta_a, \Delta_r} \quad
      \JTl{\Gamma}{\Delta_a, \Lt{b}{B}}{\asgn{\Lo{y}}{\recv{b}}; P}{a}{A'} \\
    &\Sto{a}{\hat{A}} \in \Gamma \quad \dsyncl{A'}{A}{\hat{A}} \tag{by inversion on $\Theta 3$} \\
    &\JTl{\Gamma}{\Delta_a, \Lt{b}{B}, \Lt{d}{C^a}}{[\Lsub{d}{y}]P}{a}{A'} \tag{by inversion on ${\otimes}L$ and $\alpha$ equivalance} \\
    &\JCl{\Gamma}{\Theta_r}{\Delta_a, \Delta_b, \Delta_r} \quad
      \JTlr{\Gamma}{\Delta_b, \Lt{c}{C}}{\sends{b}{c}; Q}{b}{\tensor{C^b}{B'}} \\
    &\Sto{b}{\hat{B}} \in \Gamma \quad \dsync{\tensor{C^b}{B'}}{\tensor{C^a}{B}}{\hat{B}} \tag{by inversion on $\Theta 3$} \\
    &\JTl{\Gamma}{\Delta_b}{Q}{b}{B'} \quad \hat{C} \leq \Lo{C^b} \tag{by inversion on ${\otimes}\So{R}$} \\
    &\JCpc{\Gamma}{d}{c}{\Lo{C^a}}{\Theta_r}{\Delta_a, \Delta_b, \Delta_r} \\
    &\JCl{\Gamma}{\Psi_b', \Psi_d, \Theta_r}{\Lt{b}{B}, \Lt{d}{C^a}, \Delta_a, \Delta_r} \tag{by $\Theta3$ where $\Psi_d = \connect{d}{c}$} \\
    &\JCl{\Gamma}{\Psi_a', \Psi_b', \Psi_d, \Theta_r}{\Lt{a}{A}, \Delta_r} \tag{by $\Theta3$} \\
    &\JCl{\Gamma}{\Theta_1, \Psi_a', \Psi_b', \Psi_d, \Theta_r}{\Delta} \tag{by Lemma~\ref{lem:cfg-subs}} \\
    &\JCl{\Gamma'}{\Theta_1, \Psi_a', \Psi_b', \Psi_d, \Theta_r}{\Delta} \tag{by Lemma~\ref{lem:cfg-stable} with $\Gamma' = \Gamma, \Sto{d}{\bot}$} \\
    &\JCs{\Gamma'}{\Lambda}{\Gamma} \tag{by Lemma~\ref{lem:cfg-stable}} \\
    &\JCs{\Gamma'}{\unavail{d}}{\Sto{d}{\bot}} \tag{by $\Lambda4$} \\
    &\JCs{\Gamma'}{\Lambda, \unavail{d}}{\Gamma'} \tag{by $\Lambda2$} \\
    &\JCb{\Gamma'}{\Lambda, \unavail{d}}{\Theta_1, \Psi_a', \Psi_b', \Psi_d, \Theta_r}{\Delta} \tag{by $\Omega$}
  \end{align*}
  The well-formedness conditions are maintained because $\Psi_a$ and $\Psi_b$ were replaced by $\Psi_a'$ and $\Psi_b'$ respectively in $\Theta$ and a $\Psi_d$ was added in $\Theta$ where $d$ is fresh along with a corresponding $\unavail{d}$ in $\Lambda' = \Lambda, \unavail{d}$.
\end{case}

The proofs of \ref{dyn:loli} and \ref{dyn:loli2} are similar to \ref{dyn:tensor} and \ref{dyn:tensor2} respectively.\\
We will now present some of the harder cases:
\begin{case}
  \ref{dyn:fwdll}
  $$\procl{a}{\fwdll{a}{b}}, \Psi_b \to \Psi_b \quad (\Lo{a} := \Lo{b}, \So{a} := \So{b})$$
  Then $\Theta = \Theta_1, \Psi_a, \Theta_2, \Psi_b, \Theta_3$ and $\Lambda = \unavail{a}, \unavail{b}, \Lambda_1$ by Lemma~\ref{lem:cfg-balance}.
  \begin{align*}
    &\JCb{\Gamma}{\Lambda}{\Theta_1, \Psi_a, \Theta_2, \Psi_b, \Theta_3}{\Delta} \tag{assumption} \\
    &\JCs{\Gamma}{\Lambda}{\Gamma} \quad \JCl{\Gamma}{\Theta_1, \Psi_a, \Theta_2, \Psi_b, \Theta_3}{\Delta} \tag{by inversion on $\Omega$} \\
    &\JCl{\Gamma}{\Psi_a, \Theta_2, \Psi_b, \Theta_3}{\Lt{a}{A}, \Delta_r} \tag{by Lemma~\ref{lem:cfg-peel}} \\
    &\JCl{\Gamma}{\Psi_a, \Psi_b, \Theta_r}{\Lt{a}{A}, \Delta_r} \tag{by Lemma~\ref{lem:cfg-permutation} and $\Theta_r = \Theta_2, \Theta_3$} \\
    &\JCl{\Gamma}{\Psi_b, \Theta_r}{\Lto{b}{\Lt{b}{B}}, \Delta_r} \; \JTl{\Gamma}{\Lt{b}{B}}{\fwdll{a}{b}}{a}{A'} \;
      \Sto{a}{\hat{A}} \in \Gamma \; \dsyncl{A'}{A}{\hat{A}} \tag{by inversion on $\Theta 3$} \\
    &\Lo{B} \leq \Lo{A'} \tag{by inversion on $\Lo{ID}$}
  \end{align*}
  At this point we need to case on the structure of $\Psi_b$. In both cases we will show that $\JCl{\Gamma'}{\Psi_a', \Theta_r}{\Lt{a}{A}, \Delta_r}$ for some $\Gamma' \preceq \Gamma$ and $\Psi_a'$ being directly defined from $\Psi_b$.
  \begin{subcase}
    $\Psi_b = \connect{b}{c}$ for some $\So{c}$.
    \begin{align*}
      &\JCpc{\Gamma}{b}{c}{\Lo{B}}{\Theta_r}{\Delta_r} \quad \Sto{c}{\hat{C}} \in \Gamma \quad \hat{C} \leq \Lo{B} \tag{by inversion on $\Theta2$} \\
      &\hat{C} \leq \Lo{A} \tag{by transitivity of $\leq$} \\
      &\JCpc{\Gamma}{b}{c}{\Lo{A}}{\Theta_r}{\Delta_r} \tag{by $\Theta2$} \\
      &\JCpc{\Gamma}{a}{c}{\Lo{A}}{\Theta_r}{\Delta_r} \tag{from renaming}
    \end{align*}
  \end{subcase}
  \begin{subcase}
    $\Psi_b = \procl{b}{P}$ for some process term $P$.
    \begin{align*}
      &\JCl{\Gamma}{\Theta_r}{\Delta_b, \Delta_r} \quad \JTl{\Gamma}{\Delta_b}{P}{b}{B'} \quad \Sto{b}{\hat{B}} \in \Gamma \quad \dsyncl{B'}{B}{\hat{B}} \tag{by inversion on $\Theta 3$} \\
      &\Lo{B'} \leq \Lo{B} \leq \Lo{A'} \leq \Lo{A} \\
      &\dsyncl{B'}{A}{\hat{B}} \quad \dsyncl{B'}{A}{\hat{A}} \tag{by Lemma~\ref{lem:dsync-bigger} and Lemma~\ref{lem:dsync-smaller} respectively} \\
      &\dsyncl{B'}{A}{\hat{B} \land \hat{A}} \tag{by Lemma~\ref{lem:dsync-meet}} \\ 
      &\JCl{\Gamma'}{\Theta_r}{\Delta_b, \Delta_r} \tag{by Lemma~\ref{lem:cfg-stable} with $\Gamma' = [\Sto{a}{\hat{B} \land \hat{A}}/\Sto{a}{\hat{A}}]\Gamma$} \\
      &\JTl{\Gamma'}{\Delta_b}{P}{b}{B'} \tag{by Lemma~\ref{lem:static-stable}} \\
      &\JTl{\Gamma'}{\Delta_b}{[\Lsub{a}{b}, \Ssub{a}{b}]P}{a}{B'} \tag{by $\alpha$ equivalence for $\Lsub{a}{b}$ and a combination of $\alpha$ equivalence and Lemma~\ref{lem:cfg-stable} for $\Ssub{a}{b}$} \\
      &\JCpp{\Gamma'}{a}{[\Lsub{a}{b}, \Ssub{a}{b}]P}{\Lo{A}}{\Theta_r}{\Delta_r} \tag{by $\Theta3$}
    \end{align*}
  \end{subcase}
  We will now continue assuming $\JCl{\Gamma'}{\Psi_a', \Theta_r}{\Lt{a}{A}, \Delta_r}$ with $\Gamma' \preceq \Gamma$ and $\Psi_a' = [\Lsub{a}{b}, \Ssub{a}{b}]\Psi_b$. For the connect case that did not require a smaller $\Gamma$, simply set $\Gamma' = \Gamma$ since $\Gamma' \preceq \Gamma$ by reflexivity.
  \begin{align*}
    &\JCl{\Gamma'}{\Theta_1, \Psi_a, \Theta_r}{\Delta} \tag{by Lemma~\ref{lem:cfg-stable}} \\
    &\JCl{\Gamma'}{\Theta_1, \Psi_a', \Theta_r}{\Delta} \tag{by Lemma~\ref{lem:cfg-subs}} \\
    &\JCs{\Gamma'}{\Lambda}{\Gamma} \tag{by Lemma~\ref{lem:cfg-subs}} \\
    &\JCs{\Gamma'}{\unavail{a}}{\Sto{a}{\bot}} \tag{by $\Lambda4$} \\
    &\JCs{\Gamma'}{\unavail{b}, \Theta_1}{\Gamma''} \tag{by inversion on $\Lambda2$ where $\Gamma' = \Gamma'', \Sto{a}{\bot}$} \\
    &\JCs{\Gamma'}{\Lambda}{\Gamma'} \tag{by $\Lambda2$} \\
    &\JCs{\Gamma'}{[\Ssub{a}{b}]\Lambda}{\Gamma'} \tag{by $\alpha$ equivalence} \\ 
    &\JCl{\Gamma'}{[\Ssub{a}{b}]\Theta_1, \Psi_a', [\Ssub{a}{b}]\Theta_r}{\Delta} \tag{by $\alpha$ equivalence} \\
    &\JCb{\Gamma'}{\Lambda}{[\Ssub{a}{b}, \Lsub{a}{b}]\Theta_1, \Psi_a', [\Ssub{a}{b}]\Theta_r}{\Delta} \tag{by $\Omega$}
  \end{align*}
  Well-formedness is easily maintained because we only removed something from the linear fragment (it is okay to have dangling unavail terms in the shared fragment).
\end{case}

\begin{case}
  \ref{dyn:upls}
  \begin{align*}
    &\procl{a}{\asgn{\Lo{x}}{\acqs{b}}; P}, \procs{b}{\asgn{\Lo{x}}{\accs{b}}; Q} \\
    \to \quad &\procl{a}{[\Lsub{b}{x}]P}, \procl{b}{[\Lsub{b}{x}]Q}, \unavail{b}
  \end{align*}
  Then $\Lambda = \Lambda_b, \Lambda_1$ and $\Theta = \Theta_1, \Psi_a, \Theta_2$ with $\Lambda_b = \procs{b}{\asgn{\Lo{x}}{\accs{b}}; Q}$.\\
  We also define $\Psi_b' = \procl{b}{[\Lsub{b}{x}]Q}$.
  \begin{align*}
    &\JCb{\Gamma}{\Lambda_b, \Lambda_1}{\Theta_1, \Psi_a, \Theta_2}{\Delta}	\tag{assumption} \\
    &\JCs{\Gamma}{\Lambda_b, \Lambda_1}{\Gamma} \quad \JCl{\Gamma}{\Theta_1, \Psi_a, \Theta_2}{\Delta}	\tag{by inversion on $\Omega$} \\
    &\JCs{\Gamma}{\Lambda_b}{\Sto{b}{\upls{B}}} \quad \JCs{\Gamma}{\Lambda_1}{\Gamma'} \tag{by inversion on $\Lambda2$ with $\Gamma = \Sto{b}{\upls{B}}, \Gamma'$} \\
    &\dsync{\upls{B'}}{\upls{B}}{\top} \quad \JTsr{\Gamma}{\asgn{\Lo{x}}{\accs{b}}; Q}{b}{\upls{B'}} \tag{by inversion on $\Lambda3$} \\
    &\JTl{\Gamma}{\cdot}{[\Lsub{b}{x}]Q}{b}{B'} \tag{by inversion on $\Lupls R$ and $\alpha$ equivalence} \\
    &\dsyncl{B'}{B}{\upls{B'}} \tag{by inversion on $D\Lupls$} \\
    &\JCl{\Gamma}{\Psi_a, \Theta_2}{\Lt{a}{A}, \Delta_p} \tag{by Lemma~\ref{lem:cfg-peel}} \\
    &\JCl{\Gamma}{\Theta_2}{\Delta_a, \Delta_p} \quad \JTl{\Gamma}{\Delta_a}{\asgn{\Lo{x}}{\acqs{b}}}{a}{A'} \\
    &\Sto{a}{\hat{A}} \in \Gamma \quad \dsyncl{A'}{A}{\hat{A}} \tag{by inversion on $\Theta 3$} \\
    &\JTl{\Gamma}{\Delta_a, \Lt{b}{B^a}}{[\Lsub{b}{x}]P}{a}{A'} \quad \upls{B} \leq \upls{B^a} \tag{by inversion on $\Lupls L$ and $\alpha$ equivalence} \\
    &\JCl{\Gamma}{\Psi_b', \Theta_2}{\Lt{b}{B}, \Delta_a, \Delta_p} \tag{by $\Lambda3$} \\
    &\JCl{\Gamma}{\Psi_b', \Theta_2}{\Lt{b}{B^a}, \Delta_a, \Delta_p} \tag{by Lemma~\ref{lem:cfg-bigger}} \\
    &\JCl{\Gamma}{\Psi_a', \Psi_b', \Theta_2}{\Lto{a}{A}, \Delta_p} \tag{by $\Theta3$} \\
    &\JCl{\Gamma}{\Theta_1, \Psi_a', \Psi_b', \Theta_2}{\Delta} \tag{by Lemma~\ref{lem:cfg-subs}} \\
    &\JCs{\Gamma}{\unavail{b}}{\Sto{b}{\upls{B}}} \tag{by $\Lambda4$} \\
    &\JCs{\Gamma}{\unavail{b}, \Lambda_1}{\Gamma} \tag{by $\Lambda2$} \\
    &\JCb{\Gamma}{\Lambda}{\Theta_1, \Psi_a', \Psi_b', \Theta_2}{\Delta}  \tag{by $\Omega$}
  \end{align*}
  Well-formedness is maintained because $\Psi_b \notin \Theta$ and there is a corresponding $\unavail{b}$ to the newly added $\Psi_b'$.
\end{case}

\begin{case}
  \ref{dyn:upls2}
  \begin{align*}
    &\procl{a}{\asgn{\Lo{x}}{\acql{b}}; P}, \connect{b}{c}, \procs{c}{\asgn{\Lo{x}}{\accs{c}}; Q} \\
    \to \quad &\procl{a}{[\Lsub{c}{x}]P}, \procl{c}{[\Lsub{c}{x}]Q},	\unavail{c}
  \end{align*}
  Then $\Lambda = \Lambda_c, \Lambda_1$ and $\Theta = \Theta_1, \Psi_a, \Theta_2, \Psi_b, \Theta_3$ with $\Lambda_c = \procs{c}{\asgn{\Lo{x}}{\accs{c}}; Q}$.\\
  We also define $\Psi_c' = \procl{c}{[\Lsub{c}{x}]Q}$.
  \begin{align*}
    &\JCb{\Gamma}{\Lambda_c, \Lambda_1}{\Theta_1, \Psi_a, \Theta_2, \Psi_b, \Theta_3}{\Delta}	\tag{assumption} \\
    &\JCs{\Gamma}{\Lambda_c, \Lambda_1}{\Gamma} \quad \JCl{\Gamma}{\Theta_1, \Psi_a, \Theta_2, \Psi_b, \Theta_3}{\Delta}	\tag{by inversion on $\Omega$} \\
    &\JCs{\Gamma}{\Lambda_c}{\Sto{c}{\upls{C}}} \quad \JCs{\Gamma}{\Lambda_1}{\Gamma'} \tag{by inversion on $\Lambda2$ with $\Gamma = \Sto{c}{\upls{C}}, \Gamma'$} \\
    &\dsync{\upls{C'}}{\upls{C}}{\top} \quad \JTsr{\Gamma}{\asgn{\Lo{x}}{\accs{c}}; Q}{c}{\upls{C'}} \tag{by inversion on $\Lambda3$} \\
    &\JTl{\Gamma}{\cdot}{[\Lsub{c}{x}]Q}{c}{C'} \tag{by inversion on $\Lupls R$ and $\alpha$ equivalence} \\
    &\dsyncl{C'}{C}{\upls{C'}} \tag{by inversion on $D\Lupls$} \\
    &\JCl{\Gamma}{\Psi_a, \Theta_2, \Psi_b, \Theta_3}{\Lt{a}{A}, \Delta_p} \tag{by Lemma~\ref{lem:cfg-peel}} \\
    &\JCl{\Gamma}{\Psi_a, \Psi_b, \Theta_r}{\Lt{a}{A}, \Delta_p} \tag{by Lemma~\ref{lem:cfg-permutation} with $\Theta_r = \Theta_2, \Theta_3$} \\
    &\JCl{\Gamma}{\connect{b}{c}, \Theta_2}{\Sto{b}{\upll{B}}, \Delta_a, \Delta_p} \quad \JTl{\Gamma}{\Delta_a, \Sto{b}{\upll{B}}}{\asgn{\Lo{x}}{\acql{b}}}{a}{A'} \\
    &\Sto{a}{\hat{A}} \in \Gamma \quad \dsyncl{A'}{A}{\hat{A}} \tag{by inversion on $\Theta 3$} \\
    &\JCl{\Gamma}{\Theta_r}{\Delta_a, \Delta_p} \quad \upls{C} \leq \upll{B} \tag{by inversion on $\Theta 2$} \\
    &\Lo{C} \leq \Lo{B} \quad \dsyncl{C'}{B}{\upls{C'}} \tag{by inversion on $\leq_{\Lupls\Lupll}$ and Lemma~\ref{lem:dsync-bigger} respectively} \\
    &\JCl{\Gamma}{\Psi_c', \Theta_r}{\Lt{c}{C}, \Delta_a, \Delta_p} \tag{by $\Lambda3$} \\
    &\JTl{\Gamma}{\Delta_a, \Lt{c}{C}}{[\Lsub{c}{x}]P}{a}{A'} \tag{by inversion on $\Lupls L$ and $\alpha$ equivalence} \\
    &\JCl{\Gamma}{\Psi_a', \Psi_c', \Theta_2}{\Lto{a}{A}, \Delta_p} \tag{by $\Theta3$} \\
    &\JCl{\Gamma}{\Theta_1, \Psi_a', \Psi_b', \Theta_2}{\Delta} \tag{by Lemma~\ref{lem:cfg-subs}} \\
    &\JCs{\Gamma}{\unavail{c}}{\Sto{c}{\upls{C}}} \tag{by $\Lambda4$} \\
    &\JCs{\Gamma}{\unavail{c}, \Lambda_1}{\Gamma} \tag{by $\Lambda2$} \\ 
    &\JCb{\Gamma}{\Lambda}{\Theta_1, \Psi_a', \Psi_c', \Theta_2}{\Delta}  \tag{by $\Omega$}
  \end{align*}
  Well-formedness is maintained because $\Psi_c \notin \Theta$ and there is a corresponding $\unavail{c}$ to the newly added $\Psi_c'$.
\end{case}
Other omitted cases follow a similar strategy as presented.
\end{proof}

\section{Progress Theorem}
\label{app:progress-proof}
\begin{thm}[Progress]
  If $\JCb{\Gamma}{\Lambda}{\Theta}{\Delta}$ then either:
  \begin{enumerate}
    \item $\Lambda;\Theta \rightarrow \Lambda';\Theta$ for some $\Lambda'$ or
    \item $\Lambda$ is poised and one of:
      \begin{enumerate}
        \item $\Lambda;\Theta \rightarrow \Lambda';\Theta'$ or
        \item $\Theta$ is poised or
        \item a linear process in $\Theta$ is stuck and therefore unable to acquire
      \end{enumerate}
  \end{enumerate}
\end{thm}
\begin{proof}
  \begin{align*}
    \JCb{\Gamma}{\Lambda}{\Theta}{\Delta} \tag{by assumption} \\
    \JCs{\Gamma}{\Lambda}{\Gamma} \quad \JCl{\Gamma}{\Theta}{\Delta} \tag{by inversion on $\Omega$}
  \end{align*}
  for some $\Gamma, \Lambda, \Theta,$ and $\Delta$.

  We first show that either $\Lambda \to \Lambda'$ for some $\Lambda'$ or that $\Lambda$ is poised by induction on the derivation of $\JCs{\Gamma}{\Lambda}{\Gamma}$.
  \begin{case}
    $$\infer[\Lambda 1]{\JCs{\Gamma}{\cdot}{\cdot}}{}$$
    $(\cdot)$ is poised since there is no proc term.
  \end{case}
  \begin{case}
    $$\infer[\Lambda 2]{\JCs{\Gamma}{\Lambda_1, \Lambda_2}{\Gamma_1, \Gamma_2}}{\JCs{\Gamma}{\Lambda_1}{\Gamma_1} & \JCs{\Gamma}{\Lambda_2}{\Gamma_2}}$$
    Then either $\Lambda_1 \to \Lambda_1'$ or $\Lambda_1$ is poised by
  IH, and similarly, either $\Lambda_2 \to \Lambda_2'$ or
$\Lambda_2$ is poised by IH. If both $\Lambda_1$ and
$\Lambda_2$ are poised, then the concatenation $\Lambda_1, \Lambda_2$ is
poised. Otherwise, we take the concatenation of the components that progresses.
In particular, if $\Lambda_1 \to \Lambda_1'$ and $\Lambda_2$ is poised,
$\Lambda_1, \Lambda_2 \to \Lambda_1', \Lambda_2$ (and similarly for the other
two combinations).
  \end{case}
  \begin{case}
    $$\infer[\Lambda 3]{\JCs{\Gamma}{\procs{a}{P}}{\St{a}{A}}}{\dsyncs{A'}{A}{\top} & \JTs{\Gamma}{P}{a}{A'}}$$
    We proceed by case analysis on the syntactic form of $P$ inferred from inversion on the appropriate typing rule on the derivation of $\JTs{\Gamma}{P}{a}{A'}$.
    \begin{subcase}
      $P = \fwdss{a}{b}$. This case requires a global substitution on the top level $\Lambda$. Since there is no ordering constraint on $\Lambda$, let $\Lambda = \procs{a}{\fwdss{a}{b}}, \Lambda_1$ without loss of generality. Then by \ref{dyn:fwdss}, 
      $$\Lambda \to [\Ssub{a}{b}]\Lambda_1$$
    \end{subcase}
    \begin{subcase}
      ${P = \spawn{\So{x}}{\So{X}}{\Solist{b}}; Q}$, then by \ref{dyn:spawnss}, 
      $$\procs{a}{\spawn{\So{x}}{\So{X}}{\Solist{b}}; Q} \to \procs{a}{[\Ssub{c}{x}]Q}, \procs{c}{[\Ssub{c}{x'}, \Ssublist{b}{y'}]P} \fresh{c}$$
    \end{subcase}
    \begin{subcase}
      ${P = \asgn{\Lo{a}}{\accs{a}}; Q}$, then $\procs{a}{P}$ is poised by definition.
    \end{subcase}
  \end{case}
  \begin{case}
    $$\infer[\Lambda 4]{\JCs{\Gamma}{\unavail{a}}{\Sto{a}{\hat{A}}}}{}$$
    $\unavail{a}$ is poised since there is no proc term.
  \end{case}
  \setcounter{case}{0}

  That concludes the first part of the proof. Now to show the second part, we will assume that $\Lambda$ is poised and proceed by induction on the derivation of $\JCl{\Gamma}{\Theta}{\Delta}$ to show one of:
  \begin{enumerate}[label=(\alph*)]
    \item $\Lambda;\Theta \to \Lambda';\Theta'$ for some $\Lambda'$ and $\Theta'$
    \item $\Theta$ poised
    \item some $\Psi \in \Theta$ is stuck
  \end{enumerate}
  \noindent We will showcase the style of the proof along with the interesting cases.
  \begin{case}
    $$\infer[\Theta 1]{\JCl{\Gamma}{\cdot}{\cdot}}{}$$
    $(\cdot)$ is poised since there is no proc term.
  \end{case}
  \begin{case}
    $$\infer[\Theta 2]{\JCpc{\Gamma}{a}{b}{\Lo{A}}{\Theta_1}{\Delta_1}}{\Sto{b}{\hat{B}} \in \Gamma & \So{b} \leq \Lo{A} &\JCl{\Gamma}{\Theta_1}{\Delta_1}}$$
    By the IH, $\Theta_1$ either steps, is poised, or contains a $\Psi$ that is stuck.

    If $\Theta_1$ steps, then ${\Lambda; \Theta_1 \to \Lambda'; \Theta_1'}$ for some $\Lambda'$ and $\Theta_1'$. Then\\
    ${\Lambda; \connect{a}{b}, \Theta_1 \to \Lambda'; \connect{a}{b}, \Theta_1'}$

    If $\Theta_1$ is poised, then $\connect{a}{b}, \Theta_1$ is poised because $\connect{-}{-}$ is not a proc term.

    Finally, if there is some $\Psi \in \Theta_1$ that is stuck, then course ${\Psi \in (\connect{a}{b}, \Theta_1)}$ is stuck.
  \end{case}
  \begin{case}
    $$\infer[\Theta 3]{\JCpp{\Gamma}{c}{P}{\Lo{C}}{\Theta_1}{\Delta_1}}
    {\Sto{c}{\hat{C}} \in \Gamma & \dsyncl{C'}{C}{\hat{C}} & \JTl{\Gamma}{\Delta_c}{P}{c}{C'} & \JCl{\Gamma}{\Theta_1}{\Delta_c, \Delta_1}}$$
    By the IH, $\Theta_1$ either steps, is poised, or contains a $\Psi$ that is stuck. We first cover two of the cases:

    If $\Theta_1$ steps, then ${\Lambda; \Theta_1 \to \Lambda'; \Theta_1'}$ for some $\Lambda'$ and $\Theta_1'$. Then\\
    ${\Lambda; \procl{c}{P}, \Theta_1 \to \Lambda'; \procl{c}{P}, \Theta_1'}$.

    If there is some $\Psi \in \Theta_1$ that is stuck, then of course the same ${\Psi \in (\procl{c}{P}, \Theta_1)}$ is stuck.

    For the final case, we will assume that $\Theta_1$ is poised and proceed by case analysis on the derivation of
    $\JTl{\Gamma}{\Delta_c}{P}{c}{C'}$.  Unlike in the first part, we make the step between identifying the appropriate typing rule and
    inferring the form of $P$ explicit because some of the cases are more complicated. In the typing judgment, we replace instantiated
    channel variables in the context such as $x$ by actual channel names since they must already exist in the configuration.
    \begin{subcase}
      The form of $P$ inferred from all linear right rules $(1R, {\otimes} R, {\otimes} \So{R}, {\multimap} R, {\oplus} R,
      {\&} R,\\ \Lupll R,$ and $\Ldownll R)$ directly coincide with the definition of poised. For example, $1R$ implies that $P = \close{a}$, which
      is poised, and so on. Since $\Theta_1$ is poised, $\procl{a}{P}, \Theta_1$ is poised.
    \end{subcase}
    \begin{subcase}
      $$\infer[{\otimes} L]{\JTl{\Gamma}{\Delta_c', \Lto{b}{\tensor{A}{B}}}{\asgn{\Lo{y}}{\recv{b}}; P}{c}{C'}}{\JTl{\Gamma}{\Delta_c', \Lt{b}{B}, \Lt{y}{A}}{P}{c}{C'}}$$
      \noindent where $\Delta_c = \Delta_c', \Lto{b}{\tensor{A}{B}}$. Then $\Theta_1 = \Theta_2, \procl{b}{-}, \Theta_3$ for some $\Theta_2$
      and $\Theta_3$ (we know $\Lo{b}$ is not provided by a connect term since connect terms offer channels of type $\upll{D}$). Since
      $\procl{b}{-}$ is poised and must offer a channel of type $\tensor{A}{B}$, it must be of form $\procl{b}{\sendl{b}{a}; Q}$. Thus, by
      \ref{dyn:tensor},
      \begin{gather*}
      \Lambda; \substack{\procl{c}{\asgn{\Lo{y}}{\recv{b}}; P}, \Theta_2, \\ \procl{b}{\sendl{b}{a}; Q}, \Theta_3} \to \Lambda;
      \substack{\procl{c}{[\Lsub{a}{y}]P}, \Theta_2,\\ \procl{b}{Q}, \Theta_3}
      \end{gather*}
      \noindent All the remaining linear left rules except $\Lupll L$ and $\Lupll R$ $(1L, {\multimap}L, {\multimap}\So{L}, {\oplus} L, {\&} L)$ follow a similar pattern.
    \end{subcase}
    \begin{subcase}
      $$\infer[\Lupls L]{\JTl{\Gamma, \Sto{a}{\hat{A}}}{\Delta_c}{\asgn{\Lo{x}}{\acqs{a}}; P}{c}{C'}}{\hat{A} \leq \upls{A} & \JTl{\Gamma, \Sto{a}{\hat{A}}}{\Delta, \Lt{x}{A}}{P}{c}{C'}}$$
      \noindent Since $\Lambda$ is poised, either ${\Lambda = \unavail{a}, \Lambda_1}$ or ${\Lambda = \procs{a}{\asgn{\Lo{x}}{\accs{a}}; Q},
      \Lambda_1}$ for some $\Lambda_1$. In the first case, $\procl{c}{\asgn{\Lo{a}}{\acqs{a}}; P}$ is stuck, so we are done. In the second case, by \ref{dyn:upls}, we have
      \begin{align*}
        &\procs{a}{\asgn{\Lo{x}}{\accs{a}}; Q}, \Lambda_1; \procl{c}{\asgn{\Lo{a}}{\acqs{a}}; P}, \Theta_1 \\
        \to \quad &\unavail{a}, \Lambda_1 ; \procl{c}{[\Lsub{a}{x}]P}, \procl{a}{[\Lsub{a}{x}]Q}, \Theta_1
      \end{align*}
    \end{subcase}
    \begin{subcase}
      $$\infer[\Ldownsl L]{\JTl{\Gamma}{\Delta_c', \Lto{a}{\downsl{A}}}{\asgn{\So{x}}{\rels{a}}; P}{c}{C'}}{\JTl{\Gamma, \St{x}{A}}{\Delta_c'}{P}{c}{C'}}$$
      \noindent where $\Delta_c = \Delta_c', \Lto{a}{\downsl{A}}$. Then $\Theta_1 = \Theta_2, \procl{a}{-}, \Theta_3$ for some $\Theta_2$ and $\Theta_3$. Since there is a $\procl{a}{-}$ in the linear configuration, by well-formedness condition, there must be a corresponding $\unavail{a} \in \Lambda$, so $\Lambda = \unavail{a}, \Lambda_1$. Furthermore, since $\Theta_1$ is poised, the proc term must be of form $\procl{a}{\asgn{\So{x}}{\dets{a}}; Q}$. By \ref{dyn:downsl}, we have
      \begin{align*}
                          &\unavail{a}, \Lambda_1; \procl{c}{\asgn{\So{x}}{\rels{a}}; P}, \Theta_2, \procl{a}{\asgn{\So{x}}{\dets{a}}; Q}, \Theta_3 \\
        \to \quad &\procs{a}{[\Ssub{a}{x}]Q}, \Lambda_1; \procl{c}{[\Lsub{a}{x}]P}, \Theta_2, \Theta_3
      \end{align*}
    \end{subcase}
    \begin{subcase}
      $$\infer[\Lupll L]{\JTl{\Gamma}{\Delta_c', \Lto{a}{\upll{A}}}{\asgn{\Lo{x}}{\acql{a}}; P}{c}{C'}}{\JTl{\Gamma}{\Delta_c', \Lt{x}{A}}{P}{c}{C'}}$$
      \noindent where $\Delta_c = \Delta_c', \Lto{a}{\upll{A}}$. Then $\Theta_1 = \Theta_2, \Psi_a, \Theta_3$ where $\Psi_a$ is either of
      form $\connect{a}{b}$ for some $\So{b}$ or $\procl{a}{-}$. In the latter case, we appeal to the term being poised and the proof
      proceeds like the other left rules. In the former case, there must be a term in $\Lambda$ that provides $\So{b}$. Since $\Lambda$ is
      poised, either ${\Lambda = \unavail{b}, \Lambda_1}$ or ${\Lambda = \procs{b}{\asgn{\Lo{x}}{\accs{b}}; Q}, \Lambda_1}$. In the former
      case, we can conclude that $\procl{c}{-}$ is stuck, so we are done. In the latter case, by \ref{dyn:upls2}, we have
      \begin{align*}
          &\procs{b}{\asgn{\Lo{x}}{\accs{b}}; Q}, \Lambda_1; \procl{c}{\asgn{\Lo{x}}{\acql{a}}; P}, \Theta_2, \connect{a}{b}, \Theta_3 \\
        \to \quad &\unavail{b}, \Lambda_1; \procl{c}{[\Lsub{b}{x}]P}, \procl{b}{[\Lsub{b}{x}]Q}, \Theta_2, \Theta_3
      \end{align*}
    \end{subcase}
    \begin{subcase}
      $$\infer[\Ldownll L]{\JTl{\Gamma}{\Delta_c', \Lto{a}{\downll{A}}}{\asgn{\Lo{x}}{\rell{a}}; P}{c}{C'}}{\JTl{\Gamma}{\Delta_c', \Lt{x}{A}}{P}{c}{C'}}$$
      \noindent where $\Delta_c = \Delta_c', \Lto{a}{\downll{A}}$. Then $\Theta_1 = \Theta_2, \procl{a}{-}, \Theta_3$. Since $\Theta_1$ is poised, there are two possible forms of $\procl{a}{-}$. If we have $\procl{a}{\asgn{\Lo{x}}{\detl{a}}; Q}$, then we appeal to the term being poised like the other left rules. If we instead have $\procl{a}{\asgn{\So{x}}{\dets{a}}; Q}$, then we first identify that $\Lambda = \unavail{a}, \Lambda_1$ for some $\Lambda_1$ by the well-formedness condition. By \ref{dyn:downsl2}, we have
      \begin{align*}
                &\unavail{a}, \Lambda_1; \procl{c}{\asgn{\Lo{x}}{\rell{a}}; P}, \Theta_2, \procl{a}{\asgn{\So{x}}{\dets{a}}; Q}, \Theta_3 \\
        \to \quad &\procs{a}{[\Ssub{a}{x}]Q}, \Lambda_1; \procl{c}{[\Lsub{b}{x}]P}, \connect{b}{a}, \Theta_2, \Theta_3 \fresh{b}
      \end{align*}
    \end{subcase}
  \end{case}
\end{proof}

\end{document}